\newtheorem{theorem}{Theorem}
\newtheorem{lemma}{Lemma}
\newtheorem{corollary}{Corollary}
\newtheorem{observation}{Observation}
\newtheorem{definition}{Definition}
\newtheorem{claim}{Claim}
\title{Online and Dynamic Algorithms for Geometric Set Cover and Hitting Set}
\author[1]{Arindam Khan\thanks{supported by IUSSTF virtual center on ``Polynomials as an Algorithmic Paradigm', Pratiksha Trust Young Investigator Award,  Google India Research Award, Google ExploreCS Award, and SERB Core Research Grant     (CRG/2022/001176) on ``Optimization under Intractability and Uncertainty''.}}
\author[1]{Aditya Lonkar}
\author[1]{Saladi Rahul}
\author[1]{Aditya Subramanian\thanks{supported in part by Kotak IISc AI-ML Centre (KIAC) PhD Fellowship.}}
\author[2]{Andreas Wiese}
\affil[1]{Indian Institute of Science, Bengaluru, India}
\affil[2]{Technical University of Munich, Germany}
\newcommand{\IR}{\mathbb{R}}
\newcommand{\mT}{\mathbb{T}}
\newcommand{\awr}[1]{}
\newcommand{\asr}[1]{}
\newcommand{\akr}[1]{}
\newcommand{\alr}[1]{}
\newcommand{\rsr}[1]{}
\newcommand{\ak}[1]{{#1}}
\newcommand{\al}[1]{{#1}}
\begin{document}

\maketitle

\begin{abstract}
Set cover and hitting set are fundamental problems in combinatorial
optimization which are well-studied in the offline, online, and dynamic
settings. We study the geometric versions of these problems and present
new online and dynamic algorithms for them. In the online version
of set cover (resp. hitting set), $m$ sets (resp.~$n$ points) are given
 and $n$ points (resp.~$m$ sets) arrive online, one-by-one. In the dynamic
versions, points (resp. sets) can arrive as well as depart. Our goal
is to maintain a set cover (resp. hitting set), minimizing the size
of the computed solution.

For online set cover for (axis-parallel) squares of arbitrary sizes,
we present a tight $O(\log n)$-competitive algorithm. In the same
setting for hitting set, we provide a tight $O(\log N)$-competitive
algorithm, assuming that all points have integral coordinates in $[0,N)^{2}$.
No online algorithm had been known for either of these settings, not
even for unit squares (apart from the known online algorithms for
arbitrary set systems).

For both dynamic set cover and hitting set with $d$-dimensional hyperrectangles,
we obtain $(\log m)^{O(d)}$-approximation algorithms with $(\log m)^{O(d)}$
worst-case update time. This partially answers an open question
posed by Chan et al. {[}SODA'22{]}. Previously, no dynamic algorithms
with polylogarithmic update time were known even in the setting of squares (for either of these problems). 
 Our main technical contributions are an \emph{extended quad-tree
}approach and a \emph{frequency reduction} technique that reduces
geometric set cover instances to instances of general set cover with
bounded frequency. 
\end{abstract}
\global\long\def\OPT{\mathsf{OPT}}%
\global\long\def\P{\mathcal{P}}%
\global\long\def\NP{\mathsf{NP}}%
\global\long\def\ALG{\mathsf{ALG}}%
\global\long\def\RR{\mathbb{R}}%
\global\long\def\R{\mathbb{R}}%
\global\long\def\N{\mathbb{N}}%
\global\long\def\Z{\mathbb{Z}}%
\global\long\def\X{\mathcal{X}}%
\global\long\def\C{\mathcal{C}}%
\global\long\def\conv{\mathrm{conv}}%
\global\long\def\Prob{\mathrm{Prob}}%
\global\long\def\F{\mathcal{F}}%
\global\long\def\E{\mathbb{E}}%
\global\long\def\MST{\mathsf{MST}}%
\global\long\def\DP{\mathsf{DP}}%
\global\long\def\YES{\mathrm{YES}}%
\global\long\def\NO{\mathrm{NO}}%
\global\long\def\SIZE{\mathrm{SIZE}}%
\global\long\def\FF{\mathrm{FF}}%
\global\long\def\true{\mathsf{true}}%
\global\long\def\false{\mathsf{false}}%
\global\long\def\S{\mathcal{S}}%
\global\long\def\A{\mathcal{A}}%

\section{Introduction}

Geometric set cover is a fundamental and well-studied problem in computational geometry
\cite{clarkson2005improved, chan2012weighted, varadarajan2010weighted, har2012weighted, mustafa2014settling}.
Here, we are given a universe $P$ of $n$ points in $\R^{d}$, and a family $\S$ of $m$ sets, where each set $S\in\S$ is a geometric object (we assume $S$ to be a {\em closed} set in $\R^{d}$ and $S$ {\em covers} all points in $P\cap S$),
e.g., a hyperrectangle.
Our goal is to select a collection
$\S'\subseteq\S$ of these sets that contain (i.e., cover) all elements in $P$, minimizing the cardinality of $\S'$ (see \Cref{fig:problem-def} for an illustration). The {\em frequency} $f$ of the set system $(P,\S)$ is defined as the maximum number of sets that contain an element in $P$.

In the offline setting of
some cases of geometric set cover,  
better approximation ratios are known 
than those for the general set cover, e.g., there is
a polynomial-time approximation scheme (PTAS) for (axis-parallel)
squares \cite{MustafaR09}.
However, much less is understood in the online and in the dynamic
variants of geometric set cover. In the online setting, the sets are given offline and the points arrive one-by-one, and for an uncovered point, we have to select a (covering) set in an immediate and irrevocable manner. To the best of our knowledge, even for 2-D unit squares, there is no known online algorithm with an asymptotically improved competitive ratio compared to the $O(\log n\log m)$-competitive 
algorithm for general online set cover~\cite{alon2003online,BuchbinderN09}.
In the dynamic case, the sets are again given offline and at each time step a point is inserted or deleted.
Here, we are interested in algorithms that update the
current solution quickly when the input changes.  In particular, it is desirable to have algorithms
whose update times are polylogarithmic. Unfortunately, hardly any
such algorithm is known for geometric set cover. 
Agarwal et al.~\cite{AgarwalCSXX20} initiated the study of dynamic geometric
set cover for intervals and 2-D unit squares and presented $(1+\varepsilon)$-
and $O(1)$-approximation algorithms with polylogarithmic update times,
respectively. 
To the best of our knowledge, for more general objects, e.g., rectangles,
three-dimensional cubes, or hyperrectangles in higher
dimensions, no such dynamic algorithms are known. Note that in dynamic
geometric set cover, the inserted points are represented by their
coordinates, which is more compact than for  general (dynamic) set cover
(where for each new point $p$ we are given a list of the sets that
contain $p$, and hence, already to read this input we might need $\Omega(f)$
time).

\begin{figure}[!htb]
\centering
\includegraphics[page=11, scale=0.8]{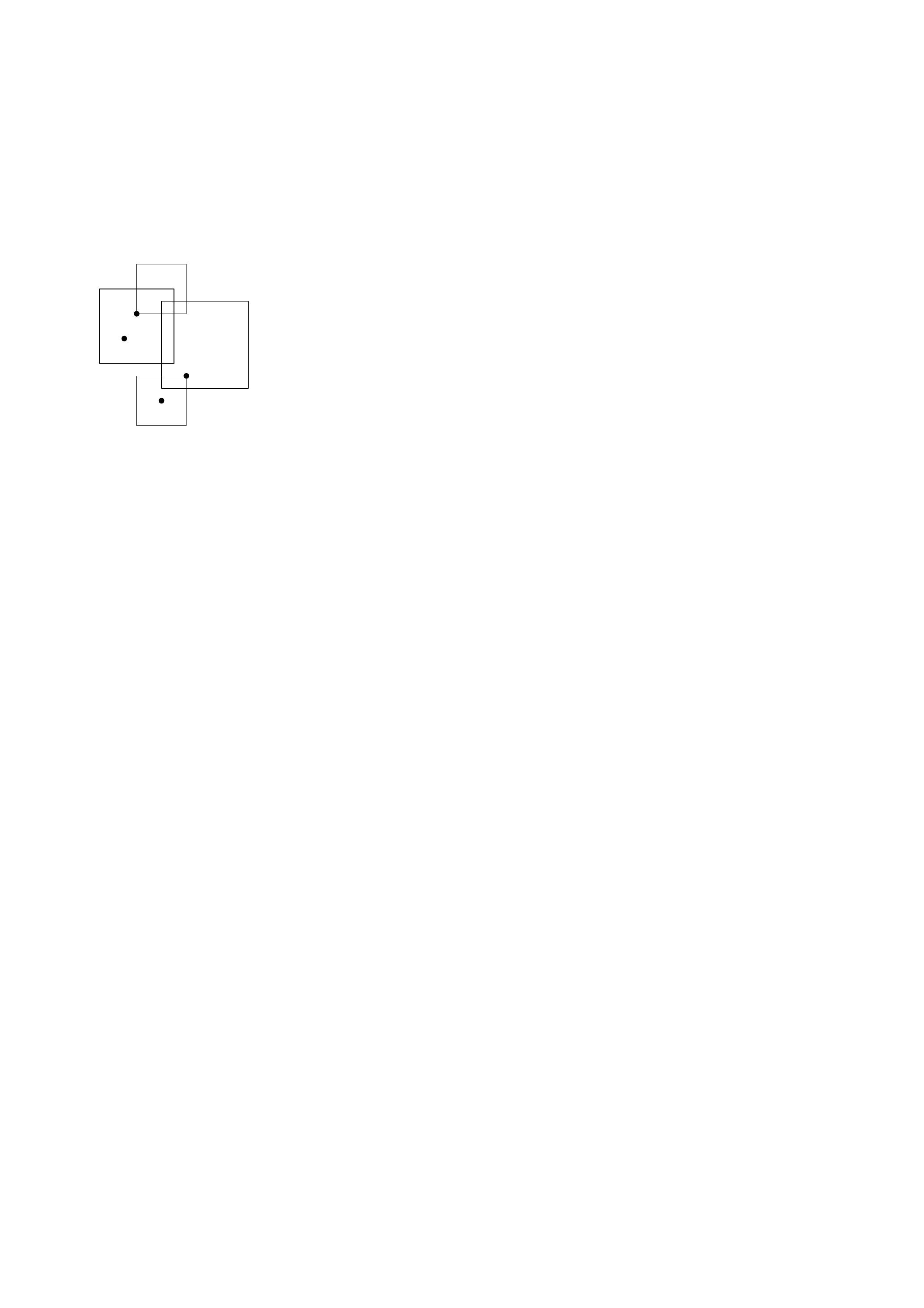}
\caption{(a) A set of squares $\S$ and a set of points $P$, (b) A set cover (in green) $\S'\subseteq \S$ covering $P$, (c) A hitting set (green points) $P'\subseteq P$ for $\S$.}
\label{fig:problem-def}
\end{figure}

Related to set cover is the hitting set problem (see \Cref{fig:problem-def} for an illustration) where, given a set of points $P$ and a collection of sets $\S$, we seek to select the minimum number of points $P' \subseteq P$ that hit each set $S\in \S$, i.e., such that $P' \cap S \ne \emptyset$. 
Again, in the offline geometric case, there are better approximation ratios known
than for the general case, e.g., a PTAS for squares~\cite{MustafaR09}, 
and an $O(\log\log\OPT)$-approximation  for rectangles~\cite{aronov2010small}.
However, in the online and the dynamic cases, only few results are
known that improve on the results for the general case. In the online
setting, there is an $O(\log n)$-competitive algorithm for $d=1$,
i.e., intervals, and an $O(\log n)$-competitive algorithm for unit
disks~\cite{EvenS11}. In the dynamic case, the only known algorithms
are for intervals and unit squares (and thus\al{,} also for quadrants),
yielding approximation ratios of $(1+\varepsilon)$ and $O(1)$, respectively~\cite{AgarwalCSXX20}.

\subsection{Our results}
In this paper, we study online algorithms for geometric set cover
and hitting set for squares \emph{of arbitrary sizes}, while previously no improved results were known even for unit squares.
Also, we present
dynamic algorithms for these problems for hyperrectangles of constant dimension $d$ (also called $d$-boxes or orthotopes) which are far more general geometric objects than those
which were previously studied, e.g., intervals \cite{AgarwalCSXX20} or (2-D)
squares \cite{chan2022dynamic}. 
\paragraph*{Online set cover for squares}
In \Cref{sec:Set-cover-squares} we study online set cover for axis-parallel
squares of arbitrary sizes and provide an online $O(\log n)$-competitive
algorithm. 
We also match (asymptotically) the lower bound of $\Omega(\log n)$,
and hence, our competitive ratio is tight. In our online model (as in \cite{alon2003online}), we
assume that the sets (squares) are given initially and the elements (points) arrive
online. 

Our online algorithm is based on a new offline algorithm that is \emph{monotone},
i.e., it has the property that if we add a new point $p$ to $P$,
the algorithm outputs a superset of the squares that it outputs given
only $P$ without $p$. The algorithm is based on a quad-tree decomposition.
It traverses the tree from the root to the leaves, and for each cell
$C$ in which points are still uncovered, it considers each edge $e$
of $C$ and selects the ``most useful'' squares containing $e$, i.e.,
the squares with the largest intersection with $C$. We assume (throughout
this paper) that all points and all vertices of the squares 
have integral coordinates in $[0,N)^{2}$ for a given $N$, and we
obtain a competitive ratio of $O(\log N)$. If we know that all the
inserted points come from an initially given set of $n$ candidate
points $P_{0}$ (as in, e.g., Alon et al.~\cite{alon2003online}), we improve our competitive ratio to $O(\log n)$. For this case,
we use the BBD-tree data structure due to Arya et al.~\cite{arya1998optimal}
which uses a more intricate decomposition into cells than a standard
quad-tree, and adapt our algorithm to it in a non-trivial manner.
Due to the monotonicity of our offline algorithm, we immediately obtain
an $O(\log n)$-competitive online algorithm.%

\paragraph*{Online hitting set for squares}
In \Cref{sec:hit-set-squares} we present an $O(\log N)$-competitive
algorithm for online hitting set for squares of arbitrary sizes, where the points are  given initially 
and the squares arrive online. This matches the best-known $O(\log N)$-competitive  algorithm for the much simpler case of intervals~\cite{EvenS11}. 
Also, there is a matching lower bound of $\Omega(\log N)$, even 
for intervals.

In a nutshell, if a new square $S$ is inserted by the adversary,
we identify $O(\log N)$ quad-tree cells for which $S$ contains one of its edges. Then, we pick the most useful points in these cells
to hit such squares: those are the points closest to the four edges
of the cell. We say that this \emph{activates} the cell. In
our analysis, we turn this around: we show that for each point $p\in\OPT$
there are only $O(\log N)$ cells that can possibly get activated
if a square $S$ is inserted that is hit by $p$. This yields a competitive
ratio of $O(\log N)$. 

\paragraph*{Dynamic set cover and hitting set for $d$-D hyperrectangles}
Then, in \Cref{sec:set-cover-hyperrectangles} and \ref{sec:hit-set-hyperrectangles}
we present our dynamic algorithms for set cover and hitting set for
hyperrectangles in $d$ dimensions. 
Note that no dynamic algorithm
with polylogarithmic update time and polylogarithmic approximation
ratio is known even for set cover for rectangles and it was asked
explicitly by Chan et al.~\cite{chan2022dynamic} whether such an
algorithm exists. Thus, we answer this question in the affirmative
for the case when only points are inserted and deleted. Note that
this is the relevant case when we seek to store our solution explicitly,
as discussed above. Even though our considered objects are very
general, our algorithms need only polylogarithmic worst-case update time. In contrast, Abboud et al.~\cite{abboud2019dynamic} showed that under Strong Exponential Time Hypothesis any general (dynamic) set cover algorithm with an amortized update time of $O(f^{1-\varepsilon})$ must have an approximation ratio of $\Omega(n^{\alpha})$ for some constant $\alpha>0$, and $f$ can be as large as $\Theta(m)$.

We first discuss our algorithm for set cover. We start with
reducing the case of hyperrectangles in $d$ dimensions to $2d$-dimensional hypercubes
with integral corners in $[0,4m]^{2d}$. Then, a natural approach would be to adapt our
algorithm for squares from above to $2d$-dimensional hypercubes.
A canonical generalization would be to build a quad-tree, traverse
it from the root to the leaves, and to select for each cell $C$ and
for each facet $F$ of $C$ the most useful hypercube $S$ containing
$F$, i.e., the hypercube $S$ with maximal intersection with $C$.
Unfortunately, this is no longer sufficient, not even in three dimensions:
it might be that there is a cell $C$ for which it is necessary that
we select cubes that contain only an edge of $C$ but not a facet
of $C$ (see \Cref{fig:CounterEx3D}). Here, we introduce a crucial
new idea: for each cell $C$ of the (standard) quad-tree and 
for each dimension $i\in [2d]$, consider the hypercubes which are ``edge-covering'' 
$C$ along dimension $i$. Based on these hypercubes a 
$(2d{-}1)$-dimensional recursive secondary structure is built on all the 
dimensions except the $i$-th dimension (see \Cref{fig:extended-quadtree}).

\begin{figure}[ht]
    \center
    \includegraphics[scale=0.3]{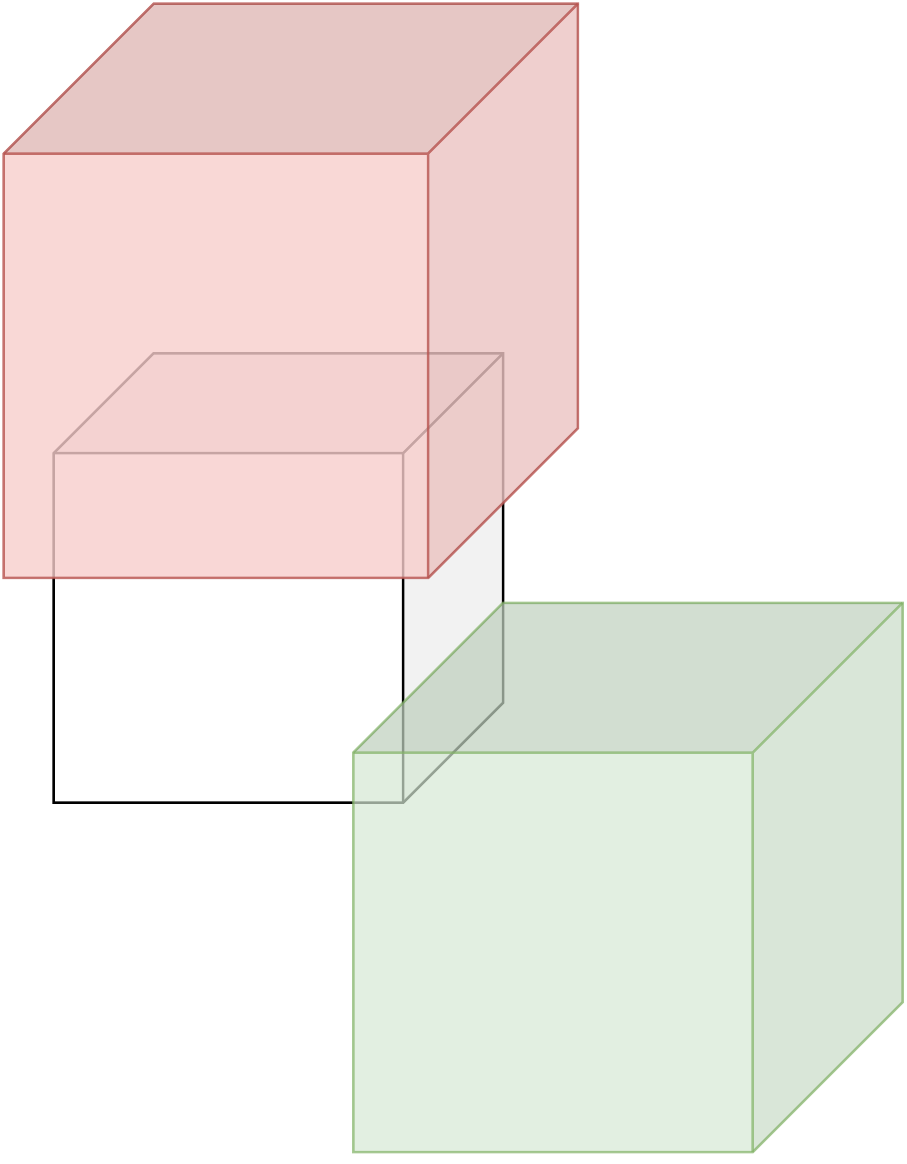}
    \caption{The red cube is the only cube that covers a facet of the (uncolored) cell. The green cube (from $\OPT$) only covers an edge of the cell. Note that there is no corner of a cube from $\OPT$ in the cell.
    Picking the red cube does not cover the
    the intersection of the green cube  with the cell. }
    \label{fig:CounterEx3D}
\end{figure}

We call the resulting tree the \emph{extended quad-tree. }Even though
it is much larger than the standard quad-tree, we show that each
point is contained in only $(\log m)^{O(d)}$ cells. Furthermore,
we use it for our second crucial idea to \emph{reduce the frequency}
of the set cover instance: we build an auxiliary instance of general
set cover with bounded frequency. It has the same points as the given
instance of geometric set cover, but different sets: for each node 
corresponding  to a one-dimensional cell $C$ of the extended quadtree
, 
we consider each of its endpoints $p$
and introduce a set that corresponds to the ``most useful'' hypercube
covering $p$,
i.e., the hypercube covering 
$p$
with maximal intersection
with $C$. Since each point is contained in only $(\log m)^{O(d)}$
cells, the resulting frequency is bounded by $(\log m)^{O(d)}$. Also,
we show that our auxiliary set
cover instance admits a solution with at most $\OPT\cdot(\log m)^{O(d)}$
sets. Then we use a dynamic algorithm from \cite{bhattacharya2021dynamic} for \emph{general }set
cover to maintain an approximate solution for our auxiliary instance,
which yields a dynamic $(\log m)^{O(d)}$-approximation algorithm.

We further adapt our dynamic set cover algorithm mentioned above to hitting set for $d$-dimensional hyperrectangles with an approximation ratio of $(\log n)^{O(d)}$.
Finally, we extend our algorithms for set cover and hitting set for
$d$-dimensional hyperrectangles even to the weighted case, at the
expense of only an extra factor of $(\log W)^{O(1)}$ in the update time and approximation ratio,
assuming that all sets/points in the input have weights in $[1,W]$.
See the following tables for a summary of our results. 

\begin{table}[!ht]
    \begin{centering}
        \begin{tabular}{|c|c|c|c|}
            \hline
            Problem & Objects & Competitive ratio & Lower bound \\
            \hline
            \hline
            \multirow{2}{*}{Set cover}  & intervals & 2 [Thm \ref{thm:onlineintervalupper}] & 2 [Thm \ref{thm:onlineintervallower}] \\ \cline{2-4}
                                        & 2-D squares & $O(\log n)$~[Thm \ref{squaressetcover_1}] & $\Omega(\log n)$~[Thm \ref{lb:unitsquares}] \\ \hline
            \multirow{2}{*}{Hitting set} & intervals & $O(\log N)$ \cite{EvenS11} & $\Omega(\log N)$ \cite{EvenS11} \\ \cline{2-4}
                                        & 2-D squares & $O(\log N)$~[Thm \ref{lem:squareshittingset_2}] & $\Omega(\log N)$\cite{EvenS11}  \\ \hline
        \end{tabular}
    \par\end{centering}
    \caption{Online algorithms for geometric set cover and hitting set.}
\end{table}

\begin{table}[!ht]
    \begin{centering}
        \begin{tabular}{|c|c|c|c|}
            \hline
            Problem & Objects & Approximation ratio & Update time \\
            \hline
            \hline
            \multirow{2}{*}{Set cover} 
                      & $2$-D unit squares & $O(1)$~\cite{AgarwalCSXX20} & $(\log n)^{O(1)}$ \\ \cline{2-4}
                      & $d$-D hyperrectangles & $O(\log^{4d-1}m)\log W$~[Thm \ref{thm:WtDynSetCov}] & $O(\log^{2d}m)\log^{3}(Wn)$ \\ \hline
            \multirow{2}{*}{Hitting set} 
                        & unit squares & $O(1)$~\cite{AgarwalCSXX20} & $(\log n)^{O(1)}$ \\\cline{2-4}  
                        & $d$-D hyperrectangles & $\ensuremath{O(\log^{4d-1}n)\log W}$~[Thm \ref{thm:WtDynHitSet}] & $O(\log^{2d-1}n)\log^{3}(Wm)$\\\hline
        \end{tabular}
    \par\end{centering}
    \caption{Dynamic algorithms for geometric set cover and hitting set. Update times in \cite{AgarwalCSXX20} are amortized and for the unweighted case. Our results are for worst-case update times. 
    }
\end{table}

\subsection{Other related work}
The general set cover is well-studied in both online and dynamic settings. 
Several variants and generalizations of online set cover have been considered, e.g., online submodular cover \cite{gupta2020online}, online set cover under random-order arrival \cite{gupta2022random}, online set cover with recourse \cite{gupta2017online}, etc. 

For dynamic setting, Gupta et al.~\cite{gupta2017online} initiated the study and provided $O(\log n)$-approximation algorithm with $O(f \log n)$-amortized update time, even in the weighted setting. Similar to our model, in their model sets are given offline and only elements can appear or depart. After this, there has been a series of works~\cite{abboud2019dynamic, bhattacharya2018deterministic, bhattacharya2019new, bhattacharya2018dynamic, bhattacharya2021dynamic, gupta2017online, gupta2020fully, assadi2021fully}.

Bhattacharya et al.~\cite{bhattacharya2021dynamic} have given deterministic $(1+\varepsilon)f$-approximation in \\ $O\left((f^2/\varepsilon^3) + (f/\varepsilon^2) \log (W)\right)$-amortized update time and $O(f\log^2(Wn)/\varepsilon^3)$-worst-case update time, where $W$ denotes the ratio of the weights of the highest and lowest weight sets. 
Assadi and Solomon \cite{assadi2021fully} have given a randomized $f$-approximation algorithm with $O(f^2)$-amortized update time. 

Agarwal et al.~\cite{AgarwalCSXX20} studied another dynamic setting for geometric set cover, where both points and sets can arrive or depart,  and presented $(1+\varepsilon)$-
and $O(1)$-approximation with sublinear update time for intervals and unit squares,
respectively.
Chan and He \cite{chan2021more} extended it to set cover with arbitrary
squares. Recently, Chan et al.~\cite{chan2022dynamic}
gave $(1+\varepsilon)$-approximation for the special case of intervals
in $O(\log^{3}n/\varepsilon^{3})$-amortized update time. They also gave $O(1)$-approximation
for dynamic set cover for unit squares, arbitrary squares, and weighted
intervals in amortized update time of $2^{O(\sqrt{\log n})},n^{1/2+\varepsilon}$,
and $2^{O(\sqrt{\log n\log\log n})}$, respectively.

Dynamic algorithms are also well-studied for other geometric problems such as maximum independent set of intervals and hyperrectangles \cite{Henzinger0W20, bhore2020dynamic, cardinal2021worst}, and geometric measure  \cite{dallant2021conditional}. 

\section{\label{sec:Set-cover-squares} Set cover for squares}

In this section we present our online and dynamic algorithms for set cover for squares. 
We are given a set of $m$ squares $\S$ 
such that each square $S\in\S$ has integral corners in $[0,N)^{2}$.  
 W.l.o.g.~assume that $N$ is a power of 2.
We first describe an offline $O(\log N)$-approximate algorithm. Then we 
construct an online algorithm and a dynamic algorithm based on it, such that both of them have
approximation ratios of $O(\log N)$ as well. For our offline algorithm, we assume that in addition
to $\S$ and $N$, we are given a set of points $P$ 
that we need to
cover, such that $P\subseteq[0,N)^{2}$ and each point $p\in P$ has integral coordinates.

\paragraph*{Quad-tree}
We start with the definition of a quad-tree $T=(V,E)$, similarly as in, e.g., \cite{arora1998polynomial, berg1997computational}. 
In $T$
each node $v\in V$ corresponds to a square cell $C_{v}\subseteq[0,N)^{2}$ whose vertices have integral coordinates. The
root $r\in V$ of $T$ corresponds to the cell $C_{r}:=[0,N)^{2}$. Recursively, consider a node $v\in
V$, corresponding to a cell $C_{v}$ and assume that
$C_{v}=[x_{1}^{(1)},x_{2}^{(1)})\times[x_{1}^{(2)},x_{2}^{(2)})$.  If $C_{v}$ is a unit square,
i.e., $|x_{2}^{(1)}-x_{1}^{(1)}|=|x_{2}^{(2)}-x_{1}^{(2)}|=1$, then we define that $v$ is a leaf.
Otherwise, we define that $v$ has four children $v_{1},v_{2},v_{3},v_{4}$ that correspond to the
four cells that we obtain if we {\em partition} $C_{v}$ into four equal sized smaller cells, i.e., define
$x_{\text{\ensuremath{\mathrm{mid}}}}^{(1)}:=(x_{2}^{(1)}-x_{1}^{(1)})/2$ and
$x_{\text{\ensuremath{\mathrm{mid}}}}^{(2)}:=(x_{2}^{(2)}-x_{1}^{(2)})/2$ and
$C_{v_{1}}=[x_{1}^{(1)},x_{\text{\ensuremath{\mathrm{mid}}}}^{(1)})\times[x_{1}^{(2)},x_{\text{\ensuremath{\mathrm{mid}}}}^{(2)})$,
$C_{v_{2}}=[x_{1}^{(1)},x_{\text{\ensuremath{\mathrm{mid}}}}^{(1)})\times[x_{\text{\ensuremath{\mathrm{mid}}}}^{(2)},x_{2}^{(2)})$,
$C_{v_{3}}=[x_{\text{\ensuremath{\mathrm{mid}}}}^{(1)},x_{2}^{(1)})\times[x_{1}^{(2)},x_{\text{\ensuremath{\mathrm{mid}}}}^{(2)})$,
and
$C_{v_{4}}=[x_{\text{\ensuremath{\mathrm{mid}}}}^{(1)},x_{2}^{(1)})\times[x_{\text{\ensuremath{\mathrm{mid}}}}^{(2)},x_{2}^{(2)})$.
Note that the
depth of this tree is $\log N$, where depth of a node in the tree is its distance from
the root of $T$, and depth of $T$ is the maximum depth of any node in $T$.
By the construction, each leaf node contains at most one point and it will lie on the bottom-left corner of the corresponding cell.

\paragraph*{Offline algorithm} 
In the offline algorithm $\mathcal {A}_\text{off}$, we
traverse $T$ in a breadth-first-order, i.e., we order the nodes in $V$ by their distances to the
root $r$ and consider them in this order (breaking ties arbitrarily but in a fixed manner). Suppose that in one
iteration we consider a node $v\in V$, corresponding to a cell $C_{v}$. We check whether the
squares selected in the ancestors of $v$ cover all points in $P\cap C_{v}$. If this is the case,
we do not select any squares from $\S$ in this iteration (corresponding to $v$). Observe that hence
we also do not select any squares in the iterations corresponding to the descendants of $v$ in
$T$ (so we might as well skip the whole subtree rooted at $v$).

Suppose now that the squares selected in the ancestors of $v$ do \emph{not }cover all points in
$P\cap C_{v}$. We call such a node to be {\em explored} by our algorithm.
Let $e$ be an edge of $C_{v}$. 
We say that a square containing $e$ is \emph{edge-covering for $e$}.
We select a square from $\S$ that is edge-covering for $e$ 
 and that has the largest intersection with
$C_{v}$ among all such squares in $\S$ (we call such a square {\em maximum area-covering}
for $C_v$ for edge $e$). We break ties in an arbitrary but fixed way, e.g., by
selecting the square with smallest index according to an arbitrary ordering of $\S$. If there is no
square in $\S$ that is edge-covering for $e$ then we do not select a square corresponding to $e$. We do this
 for each of the four edges of $C_{v}$. See \Cref{fig:offline-set-cov}. 
 If we reach a leaf node, and if there is an uncovered point (note that it must be on the bottom-left corner of the cell), then we select any arbitrary square that covers the point (the existence of such a square is guaranteed as some square in $\OPT$ covers it). See \Cref{fig:offline-edge}.


\begin{figure}
     \centering
     \includegraphics[page=10,width=\textwidth, scale=0.7]{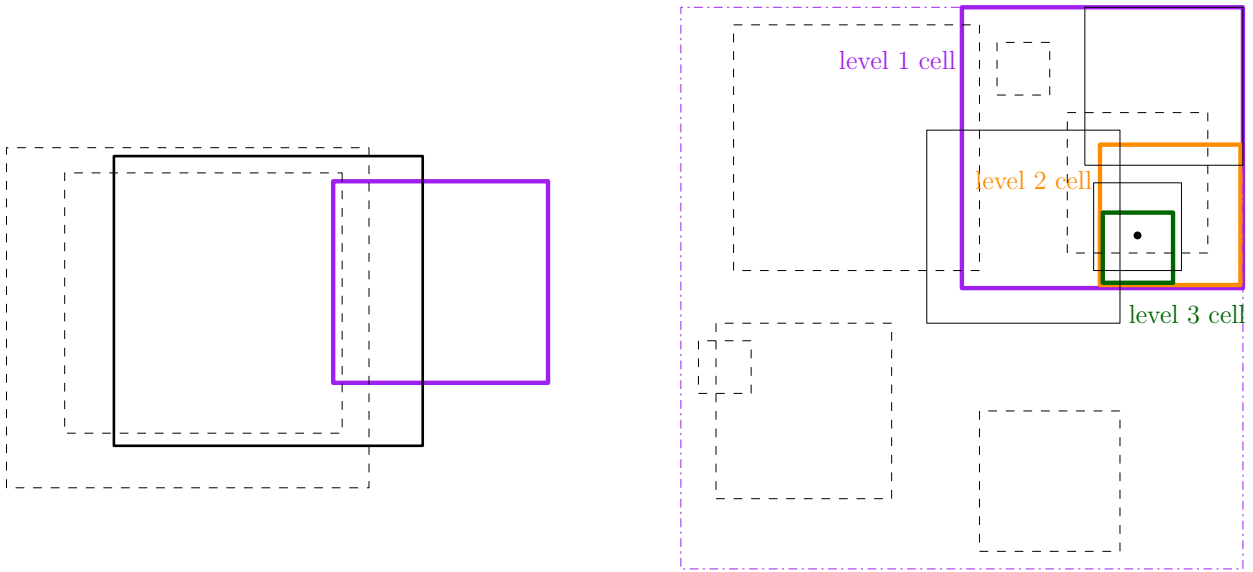}
    
    \caption{
    Left figure shows a quad-tree cell in purple. The maximum area-covering square (solid black) is picked, while the other edge-covering squares (dashed) are not.
Right figure shows the quad-tree cells (level-wise color-coded)
containing an uncovered point. In increasing order of depth of these cells, at most $4$ maximum-area covering squares (solid black) are picked together per cell, till the point gets covered.
    }
    \label{fig:offline-set-cov}
\end{figure}

\begin{figure}
\begin{center}
\includegraphics[page=14,scale=0.8]{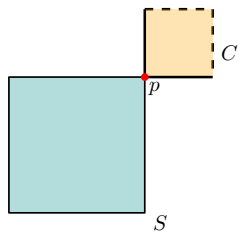}
\end{center}
\caption{Point $p$ lies in a leaf cell $C$ (which may not even have any edge-covering squares). In this case, we pick an arbitrary square $S$ to cover the point (since one such square always exists).}
\label{fig:offline-edge}
\end{figure}

\begin{restatable}{lemma}{offfeasible}
    \label{lem:offfeasible}
    $\mathcal{A}_\text{off}$ outputs a feasible set cover for the points in $P$.
\end{restatable}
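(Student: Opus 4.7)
The plan is to prove feasibility by a direct case analysis on how each point $p \in P$ gets covered during the breadth-first traversal of $T$. The quad-tree's leaf cells form a partition of $[0,N)^2$, so every point $p\in P$ lies in exactly one leaf, call it $\ell(p)$, and the integrality of coordinates ensures that $p$ coincides with the bottom-left corner of $\ell(p)$.

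Fix $p \in P$ and follow the root-to-leaf path $r = v_0, v_1, \ldots, v_k = \ell(p)$. I would argue by induction along this path that whenever the algorithm processes a node $v_i$ on it, either $p$ has already been covered by some square chosen at a strictly earlier node, or $v_i$ is explored (since $p \in P \cap C_{v_i}$ is an uncovered point). In the former case we are done. In the latter case, if $i < k$, the algorithm proceeds to $v_{i+1}$ on the next BFS round and the induction continues. If $i = k$, then we have reached $\ell(p)$ with $p$ still uncovered; by the leaf rule, the algorithm selects an arbitrary square in $\S$ that contains $p$, which covers $p$.

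The only step that needs care is the leaf rule: it requires that at least one square in $\S$ actually contains $p$. This is where I would invoke the standing assumption that the set-cover instance is feasible, i.e., that $\mathsf{OPT}$ exists; then any square in $\mathsf{OPT}$ that covers $p$ witnesses the existence of a square in $\S$ containing $p$, so the leaf rule is well-defined. Since this holds for every $p \in P$, the output is a feasible set cover.

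I do not expect a genuine obstacle here. The subtle point to be explicit about is that a node $v_i$ being ``explored'' only means that \emph{some} point of $P\cap C_{v_i}$ is still uncovered, not necessarily $p$; but that is precisely why the argument must be carried all the way to the leaf $\ell(p)$, where the algorithm's leaf rule guarantees that $p$ itself is covered. The edge-covering squares picked at internal nodes may incidentally cover $p$ earlier, which is the first case and only strengthens the conclusion.
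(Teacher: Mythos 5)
Your proof is correct, but it takes a noticeably more direct route than the paper's. The paper argues by contradiction and does a geometric case split on the optimal square $S\in\OPT$ covering $p$: if $p$ is not at a corner of $S$, then $S$ is edge-covering for some cell $C_v$ containing $p$, and the maximal edge-covering square chosen there must cover $p$; only in the corner case does it fall back on the leaf rule. You bypass that geometry entirely by observing that the leaf rule is already a catch-all: you walk the unique root-to-leaf path of $p$, note that at each node either $p$ is already covered or the node is explored, and conclude that if $p$ survives uncovered to the leaf, the leaf rule finishes the job (well-defined because $\OPT$ is nonempty). Both arguments are valid; yours is shorter and more elementary, whereas the paper's version also surfaces the ``maximal edge-covering square subsumes every edge-covering $\OPT$ square inside the cell'' fact, which is the structural insight reused in the approximation-ratio proof (Lemma~\ref{lem:sqoffline}). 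One small point of precision: your ``covered by some square chosen at a strictly earlier node'' should really say ``covered by a square chosen at a strict ancestor,'' since that is the exact condition the algorithm tests to decide whether to explore a cell; the two coincide for your purposes because a node is only skipped when all points in its cell are covered by ancestors, but it is worth phrasing it as the algorithm does.
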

    \begin{proof} Assume for contradiction that no square in ALG covered some point $p \in P$.  Since
    $\OPT$ is a feasible set cover, there is a square $S\in\OPT$ which covered $p$. There are two cases to consider here: either $p$ is exactly at one of the corners of $S$, or not. In the latter case, note that $S$
    is edge-covering for at least one  quad-tree cell containing $p$. Let $C_v$ be such a cell
    (which contains $p$ and its edge $e$ is contained in $S$) with minimum depth. Now the algorithm
    will traverse $T$ till we reach the node $v$ (corresponding to cell $C_v$)  containing $p$. As the
    squares selected by the algorithm for the ancestors of $v$ do not cover $p$, we will select the
    maximum area-covering square $S'$ (for $e$) in ALG. As $(S \cap C_v) \subseteq (S'\cap C_v)$, $S'$
    will cover $p$. This is a contradiction. Now in the first case, i.e., where $p$ is at one of the corners of $S\in\OPT$, either there is a leaf $v\in T$ which contains it and $S$ is edge-covering for $C_v$, or for such a leaf $v$, $S$ is corner-covering. In both the cases, $\A_{\text{off}}$ will pick a square for $v$ or one of its ancestors such that this square covers $p$.
\end{proof}


\paragraph*{Approximation ratio}
Let $\ALG\subseteq\S$ denote the selected set of squares and let $\OPT$ denote the optimal solution.
To prove the $O(\log N)$-approximation guarantee, the main idea is the following:
consider a node $v \in V$ and suppose that we selected at least one square in the iteration corresponding
to $v$. If $C_{v}$ contains a corner of a square $S\in\OPT$, then we charge the (at most four)
squares selected for $v$ to $S$. Otherwise, we argue that the squares selected for $v$ cover at
least as much of $C_{v}$ as the squares in $\OPT$, and that they cover all the remaining uncovered
points in $P\cap C_{v}$.  In particular, we do not select any further squares in the descendants of
$v$. The squares selected for $v$ are charged to the parent of $v$ (which contains a corner of a square $S \in \OPT$). Since each corner of each square
$S\in\OPT$ is contained in $O(\log N)$ cells, we show that each square $S\in\OPT$ receives
a total charge of $O(\log N)$.
Thus, we obtain the following lemma.
\begin{restatable}{lemma}{sqoffline}
    \label{lem:sqoffline}
    We have that $|\ALG|= O(\log N)\cdot|\OPT|$.
\end{restatable}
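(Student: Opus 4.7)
The plan is to establish the approximation ratio via a charging scheme that assigns each square in $\ALG$ to a corner of some $S \in \OPT$ in such a way that every such corner receives only $O(\log N)$ charge; since $\OPT$ contributes at most $4|\OPT|$ corners, this yields the desired bound.

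First, I would partition the explored nodes of $T$ into two types: a node $v$ is in \emph{Case 1} if $C_v$ contains a corner of some $S \in \OPT$, and in \emph{Case 2} otherwise. The geometric heart of the argument is the claim that in Case 2 every $S \in \OPT$ with $S \cap C_v \neq \emptyset$ must contain at least one full edge of $C_v$. This follows from a short case analysis on the $x$- and $y$-projections of $S = [a,b]\times[a',b']$ and $C_v = [c,d]\times[c',d']$, using crucially that both are squares (equal side length): if both $a,b \in [c,d]$ then the no-corner-inside condition would force $a',b' \notin [c',d']$, hence $[a',b']\supsetneq[c',d']$ and so $|a-b|>|c-d|$, contradicting $a,b\in[c,d]$; the remaining configurations are checked similarly, and in each one $S$ is forced to cover one of the four edges of $C_v$.

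Using this claim, the (at most four) squares selected at a Case~2 node $v$ jointly cover all of $P \cap C_v$: any $p \in P \cap C_v$ is covered by some $S \in \OPT$ which contains an edge $e$ of $C_v$, and the maximum area-covering square $S'$ picked for $e$ satisfies $S\cap C_v \subseteq S' \cap C_v$, so $p$ is covered. Consequently no descendant of a Case~2 node is ever explored. This yields the structural fact I need for the charging: if an explored $v$ has a parent $u$ in $T$, then $u$ is also explored and $u$ must be Case~1, for otherwise $u$ would already have covered $P \cap C_v \subseteq P \cap C_u$.

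The charging is then natural: the squares selected at a Case~1 node $v$ are charged to an arbitrary $\OPT$-corner lying in $C_v$, and those selected at a Case~2 node $v$ are charged to an arbitrary $\OPT$-corner in $C_u$, where $u = \mathrm{parent}(v)$ (which exists and is Case~1 by the above); explored leaves contribute only one square and are handled the same way. Now fix a corner $c$ of some $S \in \OPT$: the cells of $T$ containing $c$ are exactly the $O(\log N)$ ancestors of the leaf containing $c$, and any node charging to $c$ is either such an ancestor (as Case~1) or a child of such an ancestor (as Case~2). Each ancestor contributes at most $4$ squares from itself and at most $4\cdot 4$ from its (Case~2) children, so the total charge on $c$ is $O(\log N)$. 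Summing over the at most $4|\OPT|$ corners gives $|\ALG| = O(\log N)\cdot|\OPT|$. The main obstacle in executing this plan is the geometric claim for Case~2: ruling out configurations where $S$ ``pierces'' $C_v$ without leaving a corner inside is where the equal-side-length property of squares is essential, and it is precisely the step that does not extend verbatim to rectangles.
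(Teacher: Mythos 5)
Your proof is correct and follows essentially the same approach as the paper's: a corner-charging scheme with the same dichotomy between cells containing an $\OPT$-corner and cells without one, the same observation that in the latter case the at most four maximum area-covering squares cover all of $P\cap C_v$ (so no descendant is explored and the parent must be of the former type), and the same $O(\log N)$ counting over cells containing a fixed corner. The only stylistic difference is that you justify "the parent must contain an $\OPT$-corner" by arguing the parent would otherwise already have covered $P\cap C_v$, whereas the paper phrases it geometrically via the parent intersecting $S$ without being edge-covered by it; both lead to the same charge of $O(1)$ per corner per level.
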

\begin{proof} We will charge each square picked in $\ALG$ to some square in $\OPT$. A cell $C_v$
    with its corresponding node $v$, can either contain (at least) a corner of some square in $\OPT$,
    or be edge-covered by (at least) a square in $\OPT$, or not intersect any square from $\OPT$ at
    all.
    \begin{itemize}
        \item $C_v$ contains a corner of $S\in\OPT$: In this case,  $\mathcal{A}_\text{off}$ picks at
            most four squares for the cell, and we charge these squares to a corner of $S$ in the cell. If
            there are multiple squares from $\OPT$ with a corner in the cell, pick one arbitrarily. This claim is true even when $C_v$ corresponds to a leaf node.
        \item Some square $S\in\OPT$ is edge-covering for $C_v$ (and $C_v$ has no corner of a square in $\OPT$): If
            $\mathcal{A}_\text{off}$ picks no edge-covering squares for such a cell, then we are fine.
            Otherwise, if $\mathcal{A}_\text{off}$ picks squares for such a cell, we claim that it covers
            all points in the cell
            . This is due to the fact that any point in this cell is covered by a
            square in $\OPT$ that is edge-covering for $C_v$, due to the absence of corners of squares of
            $\OPT$. So when $\mathcal{A}_\text{off}$ picks edge-covering squares with the largest
            intersection with the cell, the intersection of any square $S'\in\OPT$ with $C_v$ will also
            get covered). So, no child node of $v$ will be further explored by the algorithm. 
            This also means that the parent $v'$ of $v$ in the tree will contain a corner of $S$
            (because $C_{v'}$ intersects $S$, but cannot be edge-covered by it). We charge any squares
            picked by $\mathcal{A}_\text{off}$ at $C_v$ (at most four times) to this particular corner in
            the parent node. If there are multiple such corners, pick one arbitrarily.

        \item No squares from $\OPT$ intersect $C_v$: In this case, $C_v$ does not contain any points
            in $P$. Thus, $\mathcal{A}_\text{off}$ will not pick any squares for such a cell.
    \end{itemize}

    \begin{figure}[ht]
        \centering
        \includegraphics[page=4,scale=0.7]{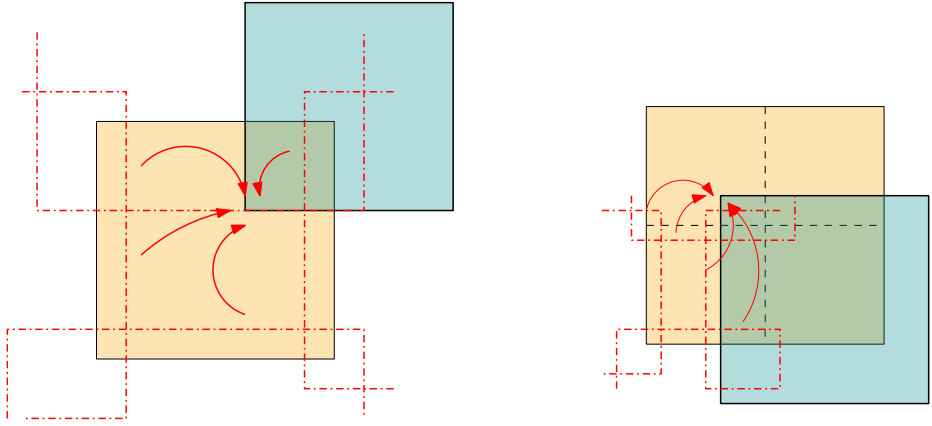}
        \caption{Charging picked (red) edge-covering squares to the corner of a (cyan) square in $\OPT$.
        In the image on the left, the (yellow) cell contains a corner of the square from $\OPT$, and in the image on the right, the parent of the cell contains such a corner.}
        \label{fig:2Dsetcov-charging}
    \end{figure}

    Now we note that a corner of any square in $\OPT$, will lie in at most $\log N$ cells of the
    quad-tree. For each of these cells, a corner is charged at most four times for the squares picked
    at the cell, and at most four times for each of its four child nodes. This amounts to a total
    charge of at most 20 per corner per cell. So each square in $\OPT$ is charged at most $20 \text{
    (per corner, per cell)}\times4\text{ (corners)}\times\log N\text{ (cells)}=80\log N$ times.
    Therefore, there are at most $80\log N\cdot|\OPT|$ squares in $\ALG$.
\end{proof}

\subsection{Online set cover for squares}

\subsubsection{$O(\log N)$-approximate online algorithm}

We want to turn our offline algorithm $\mathcal {A}_\text{off}$ into an online algorithm $\mathcal{A}_\text{on}$, assuming that in each {\em round} a new point is 
introduced by the adversary.
 The key insight for
this is that the algorithm above is \emph{monotone}, i.e., if we add a point to $P$, then it
outputs a superset of the squares from $\S$ that it had output before (when running it on $P$
only).  For a given set of points $P$, let $\ALG(P)\subseteq\S$ denote the set of squares that our
(offline) algorithm outputs.
\begin{restatable}{lemma}{monotone}
    \label{lem:monotone}
    Consider a set of points $P$ and a point $p$. Then $\ALG(P)\subseteq\ALG(P\cup\{p\})$.
\end{restatable}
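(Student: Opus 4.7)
The plan is to induct on the depth of the quad-tree $T$, using a locality property of the selection rule of $\mathcal{A}_\text{off}$ together with a closure property of the set of explored nodes. Write $E(P) \subseteq V$ for the set of nodes where $\mathcal{A}_\text{off}$ picks at least one square on input $P$, and $S_v^{(P)} \subseteq \S$ for the set of squares picked at $v$. Proving the lemma reduces to showing $S_v^{(P)} \subseteq S_v^{(P \cup \{p\})}$ for every $v \in V$, since $\ALG(P) = \bigcup_v S_v^{(P)}$.

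The first step is to observe that $E(P)$ is closed under taking ancestors: if $v \in E(P)$ is witnessed by some $q \in P \cap C_v$ uncovered by the squares picked at the strict ancestors of $v$, then for any strict ancestor $u$ of $v$, the squares picked at $u$'s strict ancestors form a subset of those picked at $v$'s strict ancestors, and $q \in P \cap C_u$, so $q$ is still uncovered at $u$ and hence $u \in E(P)$. The second step is the locality of selection: at any non-leaf $v$, the set $S_v$ of (up to four) maximum area-covering squares for the edges of $C_v$ is determined by $C_v$ and $\S$ alone (using the fixed tie-breaking), so $S_v^{(P)} \in \{\emptyset, S_v\}$ independently of $P$. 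At a leaf $v$, the only integral point that $C_v$ can contain is its bottom-left corner, so adding a single point $p$ cannot change which integer point (if any) occupies the leaf, except possibly creating that corner point if $p$ equals it.

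With these ingredients, the induction on depth runs as follows. Fix $v \in E(P)$. By ancestor-closure, every strict ancestor $u$ of $v$ lies in $E(P)$, so $S_u^{(P)} = S_u$. The inductive hypothesis applied to $u$ (which has strictly smaller depth) gives $S_u^{(P \cup \{p\})} \supseteq S_u$; combined with $S_u^{(P \cup \{p\})} \in \{\emptyset, S_u\}$, this forces $S_u^{(P \cup \{p\})} = S_u$. Hence the union of squares picked at strict ancestors of $v$ is \emph{identical} on $P$ and on $P \cup \{p\}$, so the witness $q$ remains uncovered at $v$, giving $v \in E(P \cup \{p\})$ and $S_v^{(P \cup \{p\})} \supseteq S_v^{(P)}$ (equality for non-leaves; fixed tie-breaking gives equality for leaves, where the leaf's unique candidate point is unchanged). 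Taking the union over $v$ yields the lemma.

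The hard part to watch for is the scenario in which adding $p$ makes a previously unexplored strict ancestor of some $v \in E(P)$ become explored on $P \cup \{p\}$, and its fresh squares happen to cover the witness point $q$, thereby erasing the need to explore $v$ and losing the squares $S_v^{(P)}$ from the output. Ancestor-closure of $E(P)$ is precisely what rules this out: any strict ancestor of an explored node was already explored on $P$, so no new squares appear above $v$ when $p$ is added, and no point previously uncovered can become covered ``from above.'' This is the single nontrivial step in the proof; everything else is bookkeeping on top of the locality of the selection rule and a straightforward induction on depth.
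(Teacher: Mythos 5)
Your proof is correct, and it rests on the same two underlying facts as the paper's proof: the selection at a non-leaf node is a function of the cell and $\S$ alone (not of $P$), and the explored set is closed under taking ancestors so that adding a point cannot create new squares ``above'' an already-explored node. However, you organize the argument differently and, in my view, more tightly. The paper argues by contradiction in a few lines: assume some $S\in\ALG(P)\setminus\ALG(P\cup\{p\})$, picked at node $v$ to help cover some $p'\in P\cap C_v$; observe that $p'$ can only be covered while traversing cells containing it; conclude that if $v$ is not explored on $P\cup\{p\}$ then $p'$ is never covered, a contradiction. The gap in that sketch is exactly the scenario you call out: on $P\cup\{p\}$ the ancestors of $v$ might in principle pick \emph{different} squares that newly cover $p'$, so ``$v$ not explored'' would be consistent rather than contradictory. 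The paper never explicitly rules this out. Your version closes the gap by proving ancestor-closure of $E(P)$ and combining it with locality and fixed tie-breaking in a direct induction on depth, which forces $S_u^{(P\cup\{p\})}=S_u^{(P)}$ for every strict ancestor $u$ of $v$ and hence preserves the witness. You also correctly note the one place where an additional (implicit) assumption is needed: at leaves the paper says to pick ``any arbitrary square,'' and monotonicity requires that this tie is broken in a fixed, deterministic manner, just as the paper fixes tie-breaking for edge-covering squares. In short: same key facts, but a cleaner direct induction that makes rigorous what the paper's contradiction argument leaves implicit.
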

\begin{proof} Assume towards contradiction that there exists some square $S$ in $\ALG(P)$ which did
    not belong to $\ALG(P\cup\{p\})$. According to the description of $\mathcal {A}_\text{off}$, we
    can infer that $S$ was picked by the algorithm in some iteration because it was maximum
    area-covering for some cell $C_v$ (corresponding to node $v$ in $T$) that contained a point $p' \in P$
    introduced by the adversary. Also, $\mathcal{A}_\text{off}$ in its run must have explored all the
    ancestors of $v$ in $T$.  Note that any such point $p'$ could be
    covered in a run of the algorithm only when it traverses cells that contain $p'$. This is due to
    the fact that
    once we pick some squares associated with a cell in the quad-tree, we only account for the area
    inside this cell that the squares cover. In light of this fact, 
    if $\mathcal{A}_\text{off}$ did not explore $C_v$ in this time
    step, then it also would not have explored the children of $v$ in $T$. Hence, the point $p'$
    would not have been covered which is a contradiction.
\end{proof}

Hence, it is easy now to derive an online algorithm for set cover for squares. Initially,
$P=\emptyset$. If a point $p$ is introduced by the adversary, then we compute $\ALG(P)$ (where $P$ denotes the set of previous points, i.e., \emph{without} $p$) and $\ALG(P\cup\{p\})$
and we add the squares in $\ALG(P\cup\{p\})\setminus\ALG(P)$ to our solution.
Therefore,
due to \Cref{lem:sqoffline} and \Cref{lem:monotone} we obtain
an $O(\log N)$-competitive online algorithm. 

\subsubsection{$O(\log n)$-approximate online set cover for squares \label{sec:bbdscon}}

We assume now that we are given a set $\tilde{P}\subseteq\R^{2}$ with $|\tilde{P}|=n$ such that in
each round a point from $\tilde{P}$ is inserted to $P$,  
i.e., $P\subseteq\tilde
{P}$ after each round. 
We want to get a competitive ratio of $O(\log n)$ in this case. 
If $N=n^{O(1)}$ then this is immediate.  Otherwise, we extend our
algorithm such that it uses the balanced box-decomposition tree (or BBD-tree) data structure due to Arya et al.~\cite{arya1998optimal}, 
instead of the quad-tree. 
Before the first round, $P=\emptyset$ and we initialize the BBD-tree which 
yields a tree $\tilde{T}=(\tilde{V},\tilde{E})$ with the following properties:
\begin{itemize}
    \item each node $v\in\tilde{V}$ corresponds to a cell $\tilde{C}_{v}\subseteq[0,N)^{2}$ which is
        described by an outer box $b_{O}\subseteq[0,N)^{2}$ and an inner box $b_{I}\subseteq b_{O}$; both
        of them are axis-parallel rectangles and $\tilde{C}_{v}=b_{O}\setminus b_{I}$
        (Note that $b_I$ could be the empty set).
    \item the aspect ratio of $b_{O}$, i.e., the ratio between the length of the longest edge to the
        length of the shortest edge of $b_{O}$, is bounded by 3.
    \item if $b_{I}\ne\emptyset$, then $b_{I}$ is \emph{sticky }which intuitively means that in each
        dimension, the distance of $b_{I}$ to the boundary of $b_{O}$ is either 0 or at least the width of
        $b_{I}$. Formally, assume that $b_{O}=[x_{O}^{(1)},x_{O}^{(2)}]\times[y_{O}^{(1)},y_{O}^{
        (2)}]$ and $b_{I}=[x_{I}^{(1)},x_{I}^{(2)}]\times[y_{I}^{(1)},y_{I}^{(2)}]$. Then $x_{O}^{(1)}=x_
        {I}^{(1)}$ or $x_{I}^{(1)}-x_{O}^{(1)}\ge x_{I}^{(2)}-x_{I}^{(1)}$. Also $x_{O}^{(2)}=x_{I}^{
        (2)}$ or $x_{O}^{(2)}-x_{I}^{(2)}\ge x_{I}^{(2)}-x_{I}^{(1)}$. Analogous conditions also hold for
        the $y$-coordinates.
    \item each node $v\in\tilde{V}$ is a leaf or it has two children $v_{1},v_{2}\in\tilde{V}$; in the
        latter case $\tilde{C}_{v}=\tilde{C}_{v_{1}}\dot{\cup} \tilde{C}_{v_{2}}$.
    \item the depth of $\tilde{T}$ is $O(\log n)$ and each point $q\in[0,N)^{2}$ is contained in 
        $O(\log n)$ cells.
    \item each leaf node $v\in\tilde{V}$ contains at most one point in $\tilde{P}$.
\end{itemize}

In the construction of the BBD-tree, we make the cells at the same depth disjoint so that a point $p$ may be contained in exactly one cell at a certain depth. 
Hence, for a cell $\tilde{C}_{v}=b_{O}\setminus b_{I}$ we assume both $b_{O}$ and $b_{I}$ to be {\em closed} set, i.e., the boundary of the outer box $b_{O}$ is part of the cell and the boundary of the inner box $b_{I}$ is {\em not} part of the cell. 
We now describe an adjustment of our offline algorithm from \Cref{sec:Set-cover-squares}, working
with $\tilde{T}$ instead of $T$. Similarly, as before, we traverse $\tilde{T}$ in a
breadth-first-order. Suppose that in one iteration we consider a node $v\in\tilde{V}$ corresponding
to a cell $\tilde{C}_{v}$. We check whether the squares selected in the ancestors of $v$ cover all
points in $P\cap\tilde{C}_{v}$. If this is the case, we do not select any squares from $\S$ in this
iteration corresponding to $v$.

Suppose now that the squares selected in the ancestors of $v$ do
\emph{not }cover all points in $P\cap\tilde{C}_{v}$. 
Similar to \Cref{sec:Set-cover-squares}, we want to select $O(1)$ squares for $\tilde{C}_{v}$
such that if $\tilde{C}_{v}$ contains no corner of a square $S\in\OPT$, then the squares we selected for $\tilde{C}_{v}$ should cover all points in $P\cap\tilde{C}_{v}$. 
Similarly as before, for each edge $e$ of $b_{O}$ we select a square from $\S$
that contains $e$ and that has the largest intersection with $b_{O}$ among all such squares in
$\S$. We break ties in an arbitrary but fixed way.
However, as $\tilde{C}_{v}$ may not be a square and can have holes (due to $b_I$), apart from the edge-covering squares, we need to consider two additional types of squares in $\OPT$  with nonempty overlap with $\tilde{C}_{v}$:
(a) crossing $\tilde{C}_{v}$, i.e., squares that intersect two parallel edges of $b_O$; (b) has one or two corners inside $b_I$.

The following \textit{greedy subroutine} $\mathcal{G}$ will be useful in our algorithm to handle such problematic cases.
Let $R$ be a  box of width $w$ and height $h$ such that $w/h \le B$, for some constant $B \in \mathbb{N}$; and $P_R$ be a set of points inside $R$ that can be covered by a collection of vertically-crossing (i.e., they intersect both horizontal edges of $R$) squares $\S'$. 
Then, the set of squares picked according to $\mathcal{G}$ covers $P_R$ in the following way:
\begin{itemize}
    \item While there is an uncovered point $p'\in P_R$:
    \begin{itemize}
        \item Consider the leftmost such uncovered point $p\in P_R$.
        \item Select the vertically-crossing square intersecting $p$ (by assumption, such a square exists) with the rightmost edge.
    \end{itemize}
\end{itemize}
(The above subroutine is for finding vertically-crossing squares. For finding horizontally-crossing squares, we can appropriately rotate the input $90^{\circ}$ anti-clockwise, and apply the same subroutine.)
Then, we have the following claim about the aforementioned subroutine.
\begin{claim}
\label{cl:crossbbd}
Let $R$ be a  box of width $w$ and height $h$ such that $w/h \le B$, for some constant $B \in \mathbb{N}$; and $P_R$ be a set of points inside $R$ that can be covered by a collection of vertically-crossing (i.e., they intersect both horizontal edges of $R$) squares $\S'$. Then we can find at most $B+1$
squares from $\S'$ that can cover all points inside $R$. 

We have an analogous claim for horizontally-crossing squares when $h/w \le B$.
\end{claim}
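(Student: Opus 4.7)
The plan is to reduce the statement to a one-dimensional interval-cover problem. Every $S\in\S'$ is vertically-crossing, so the $y$-projection of $S$ contains the whole $y$-extent of $R$ and its side length is at least $h$; consequently $S$ covers a point $p\in R$ if and only if the $x$-projection of $S$ (an interval of length $\ge h$) contains the $x$-coordinate of $p$. The subroutine $\mathcal{G}$ is therefore the standard one-dimensional greedy that covers the $x$-coordinates of the points of $P_R$ by intervals of length at least $h$ drawn from the $x$-projections of $\S'$: at each step, pick the interval containing the leftmost uncovered point whose right endpoint is as far right as possible.

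Write the picks as $J_1,\ldots,J_k$ with $x$-projections $[l_i,r_i]$, and let $p_i$ denote the leftmost uncovered point handled at step $i$. The driving structural observation is $l_{i+1}>x(p_i)$. Indeed $p_{i+1}$ is uncovered after step $i$ and strictly to the right of $p_i$, so $x(p_{i+1})>r_i$, whence $r_{i+1}\ge x(p_{i+1})>r_i$; but then if $J_{i+1}$ also contained $p_i$, it would have been a strictly better candidate at step $i$ by the maximality of the chosen right endpoint, contradicting the choice of $J_i$. Combining $l_{i+1}>x(p_i)$ with $r_{i+1}\ge l_{i+1}+h$ gives $r_{i+1}>x(p_i)+h$, and using $x(p_i)>r_{i-1}$ for $i\ge 2$ we deduce $r_{i+1}>r_{i-1}+h$.

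Consequently, both the odd-indexed subsequence $r_1<r_3<\cdots$ and the even-indexed subsequence $r_2<r_4<\cdots$ increase by more than $h$ per step, and within each parity class the picked intervals are pairwise disjoint because $l_{i+2}>r_i$. After translating so that $R$'s $x$-range is $[0,w]$, we have $r_i<x(p_{i+1})\le w$ for every $i<k$, and $l_i>x(p_{i-1})\ge 0$ for $i\ge 2$, so every pick except possibly $J_1$ has its left endpoint inside $[0,w]$. A length count of these disjoint intervals of length $\ge h$ against the total available length $w\le Bh$ then yields the claimed bound $k\le B+1$.

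The main technical obstacle is sharpening the length count to obtain the precise constant $B+1$ rather than the looser $O(B)$ bound that comes out of a naive summation; a careful case analysis is needed to handle the possible overhang of the boundary picks $J_1$ and $J_k$ (which may respectively protrude to the left of $0$ and to the right of $w$) and to argue that the two parity classes cannot both be saturated simultaneously. The analogous claim for horizontally-crossing squares under the hypothesis $h/w\le B$ follows from the identical argument after a $90^{\circ}$ rotation of the input.
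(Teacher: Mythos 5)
Your structural observations are correct and in fact more carefully stated than the paper's one-sentence justification; both arguments are built on the greedy $\mathcal{G}$ and a count of pivots. In particular, the chain $l_{i+1}>x(p_i)\Rightarrow r_{i+1}>x(p_i)+h\Rightarrow x(p_{i+2})>x(p_i)+h$ is exactly what the paper's assertion ``there can be at most $B+1$ pivot points'' ought to have spelled out. However, the gap you flag at the end---sharpening the length count to $B+1$ and showing ``the two parity classes cannot both be saturated simultaneously''---cannot be closed, because $B+1$ is simply not what the greedy achieves. Your inequalities already give the tight bound $k\le 2B$: the odd- and even-indexed pivots each advance by more than $h$ in $x$-coordinate inside a window of width $w\le Bh$, so each parity class has at most $B$ members, and both classes \emph{can} be simultaneously full. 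Concretely, take $B=2$, $R=[0,2]\times[0,1]$, four points on the midline at $x=0,\ \delta,\ 1+\delta,\ 1+2\delta$ (small $\delta>0$), and four vertically-crossing squares with $x$-projections $[-1,\delta/2]$, $[\delta/2,1+\delta/2]$, $[3\delta/2,1+3\delta/2]$, $[1+3\delta/2,2+3\delta/2]$. Each square covers exactly one of the four points, so every cover---and in particular the greedy's output---must use all four squares, contradicting the claimed $B+1=3$.

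What you ran into is therefore not a technical sharpening issue but an actual error in the stated constant, which the paper's proof also glosses over: the implicit picture, that the selected squares cover $R$ with essentially disjoint $x$-projections, fails because consecutive picks $J_i,J_{i+1}$ (different parities) may overlap heavily, and only $J_i,J_{i+2}$ are guaranteed disjoint. The statement your argument actually proves, and the one that should be used, is a bound of $2B$. This is harmless for the rest of the paper: the claim is invoked only with $B=3$ (BBD-tree cells have aspect ratio at most $3$), so replacing $B+1$ by $2B$ still gives $O(1)$ squares per cell and leaves the asymptotic guarantees of \Cref{lem:bbdstruct} and \Cref{squaressetcover_1} unchanged.
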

\begin{proof}


Consider each iteration of the greedy subroutine $\mathcal{G}$. We call {\em pivot} to be  the leftmost point $p\in P_R$ that is not already
covered 
by a square selected by $\mathcal{G}$ so far. Then all selected vertically-crossing squares for $R$ 
will contain exactly one point that was identified as a pivot point at some point during the execution of the algorithm. As the aspect ratio is bounded by $B$ and the squares are vertically-crossing (i.e., their vertical length  is more than the vertical length of $R$), there can be at most $B+1$ pivot points. Hence, we select at most $B+1$ crossing squares  due to $R$. 
This produces a feasible set cover. 
\end{proof}

Now we describe our algorithm. 
First, we take care of the squares that can cross $b_O$. So, we apply the greedy subroutine $\mathcal{G}$ on $b_O$. As $b_O$ has bounded aspect ratio of 3, from Claim~\ref{cl:crossbbd}, we obtain at most $(3+1)+(1+1)=6$ squares that can cross $C_v$ vertically or horizontally. 
If $b_I=\emptyset$, we do not select any more squares. 
Otherwise, we need to take care of the squares that can have one or two corners inside $b_I$. 
Let $\ell_
{1},\ell_{2},\ell_{3},\ell_{4}$ denote the four lines that contain the four edges of $b_
{I}$. Observe that $\ell_{1},\ell_{2},\ell_{3},\ell_{4}$ partition $b_{O}$ into up to nine
rectangular regions, one being identical to $b_{I}$.
For each such rectangular region $R$, if it is sharing a horizontal edge with $b_I$, we again use $\mathcal{G}$ to select vertically-crossing squares. Otherwise, if $R$ is sharing a vertical edge with $b_I$, we use the subroutine $\mathcal{G}$ appropriately to select horizontally-crossing squares.
This takes care of squares having two corners inside $b_I$. 
Otherwise, if the rectangular region $R$ does not share an edge with $b_I$, then we check if there is a square  $S \in \S$ with a corner within $b_I$ that completely contains $R$.   We add $S$ to our solution too. This finally takes care of the case when a square has a single  corner inside $b_I$.  

Finally, to complete our algorithm, before its execution, we do the following: for every leaf $v$ for which $C_v$ contains at most one point $p\in \tilde{P}$, we associate a fixed square which covers $p$. Then, if our algorithm reaches a leaf $v$ while traversing that has an uncovered point $p$, we pick the associated square with this leaf that covers it. This condition in our algorithm guarantees feasibility.

\begin{figure}[ht]
    \centering
    \includegraphics[page=12,scale=0.6]{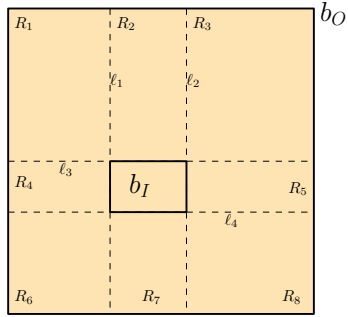}
    \caption{Outer box $b_O$ being partitioned into at most 9 rectangles due to inner box $b_I$.}
    \label{fig:bbd-boxes}
\end{figure}

\begin{figure}[ht]
    \centering
    \includegraphics[page=13,scale=0.6]{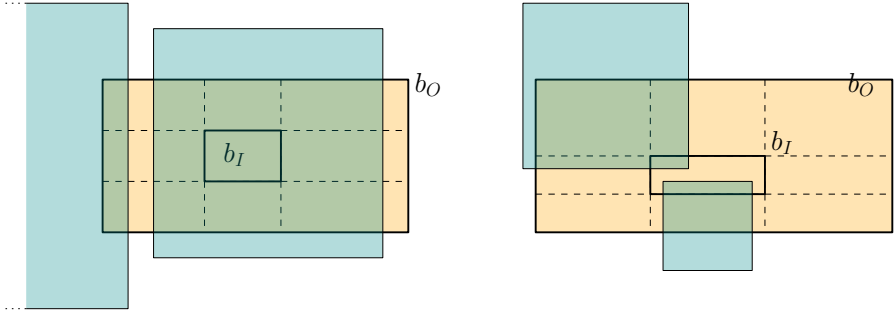}
    \caption{Possible intersections of a (cyan) square from $\OPT$ with a cell, such that no corner of the square is in the cell. The left image shows edge-covering, and crossing squares. The right image shows squares with one of two corners  inside $b_I$.}
    \label{fig:bbd-opt-intersection}
\end{figure}

\begin{restatable}{lemma}{bbdstruct}
\label{lem:bbdstruct}
Let $\tilde{C}_{v}$ be a cell such that the squares selected in the ancestors of $v$
    do \emph{not }cover all points in $P\cap\tilde{C}_{v}$. Then
    \begin{enumerate}
        \item[(a)] we select at most $O(1)$ squares for $\tilde{C}_{v}$ and
        \item[(b)] if $\tilde{C}_{v}$ contains no corner of a square $S\in\OPT$, then the squares we selected for
            $\tilde{C}_{v}$ cover all points in $P\cap\tilde{C}_{v}$.
    \end{enumerate}
\end{restatable}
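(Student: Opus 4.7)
The plan is to prove both parts by carefully tracking the squares selected at each stage of the algorithm (the two $\mathcal{G}$ invocations on $b_O$, and, when $b_I \ne \emptyset$, one selection per partition rectangle), together with a case analysis on the geometry of squares $S \in \OPT$ relative to $b_O$ and $b_I$.

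For part (a), I would count squares selected at each stage. The two invocations of $\mathcal{G}$ on $b_O$ yield $O(1)$ squares by Claim~\ref{cl:crossbbd}, since $b_O$ has aspect ratio at most $3$. When $b_I \ne \emptyset$, the four lines through the edges of $b_I$ partition $b_O$ into at most nine rectangles (one being $b_I$ itself). The four corner rectangles each contribute at most one square. The main sub-step is to bound the squares per edge-adjacent rectangle: for example, the top-middle rectangle has width $w_I := x_I^{(2)} - x_I^{(1)}$ and height equal to the top gap $y_O^{(2)} - y_I^{(2)}$, which by the stickiness of $b_I$ is either $0$ or at least $h_I := y_I^{(2)} - y_I^{(1)} \ge w_I/3$ (using that $b_I$ also has aspect ratio $\le 3$). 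Hence $w/h \le 3$ in the vertically-crossing direction, and Claim~\ref{cl:crossbbd} yields at most $4$ squares. Symmetry handles the other three edge-adjacent rectangles; together with the (at most one) leaf-associated square, this gives $O(1)$ total.

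For part (b), assume no corner of any $S \in \OPT$ lies in $\tilde{C}_v$. Take any $p \in P \cap \tilde{C}_v$ uncovered by the squares selected for ancestors, and let $S \in \OPT$ cover $p$; write $S = [s_L, s_R] \times [s_B, s_T]$. Every corner of $S$ lies either outside $b_O$ or inside $b_I$. A short structural observation to prove first: at most two corners of $S$ lie in $b_I$, and if two, they must be adjacent on $S$; otherwise the axis-aligned convexity of $b_I$ forces all four corners of $S$ into $b_I$, i.e., $S \subseteq b_I$, contradicting $p \in S \setminus b_I$. Split into three cases accordingly. If no corner of $S$ lies in $b_I$, a standard rectangle-overlap argument shows that $S$ spans the full width or the full height of $b_O$, so the corresponding invocation of $\mathcal{G}$ on $b_O$ covers $p$. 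If exactly one corner of $S$ lies in $b_I$, WLOG the bottom-left, a direct check yields $s_T > y_O^{(2)}$ and $s_R > x_O^{(2)}$ (else another corner would fall in $\tilde{C}_v$), so $S \cap \tilde{C}_v$ is an L-shape sitting inside the top-right corner rectangle together with the top-middle and middle-right edge-adjacent rectangles. If $p$ lies in the corner rectangle, then $S$ witnesses a square with a corner in $b_I$ that contains this rectangle, so the corner-selection step of the algorithm picks such a square and covers $p$; if $p$ lies in an edge-adjacent rectangle, then $S$ is a crossing square for that rectangle in the direction used by $\mathcal{G}$, so $\mathcal{G}$ covers $p$. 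Finally, if exactly two (adjacent) corners of $S$ lie in $b_I$, WLOG the bottom two, then $S \cap \tilde{C}_v$ lies entirely in the top-middle edge-adjacent rectangle, and $S$ is vertically-crossing for it, so $\mathcal{G}$ covers $p$.

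The main obstacle is part (b): the exhaustive case analysis on the positions of $S$'s corners and verifying, for each resulting geometry, that $S$ itself witnesses either a crossing square (in the right direction) within the relevant sub-rectangle, or a square with a corner in $b_I$ that contains the relevant corner rectangle, so that the algorithm's selection covers $p$. Verifying the aspect-ratio bound for edge-adjacent rectangles via stickiness is the other delicate ingredient.
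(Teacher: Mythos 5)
Your proof is correct and follows the same approach as the paper: for part (a) you count squares per stage and invoke Claim~\ref{cl:crossbbd} after establishing the aspect-ratio bound for the edge-adjacent sub-rectangles via the stickiness of $b_I$, and for part (b) you classify the $\OPT$ square covering an uncovered point by how its corners sit relative to $b_O$ and $b_I$, verifying each configuration is handled by one of the algorithm's selection steps. Your part (b) is in fact more carefully organized than the paper's --- the observation that at most two (necessarily adjacent) corners of $S$ can lie in $b_I$, followed by the exhaustive case split on $p$'s location among the nine sub-regions, is cleaner than the paper's somewhat looser statement that corner-in-$b_I$ squares ``have to cross'' an edge-adjacent rectangle --- but the decomposition and the key claim are the same.
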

\begin{proof}
First, we prove part (a). 
If $\tilde{C}_{v}$ corresponds to a  leaf node, we select at most one square. 
Otherwise, We select at most 4 edge-covering squares for $\tilde{C}_{v}$.
From Claim~\ref{cl:crossbbd}, we select $O(1)$ number squares for $\tilde{C}_{v}$ that are horizontally or vertically-crossing $b_O$. 
We select no more squares if  $b_{I}= \emptyset$.

So consider the other case:  $b_{I}\neq \emptyset$. Let $R$ be one of the (at most) four rectangular regions obtained from partitioning of $b_{O}$  (by $\ell_{1},\ell_{2},\ell_{3},\ell_{4}$) that share an edge with $b_I$.
Let $w,h$ be width and height of $R$, respectively. 
W.l.o.g.~assume $R$ shares a horizontal edge with $b_I$.
As  $b_{I}$ is sticky, and $b_{I}$ and $b_{O}$ have a bounded aspect ratio of 3, it can be seen that $R$ also has a $w/h \le 3$ (similarly, if $R$ shared a vertical edge with $b_I$, then $h/w \le 3$).
Again using Claim~\ref{cl:crossbbd}, we select $O(1)$ vertically-crossing squares for $R$.
We do a similar operation for other such regions. 
Now consider the remaining (at most four) regions obtained from partitioning of $b_{O}$  (by $\ell_{1},\ell_{2},\ell_{3},\ell_{4}$) that do not share an edge with $b_I$.
We select at most one square for each of them. 
Thus, in total, we select at most $O(1)$ squares for $\tilde{C}_{v}$. 

Now we prove part (b). 
If $\tilde{C}_{v}$ contains no corner of a square $S\in\OPT$, then all the squares in $\OPT$ that intersect $\tilde{C}_{v}$ are either edge-covering $b_O$, or crossing $\tilde{C}_{v}$, or contain one or two corners inside $b_I$. 
However, as we have picked maximal edge-covering squares for $b_O$, they contain all points in $\tilde{C}_{v}$ that are covered by the edge-covering squares from $\OPT$.

Similarly, by our selected  squares (that vertically or horizontally crosses $b_O$) in the greedy subroutine $\mathcal{G}$, we have covered all points that can be covered by such crossing squares in $\OPT$ that crosses $b_O$.

For squares that have (at least) a corner inside $b_I$, note that they have to cross one of the rectangular regions  that came from partitioning of $b_O$ and shares an edge with $b_I$. 
In fact, for such a square $S$ with a corner inside $b_I$, there is a rectangular region $R$ (say, with width $w$ and height $h$) among these four rectangular regions such that either $S$ is vertically-crossing for $R$ and $R$ shared a horizontal edge with $b_I$ (then $w/h\le 3$) or $S$ is horizontally-crossing for $R$ and $R$ shared a vertical edge with $b_I$ (then $h/w\le 3$). 
But then using Claim~\ref{cl:crossbbd}, we cover all points covered by such squares.
Additionally, a square that has exactly one corner inside $b_I$ may completely contain another rectangular region from partitioning of $b_O$ (that do not share an edge with $b_I$). For them, again we have covered them by selecting one square, if it exists.  
\end{proof}



Thus, we can establish a similar charging scheme as in \Cref{sec:Set-cover-squares}. To pay for our
solution, we charge each corner $q$ of a square $S\in\OPT$ at most $O(\log n)$ times. Hence, our
approximation ratio is $O(\log n)$. Similarly as in \Cref{sec:Set-cover-squares}, we can modify the
above offline algorithm to an online algorithm with an approximation
ratio of $O(\log n)$ each.


\begin{restatable}{theorem}{squaresetcoverone} \label{squaressetcover_1}
There is a deterministic $O(\log n)$-competitive online algorithm
for set cover for axis-parallel squares of arbitrary sizes.
\end{restatable}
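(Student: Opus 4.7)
The plan is to lift the BBD-tree based offline algorithm from Section~\ref{sec:bbdscon} to an online algorithm in exactly the same way that we lifted the quad-tree based offline algorithm in the preceding subsection. Concretely, we first build the BBD-tree $\tilde{T}$ once, at the very beginning, using the candidate point set $\tilde{P}$ (which by assumption is known offline, as in \cite{alon2003online}). Since $\tilde{T}$ has depth $O(\log n)$ and each point of $[0,N)^{2}$ lies in only $O(\log n)$ cells of $\tilde{T}$, it is the right substrate to aim for an $O(\log n)$-competitive bound.

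The first step is to establish that the BBD-tree offline algorithm is monotone, i.e., $\ALG(P) \subseteq \ALG(P \cup \{p\})$ for every point $p$, in direct analogy to Lemma~\ref{lem:monotone}. The argument is the same as for the quad-tree: any square selected for a cell $\tilde{C}_v$ when the input is $P$ was selected because the BFS traversal explored $v$ due to an uncovered point $p' \in P \cap \tilde{C}_v$, and this required that $\tilde{C}_{v'}$ was also explored for every ancestor $v'$ of $v$. Adding a new point $p$ to $P$ can never remove an uncovered point from a cell (it can only create more uncovered points), so every ancestor is still explored and the same deterministic tie-breaking rule selects the same squares. Hence $\ALG(P) \subseteq \ALG(P \cup \{p\})$.

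Given monotonicity, the online algorithm is immediate: upon insertion of a point $p$, recompute $\ALG(P \cup \{p\})$ offline and output the squares in $\ALG(P \cup \{p\}) \setminus \ALG(P)$. These are the only new additions and the solution is never revoked, which satisfies the online (immediate and irrevocable) requirement. The total number of squares ever output equals $|\ALG(P_{\text{final}})|$ where $P_{\text{final}}$ is the set of all inserted points.

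It remains to bound $|\ALG(P_{\text{final}})|$ by $O(\log n) \cdot |\OPT|$. For this we re-use the charging scheme of Lemma~\ref{lem:sqoffline}, but with Lemma~\ref{lem:bbdstruct} as the structural ingredient. For each cell $\tilde{C}_v$ for which the algorithm selects any square, Lemma~\ref{lem:bbdstruct}(a) says it selects only $O(1)$ squares. If $\tilde{C}_v$ contains a corner of some $S \in \OPT$, we charge these $O(1)$ squares to that corner. Otherwise, Lemma~\ref{lem:bbdstruct}(b) guarantees that the selected squares cover \emph{all} remaining uncovered points in $\tilde{C}_v$, so no descendant of $v$ will be explored; the parent $v'$ of $v$ then must contain a corner of $\OPT$ (since its cell meets $S$ but is not fully covered by $\OPT$-squares of the above types without containing a corner), and we charge the $O(1)$ squares to a corner in $\tilde{C}_{v'}$. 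Each corner of each $S \in \OPT$ lies in at most $O(\log n)$ cells of $\tilde{T}$, and at each such cell it absorbs an $O(1)$ charge for itself plus an $O(1)$ charge from each of its $O(1)$ children, for a total of $O(\log n)$ charge per corner and hence $O(\log n)$ per square of $\OPT$. The main technical obstacle, namely correctly handling crossing squares and squares with corners inside $b_I$ when $\tilde{C}_v$ has no $\OPT$-corner, has already been resolved by Claim~\ref{cl:crossbbd} and Lemma~\ref{lem:bbdstruct}, so this charging closes the proof.
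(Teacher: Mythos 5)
Your proposal follows the paper's intended route exactly: lift the BBD-tree offline algorithm to an online one via monotonicity, then re-use the corner-charging argument with Lemma~\ref{lem:bbdstruct} and the fact that each corner lies in $O(\log n)$ cells of $\tilde{T}$. The charging part is fine and matches the paper. However, there is a genuine gap in your monotonicity claim. You assert that once a cell $\tilde{C}_v$ is explored, ``the same deterministic tie-breaking rule selects the same squares.'' This is true for the maximal edge-covering squares, which depend only on $\tilde{C}_v$ and $\S$, but it is \emph{not} true for the squares chosen by the greedy subroutine $\mathcal{G}$, whose output depends on which points of $P$ currently lie in the rectangular region $R$. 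The greedy's choice of pivot is the leftmost uncovered point of $P\cap R$, and inserting a new point to the left of the old first pivot can change every subsequent selection. A concrete failure: with crossing intervals $[0,7]$, $[3,11]$, $[8,12]$ (x-ranges) and points at $x=1,5,10$, the greedy on $\{5,10\}$ selects only $[3,11]$, but on $\{1,5,10\}$ it selects $[0,7]$ and then $[8,12]$, dropping $[3,11]$ entirely. So $\ALG(P)\subseteq\ALG(P\cup\{p\})$ fails as stated, and consequently your ``output the difference'' online wrapper does not produce $\ALG(P_{\text{final}})$ but rather a union over rounds whose size you have not bounded.

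The fix is available from the offline information you already assumed: run the greedy $\mathcal{G}$ on the \emph{candidate} set $\tilde{P}\cap R$ (restricted to the points coverable by crossing squares), not on the currently inserted $P\cap R$, so that the $O(1)$ squares $G_v$ chosen for each cell $\tilde{C}_v$ are determined once and for all by $\tilde{T}$, $\tilde{P}$, and $\S$, independently of the arrival order. With this modification the per-cell selection is $P$-independent and the only $P$-dependence is in which cells get explored, which is monotone exactly as in Lemma~\ref{lem:monotone}. Lemma~\ref{lem:bbdstruct}(b) still holds because covering $\tilde{P}\cap\tilde{C}_v$ a fortiori covers $P\cap\tilde{C}_v$, so the charging goes through unchanged. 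You should make this change explicit; as written, the claim that monotonicity carries over ``by the same argument'' is false for the greedy component.
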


\subsection{Lower bounds}

It is a natural question whether  algorithms having a competitive factor better than $O(\log n)$ are possible for online set cover for squares. We answer this question in the negative, even for the case of unit squares and even for randomized algorithms. We also remark here that there exists a tight $2$-competitive online algorithm for set cover for intervals (see Section~\ref{sec:interval_upper}).

\subsubsection{Unit squares and quadrants}

Given a set cover instance $(P, \F)$, for each $F\in\F$ define a variable $x_F$ which takes values
$\in[0,1]$. For a point $p\in P$, let $\F_p\subseteq\F$ be the sets that cover it. In the
{\em fractional set cover} problem, the aim is to assign values to the variables $x_F$ such that
for all points $p\in P$, $\sum_{F\in\F_p} x_F\ge1$.

\begin{figure}[ht]
    \center
    \includegraphics{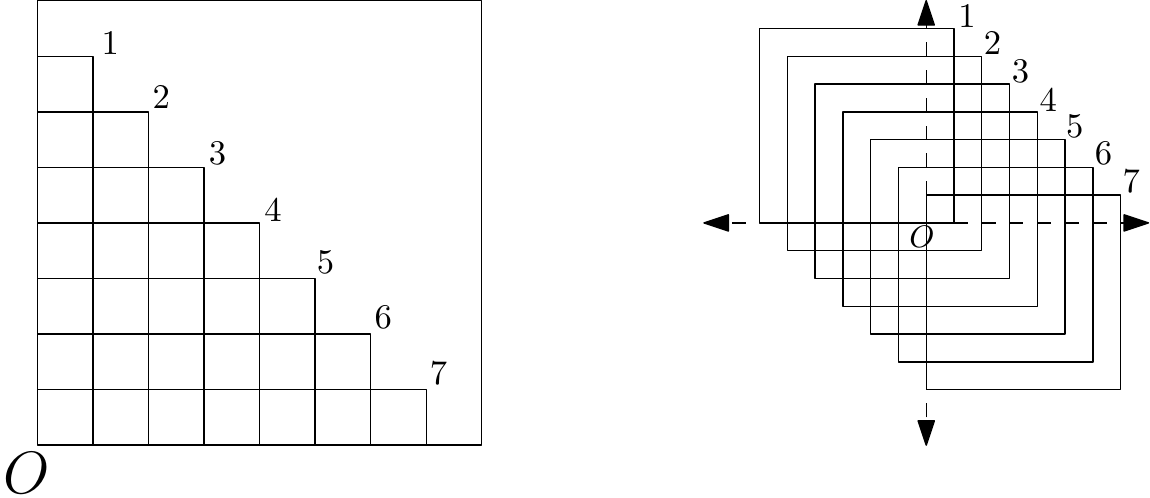}
    \caption{$\Omega(\log m)$ hard instances for quadrants and unit squares}
    \label{fig:LogHard}
\end{figure}

\label{subsec:sqLower}
\begin{restatable}{lemma}{quadlower} \label{lem:quad_lower}
    There is an instance of the online fractional covering problem on $m$ quadrants (quadrant is a  rectangle with one of its corners as the origin) such that any
    online deterministic algorithm is $\Omega(\log m)$-competitive on this instance.
\end{restatable}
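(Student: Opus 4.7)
I will construct an adaptive ``halving'' adversary against any deterministic fractional algorithm on a nested family of $m$ quadrants, forcing the algorithm's total weight to $\Omega(\log m)$ while the fractional offline optimum is at most $1$. Assume $m$ is a power of two and set $Q_i := [0, 1/i] \times [0, i]$ for $i = 1, \dots, m$. A point $(x, y)$ is covered by $Q_i$ exactly when $y \le i \le 1/x$, so the set of indices of the quadrants covering any single point is an \emph{interval}; conversely, every integer interval $[a,b] \subseteq [1,m]$ is realized as the covering-set of the point $p_{a,b} := (1/b, a)$. Integer coordinates can be obtained by scaling the plane by $\mathrm{lcm}(1, \dots, m)$ if required.

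\textbf{Adversary and analysis.} Maintain an active interval $I_t$ of size $m/2^{t-1}$ with $I_1 := [1, m]$. In round $t = 1, \dots, \log m$, release $p_{I_t}$, which forces the algorithm to raise $\sum_{i \in I_t} x_i \ge 1$. After this update, split $I_t$ into its two equal halves and let $I_{t+1}$ be whichever half currently carries less total weight; since $I_{t+1}$ is again an integer interval, the iteration continues. Let $D_t := \sum_{i \in I_t} x_i$ just after round $t$ (so $D_t \ge 1$), let $\Delta_t$ be the total weight the algorithm adds in round $t$, and set $D_0 := 0$. The lighter-half rule gives $\sum_{i \in I_{t+1}} x_i \le D_t / 2$ at the start of round $t+1$, so $\Delta_{t+1} \ge D_{t+1} - D_t/2$. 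Summing and telescoping,
\[
\sum_{t=1}^{\log m} \Delta_t \;\ge\; D_{\log m} + \tfrac{1}{2}\sum_{t=1}^{\log m - 1} D_t \;\ge\; 1 + \tfrac{\log m - 1}{2} \;=\; \Omega(\log m).
\]
The final interval $I_{\log m}$ contains a unique index $j^{*}$ which lies in every $I_t$, so the assignment $x_{j^{*}} = 1$, $x_i = 0$ otherwise is an offline fractional cover of cost $1$, yielding the claimed $\Omega(\log m)$ competitive ratio.

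\textbf{Main obstacle.} The one subtle point is an algorithm that deliberately ``overspends,'' pushing $D_t$ strictly above $1$; but the telescoping uses only $D_t \ge 1$ and $\Delta_{t+1} \ge D_{t+1} - D_t/2$, so overspending simultaneously inflates $\Delta_t$ and $D_t$ and can only strengthen the bound. Everything else---realizing index-intervals as single points via the $Q_i$'s, the adaptive halving step, and the telescoping identity---is routine once the family $\{Q_i\}$ is chosen correctly.
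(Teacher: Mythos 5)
Your proof is correct and uses essentially the same idea as the paper: realize every integer interval $[a,b]\subseteq[m]$ as the covering-set of a single point via a family of nested quadrants, and run an adaptive halving adversary against the algorithm's current fractional weights. The only differences are cosmetic --- you parameterize the quadrants as $[0,1/i]\times[0,i]$ instead of the paper's linear staircase $[0,k]\times[0,m-k+1]$ (both families are equally valid, though the paper's has the mild advantage of integer coordinates without LCM scaling), and your telescoping accounting on $D_t$ and $\Delta_t$ is a tidier formalization of the paper's observation that each round forces at least $1/2$ weight onto the discarded (heavy) half, these heavy halves being pairwise disjoint across rounds.
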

\begin{proof} Consider the set of $m$ quadrants (See \Cref{fig:LogHard}) with their top-right
    corners as follows: $Q_1=(1, m), Q_2=(2, m-1), \dots, Q_k=(k, m-k+1), \dots, Q_m=(m, 1)$ (and let
    the corresponding variables for the set cover instance be $x_1, x_2, \dots, x_m$,  respectively).
    Now consider a point $p_{i,j}=(i-0.5, m-j+0.5)$. We claim that this point intersects exactly the
    $i^{th}$ to $j^{th}$ indexed quadrants. Since $p(x)>i-1$, $p_{i,j}$ does not intersect any of the
    first $(i-1)$ quadrants. Additionally since $p(y)=m-j+0.5<m-i-1$ it does intersect quadrant $i$
    and this holds true up to quadrant $j$ (Since, $p(y)=m-j+0.5<m-i-1$). Also $p_{i,j}$ does not
    intersect any of the last  $m-j+1$ quadrants because $p(y)>m-j$.

    Now consider an adversary that introduces points as follows: $P_1=p_{1,m}$. If the algorithm
    assigns values to the variables such that $\sum_{i=1}^{m/2}x_i\ge\sum_{i=m/2+1}^{i=m} x_i$, then the
    point $P_2=p_{m/2+1,m}$ is given, otherwise the point $P_2=p_{1,m/2}$ is given. The adversary
    repeats this process of halving the set of quadrants intersected, and puts the next point in
    the range with the lower sum, till only one quadrant, say quadrant $j$ is left.

    The optimal solution would have been to assign only $x_j$ to 1 and the remaining variables to
    zero. But any online algorithm can only halve the set of the potential optimal solution in each
    step, while assigning at least $1/2$ ``cost'' to the non-optimal quadrants. Hence, the cost of any
    online algorithm is $\Omega(\log m)$.
\end{proof}
\begin{corollary}
\label{lb:unitsquares}
    There is an instance of the online fractional covering problem on $m$ unit squares
    such that any deterministic online algorithm is $\Omega(\log m)$-competitive on this instance.
\end{corollary}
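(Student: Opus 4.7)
The plan is to reduce the unit-square instance to the quadrant instance from Lemma \ref{lem:quad_lower} by a simple scaling argument. The key observation is that, inside a bounded region of diameter less than $1$, a unit square whose top-right corner sits inside that region contains exactly the same points as the quadrant (with the positive axes) sharing the same top-right corner. So if we rescale the hard quadrant construction to lie inside such a small region, each quadrant may be replaced by a unit square with an identical intersection pattern with the adversary's points.

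Concretely, I would take the construction of Lemma \ref{lem:quad_lower} and scale every coordinate by $1/(2m)$. The top-right corner of the $k$-th quadrant becomes $(k/(2m),\, (m-k+1)/(2m))$, and the adversarial point $p_{i,j}$ moves to $((i-0.5)/(2m),\, (m-j+0.5)/(2m))$. All corners and all candidate points now lie in $[0, 1/2]^2$. For each $k$, let $U_k$ be the unit square whose top-right corner is $(k/(2m),\, (m-k+1)/(2m))$; explicitly, $U_k = [k/(2m)-1,\, k/(2m)] \times [(m-k+1)/(2m)-1,\, (m-k+1)/(2m)]$. Because $k/(2m)-1 < 0$ and $(m-k+1)/(2m)-1 < 0$, the intersection of $U_k$ with the positive quadrant agrees with the (scaled) quadrant $Q_k$ on the whole region $[0,1/2]^2$. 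In particular, for every adversarial point $p_{i,j}$, we have $p_{i,j} \in U_k$ iff $p_{i,j} \in Q_k$, i.e., iff $i \le k \le j$.

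Given this equivalence, the adversary strategy from the proof of Lemma \ref{lem:quad_lower} transfers verbatim. Starting from $P_1 = p_{1,m}$, the adversary repeatedly halves the surviving range of indices based on which half the online fractional algorithm puts less weight into, until a single unit square $U_{j^\ast}$ remains. Assigning $x_{U_{j^\ast}} = 1$ and all others to $0$ is a feasible fractional cover of cost $1$, while any online algorithm is forced to deposit fractional weight at least $1/2$ into the losing half of each halving step, accumulating total cost $\Omega(\log m)$ over the $\Theta(\log m)$ rounds. Hence the instance yields a deterministic competitive ratio of $\Omega(\log m)$ for unit squares.

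The only subtlety to check carefully is the boundary behaviour: I want to be sure that scaled points still lie strictly inside the correct set of unit squares, matching the original containment relations. Using half-integer offsets (exactly as in the quadrant proof) keeps every point bounded away from the corner grid by $1/(4m)$, so the equivalence $p_{i,j} \in U_k \Leftrightarrow i \le k \le j$ holds without any tie-breaking issues. With that in hand, the corollary follows immediately from the lemma.
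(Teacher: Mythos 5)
Your proof is correct and follows essentially the same route as the paper: both replace each quadrant by a square whose southwest corner lies strictly outside the bounded region containing all adversarial points (so the containment pattern is unchanged) and then rescale to unit side-length. The only cosmetic difference is the order of operations --- you scale first and then take unit squares, while the paper first extends the quadrants to side-$m$ squares and then scales down by $m$.
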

\begin{proof}
    We can appropriately extend the quadrants from \Cref{lem:quad_lower} in the bottom-left
    direction to obtain a similar instance on squares with side-length of $m$. More precisely,
    let the bottom left corner of the square $i$ corresponding to quadrant $i$ be $(i-m,1-i)$.
    The points introduced by the adversary are the same as in the quadrants instance.

    Now scale down this instance on squares appropriately, by a factor of $m$, to get the required
    unit square instance.
\end{proof}

Using standard techniques, as in \cite{bienkowski2020unbounded}, we can extend the lower bound for deterministic algorithms for the fractional variant to the lower bound for randomized algorithms for the integral variant. 

\begin{corollary}
    There is an instance of the online (integral) set cover problem on $m$ unit squares
    such that any randomized online algorithm is $\Omega(\log m)$-competitive on this instance.
\end{corollary}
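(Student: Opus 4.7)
The plan is to deduce the randomized integral lower bound from the deterministic fractional lower bound in \Cref{lb:unitsquares} via the standard observation that, for online set cover against an oblivious adversary, any randomized integral algorithm can be simulated by a deterministic fractional algorithm of the same expected cost. I would use the very same point sequence constructed in the proof of \Cref{lem:quad_lower} (lifted to unit squares as in \Cref{lb:unitsquares}); no change to the instance is needed.

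Concretely, suppose for contradiction that there is a randomized online algorithm $A$ for integral set cover on the family of $m$ unit squares of \Cref{lb:unitsquares} whose expected competitive ratio against an oblivious adversary is $o(\log m)$. Define a deterministic fractional online algorithm $A'$ as follows: at every moment in time, after $A$ has processed the points revealed so far, $A'$ maintains the fractional assignment $x_F := \Pr[F\in \mathcal{F}'_A]$, where $\mathcal{F}'_A$ denotes the (random) set of squares chosen by $A$ so far. Since $A$ must cover every revealed point with probability $1$, a union bound gives $\sum_{F\in \mathcal{F}_p} x_F \geq 1$ for every such point $p$, so $\{x_F\}$ is a valid fractional cover at every time step, and its cost equals $\E[|\mathcal{F}'_A|]$. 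The values $x_F$ are monotone non-decreasing over time because $A$ is online and only adds sets, so $A'$ is a legitimate online fractional algorithm.

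The next step is to feed $A'$ into the adversary of \Cref{lem:quad_lower}. Crucially, that adversary chooses the next point based only on comparing two partial sums of the current fractional solution (for instance $\sum_{i=1}^{m/2}x_i$ versus $\sum_{i=m/2+1}^{m}x_i$); it does not inspect any internal randomness of the algorithm, so its strategy is oblivious and well-defined when applied to $A'$. Running the adversary of \Cref{lb:unitsquares} against $A'$ therefore forces $A'$ to produce a fractional solution of cost $\Omega(\log m)$ while $\OPT=1$. Since this fractional cost equals $\E[|\mathcal{F}'_A|]$ on the same input sequence, $A$ has expected competitive ratio $\Omega(\log m)$, contradicting the assumption.

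The only subtle point, and what I view as the main obstacle, is justifying that the adversary's strategy really is oblivious — i.e.\ that its decisions depend only on the marginals $x_F$ and not on joint randomness of $A$ — and that the resulting expectation-based reduction matches the standard notion of competitiveness for randomized online algorithms. This is exactly the step formalized in the reduction of~\cite{bienkowski2020unbounded}; once it is invoked, the bound follows immediately from \Cref{lb:unitsquares} without further calculation.
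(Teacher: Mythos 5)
Your proposal is correct and follows exactly the approach the paper indicates: a marginal-probability reduction from a randomized integral online algorithm to a deterministic fractional one, applied to the instance from Corollary~\ref{lb:unitsquares}. The paper itself only cites this as a standard technique from~\cite{bienkowski2020unbounded} without spelling out the steps; you have supplied precisely those standard details, including the union-bound feasibility check, the cost-equals-expectation observation, and the (correct) justification that the adversary's sequence is well-defined and oblivious because it depends only on the deterministic marginals.
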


Since in our lower bound construction, $n=\Theta(m^2)$, $\log n=\Theta(\log m)$ and hence, we have the theorem as stated below.

\begin{theorem}
    Any deterministic or randomized online algorithm for set cover for unit squares has a competitive ratio of $\Omega(\log n)$,
    even if all squares contain the origin and all points are contained in the same quadrant.
\end{theorem}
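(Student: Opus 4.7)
The plan is that this theorem is essentially a repackaging of the two preceding corollaries, and the work consists of (i) combining them and (ii) converting the bound in terms of $m$ (the number of unit squares) into a bound in terms of $n$ (the size of the point universe).

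First, I would invoke the preceding corollary, which already gives an $\Omega(\log m)$ lower bound against randomized online algorithms for integral set cover on an instance with $m$ unit squares. Since every deterministic online algorithm is a special case of a randomized one, the same $\Omega(\log m)$ lower bound applies in the deterministic setting at no extra cost, so it suffices to prove the randomized case.

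Second, I would bound the size $n$ of the point universe from which the adversary draws. Inspecting the construction underlying \Cref{lem:quad_lower} and its unit-square extension, the candidate points are exactly the (scaled) $p_{i,j} = (i-0.5,\, m-j+0.5)$ with $1 \le i \le j \le m$, whose number is $\binom{m+1}{2} = \Theta(m^2)$. Thus any online algorithm, in the model where $n$ denotes the size of the candidate point set (equivalently, an upper bound on the number of introduced points), faces $n = \Theta(m^2)$ and therefore $\log n = \Theta(\log m)$. Combining with the previous paragraph yields the desired $\Omega(\log n)$ competitive ratio lower bound.

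Finally, I would verify the two auxiliary geometric properties asserted by the theorem, namely that all squares contain the origin and all points lie in one common quadrant. For the unit-square instance from \Cref{lb:unitsquares} (after the $1/m$ scaling), square $i$ has corners $\bigl((i-m)/m,\,(1-i)/m\bigr)$ and $\bigl(i/m,\,(m-i+1)/m\bigr)$; both coordinate ranges contain $0$ for every $1 \le i \le m$, so each square contains the origin. Similarly, every scaled point $\bigl((i-0.5)/m,\,(m-j+0.5)/m\bigr)$ has strictly positive coordinates for $1 \le i \le j \le m$, so all points lie in the (open) positive quadrant. There is no real obstacle here: the heart of the argument is already carried out in \Cref{lem:quad_lower} and its two corollaries, and the only thing that changes in the final theorem is the choice of parameter used to express the bound.
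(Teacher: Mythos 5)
Your proposal is correct and follows essentially the same route as the paper: the paper also derives the theorem directly from the preceding randomized-integral corollary via the observation that $n=\Theta(m^2)$, hence $\log n = \Theta(\log m)$. Your explicit verification that every square in the scaled construction contains the origin and every candidate point lies in the positive quadrant is a welcome sanity check that the paper leaves implicit.
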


\section{Online hitting set for squares \label{sec:hit-set-squares}}

We present our online algorithm for hitting set for squares. We assume that we are given a fixed set
of points $P\subseteq[0,N)^{2}$ with integral coordinates. We maintain a set $P'$ of selected points
such that initially $P':=\emptyset$. In each round, we are given a square $S\subseteq[0,N)^{2}$ whose
corners have integral coordinates. 

We assume w.l.o.g.~that $N$ is a power of 2. Let $Q$ be all points with integral coordinates in
$[0,N)^{2}$, i.e., $P\subseteq Q$. 
For each point $q\in Q$ we say that
$q=(q_{x},q_{y})$ \emph{is of level $\ell$ }if both $q_{x}$ and $q_{y}$ are integral multiples of
$N/2^{\ell}$. We build the same quad-tree as in \Cref{sec:Set-cover-squares}.
We say that a cell $C_{v}$ \emph{is of level }$\ell$ if its height and width equal
$N/2^{\ell}$.

We present our algorithm now. Suppose that in some round a new square $S$ is given. If $S\cap
P'\ne\emptyset$ then we do not add any point to $P'$. Suppose now that $S\cap P'=\emptyset$. Let $q$
be a point of smallest level among all points in $Q\cap S$ (if there are many 
such points, then we select an arbitrary point in $Q\cap S$ of smallest level).
Intuitively, we interpret $q$ as if it were the origin and partition the plane into four {\em quadrants}. We define $O_{TR}:=\{(p_{x},p_{y})\mid p_{x}\ge q_{x}\, ,\, p_{y}\ge q_{y}\}$, and
$S_{TR}:=O_{TR}\cap S$, and define similarly $O_{TL},O_{BR},O_{BL}$,  and $S_{TL},S_{BR},S_{BL}$.
Consider $O_{TR}$ and $S_{TR}$. For each level $\ell=0,1,\dots,\log N$, we do the following. Consider
each cell $C$ of level $\ell$ in some fixed order  such that $C\subseteq O_{TR}$ and $S_{TR}$ is edge-covering
for some edge $e$ of $C$. Then, for each edge identify the point $p_b$
($p_t, p_l, p_r$, resp.) in $P\cap C$ that is closest to its bottom (top, left, and right, resp.)
edge. We add these (at most 4) points to our solution if at least one of $p_b,p_t,p_l,p_r$ is contained in $S_{TR}$ (see \Cref{fig:HitSet2D}).
If we add at least one such  point $p$ of the cell $C$ to $P'$ in this way, we say that $C$ gets
\emph{activated}. Note that we add possibly all of the points $p_b,p_t,p_l,p_r$ to $P'$ even though only one may be contained in $S_{TR}$. This is to ensure that $C$ gets activated at most once during a run of the online algorithm. This will be proved in Claim~\ref{claim:activate_once}, which will ultimately help us prove that our algorithm is $O(\log N)$-competitive. If for the current level $\ell$ we activate at least one cell $C$ of level
$\ell$, then we stop the loop and do not consider the other levels $\ell+1,\dots,\log N$. Otherwise,
we continue with level $\ell+1$.  We do a symmetric operation for the pairs ($O_{TL},S_{TL}$),
($O_{BR},S_{BR}$), and ($O_{BL},S_{BL}$).
We now prove  the correctness of the algorithm and that its competitive ratio is $O(\log N)$.

\begin{figure}[!ht]
    \center
    \includegraphics[scale=0.6, page=9]{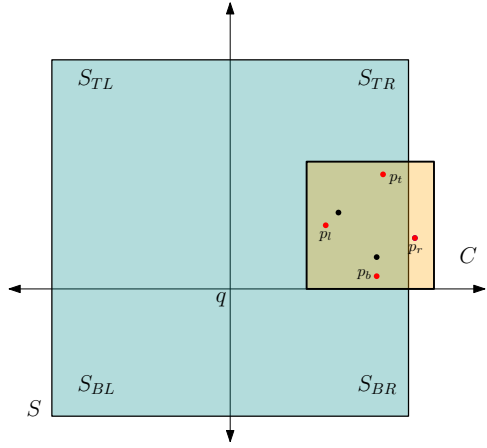}
    \caption{In cell $C$ lying in $O_\text{TR}$ the red points are chosen by the algorithm.}
    \label{fig:HitSet2D}
\end{figure}
\begin{restatable}{lemma}{sqhittingsetone}
\label{lem:sqhittingset_1}
After each round, the set $P'$ is a hitting set for the squares that have been added so far.
\end{restatable}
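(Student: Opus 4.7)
The plan is to argue by induction on the rounds processed so far, with the invariant that after each round $P'$ hits every square that has been inserted. The base case is vacuous; for the inductive step, since the algorithm only ever \emph{adds} points to $P'$, all previously-hit squares remain hit, so it suffices to ensure $S\cap P'\neq\emptyset$ after processing the newly arriving square $S$. If this already held at the start of the round the algorithm makes no changes and we are done, so suppose $S\cap P'=\emptyset$. By feasibility of the instance, some point $p^*\in P\cap S$ exists; since $S_{TR}\cup S_{TL}\cup S_{BR}\cup S_{BL}=S$, we may assume w.l.o.g.~that $p^*\in S_{TR}$, and it then suffices to show that the $O_{TR}$ loop activates some cell whose added points include an element of $S_{TR}\cap P$.

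The key geometric observation I would use is the following: suppose a cell $C$ is considered by the algorithm, i.e., $C\subseteq O_{TR}$ and $S_{TR}$ contains some edge $e$ of $C$. Since $C$ and $S_{TR}$ are both axis-aligned rectangles, $S_{TR}$ must span the entire range of $e$ in the direction parallel to $e$, so $C\cap S_{TR}$ is a sub-rectangle of $C$ that contains the whole of $e$. Consequently, any point of $P\cap C$ that lies in $S_{TR}$ is at least as close to $e$ (in the perpendicular direction) as any point of $P\cap C$ outside $S_{TR}$. In particular, whenever $P\cap C\cap S_{TR}\neq\emptyset$, the corresponding closest-to-$e$ point (one of $p_b,p_t,p_l,p_r$) must itself lie in $S_{TR}$, which triggers the activation condition and causes the algorithm to insert a point of $S_{TR}\cap P\subseteq S$ into $P'$.

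It then remains to exhibit, at some level $\ell$, a cell $C$ at that level satisfying $C\subseteq O_{TR}$, $S_{TR}$ edge-covering some edge of $C$, and $P\cap C\cap S_{TR}\neq\emptyset$. The natural candidate is a leaf cell (level $\log N$) having $p^*$ on its boundary. When $p^*$ lies strictly inside $S_{TR}$ (i.e., not on the top or right edge of $S$), the leaf cell with $p^*$ at its bottom-left corner is contained in $O_{TR}$ and has its bottom and left edges inside $S_{TR}$, so it suffices. The main technical obstacle is the boundary subcase in which $p^*$ lies on the top or right edge of $S$, so that the leaf cell anchored at $p^*$ may have no edge covered by $S_{TR}$. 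I would handle this by instead considering a neighboring leaf cell of which $p^*$ is a corner (for instance, the cell whose top-right corner is $p^*$, which lies entirely in $O_{TR}$ provided $p^*\neq q$), and verifying edge-coverage directly. The fully degenerate subcase $p^*=q$ is dealt with by noting that $p^*$ then lies in all four quadrants simultaneously, so the argument can be repeated in whichever quadrant yields a non-degenerate $S_X$ amenable to the construction above.
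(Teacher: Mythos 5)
Your overall strategy mirrors the paper's proof: locate a point $p^*\in P\cap S$, w.l.o.g.\ in $S_{TR}$, and exhibit a level-$\ell$ cell that the $O_{TR}$ loop activates with a point of $S_{TR}\cap P$. The ``key geometric observation'' you state (whenever a considered cell $C$ has $S_{TR}$ containing an edge $e$ and $P\cap C\cap S_{TR}\ne\emptyset$, the closest-to-$e$ point lands in $S_{TR}$) is correct and usefully makes explicit a step the paper's brief proof leaves implicit. Your main case --- taking the leaf cell with $p^*$ at its bottom-left corner --- is also sound, and you can in fact tighten your case split: if $p^*$ lies on the right edge of $S$ but not the top (or vice versa), the leaf cell anchored at $p^*$ still has its left (resp.\ bottom) edge inside $S_{TR}$ and thus gets considered and activated. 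The only genuinely troublesome configuration is $p^*$ being exactly the top-right corner of $S$.

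Your proposed patch for that corner case, however, does not work as written. Quad-tree cells are half-open, so $p^*$ does \emph{not} belong to the leaf cell whose top-right corner is $p^*$; the same holds for any neighboring leaf cell of which $p^*$ is any corner other than the bottom-left one. Consequently you lose your witness: nothing guarantees that $P\cap C$ (let alone $P\cap C\cap S_{TR}$) is nonempty for that neighboring cell, and an empty $P\cap C$ can never trigger activation. (Also, for the cell with $p^*$ at its top-right corner to lie inside $O_{TR}$ you need $p^*_x>q_x$ \emph{and} $p^*_y>q_y$, which is strictly stronger than $p^*\ne q$.) You would need a genuinely different argument here, e.g.\ identifying another point of $P\cap S$ or a considered cell that provably contains some point of $P\cap S_{TR}$. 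It is worth noting that the paper's own proof asserts in a single sentence that some cell of $\C'_p$ has an edge covered by $S_{TR}$, without addressing this corner configuration, so you have independently located a delicate spot in the argument --- but your proposed resolution of it is not yet a valid proof step.
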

\begin{proof}
    We will prove that in case $S\cap P'=\emptyset$ when a square $S$ is inserted, at least one cell $C$ gets activated. Note that there exists a point $p$ in an optimum hitting set such that $p\in S$. Assume w.l.o.g.~that $p$ belongs to $S_{TR}$. Then, consider the set of cells $\C_p'$ that contain the point $p$. Since $S$ has side-length at least $1$ (it has integral coordinates for the corners) there exists a cell $C'\in\C_p'$ such that $S_{TR}$ covers an edge $e$ of $C'$. Hence, there will exist one level $\ell$ in $\{0,1,...,\log N\}$ such that a cell $C''\subseteq O_{TR}$  of level $\ell$ exists for which $S$ covered its edge (say, the bottom edge $e'$) and $p\in C''\cap S$. Then, our algorithm picked $p_b$ (point closest to the bottom edge) for $C''$, such that $p_b\in C''\cap S$. 
\end{proof}
Now we show that in each round $O(1)$
points are added to $P'$.
\begin{restatable}{lemma}{sqhittingsettwo} \label{lem:round-few-points}
\label{lem:sqhittingset_2}
In each round we add $O(1)$ points to $P'$.
\end{restatable}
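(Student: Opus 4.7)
The plan is to show that each of the four quadrants contributes at most $O(1)$ activated cells in the current round. Since each activated cell adds at most four points to $P'$, this yields the desired $O(1)$ bound. By symmetry I focus on quadrant $O_{TR}$, let $\ell^{*}$ denote the level of $q$, and let $w,h$ denote the width and height of $S_{TR}$.

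First, I would exploit the minimality of $\ell^{*}$: if the side length $s$ of $S$ were at least $N/2^{\ell^{*}-1}=2\cdot N/2^{\ell^{*}}$, then both the $x$- and $y$-ranges of $S$ would contain a multiple of $N/2^{\ell^{*}-1}$, producing a point in $Q\cap S$ of level $\ell^{*}-1$ and contradicting the minimality of $\ell^{*}$. Hence $w,h\leq s<2\cdot N/2^{\ell^{*}}$. A consequence is that for every level $\ell<\ell^{*}$, the dimensions of $S_{TR}$ are smaller than $N/2^{\ell}$ and so $S_{TR}$ cannot contain any full edge of a level-$\ell$ cell; thus no activation occurs at such levels and the first activation level $\ell_{0}$ in $O_{TR}$ satisfies $\ell_{0}\geq\ell^{*}$.

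Next I would establish the key structural claim: at every level $\ell\geq\ell^{*}$, at most one cell of $O_{TR}$ overlaps $S_{TR}$ without being edge-covered by $S_{TR}$, and if such a cell exists then it is the ``top-right corner'' cell at relative position $(\lfloor w/\Delta\rfloor,\lfloor h/\Delta\rfloor)$, where $\Delta=N/2^{\ell}$ and relative coordinates are measured with respect to $q$ (which lies on the level-$\ell$ grid because $\ell\geq\ell^{*}$). This follows from a short case analysis based on whether $w\bmod\Delta$ and $h\bmod\Delta$ are zero; in each case every other overlapping cell contains some length-$\Delta$ edge lying entirely inside $S_{TR}$. Furthermore, using the identity $\lfloor\lfloor 2x\rfloor/2\rfloor=\lfloor x\rfloor$, these ``corner'' cells across successive levels form a parent-child chain in the quad-tree.

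The final ingredient, verified exactly as in the proof of \Cref{lem:sqhittingset_1}, is that whenever $S_{TR}$ edge-covers a cell $C'$ and $P\cap C'\cap S_{TR}\neq\emptyset$, the cell $C'$ must be activated: if for instance $S_{TR}$ contains the bottom edge of $C'$, then $C'$'s $x$-range lies inside that of $S_{TR}$, so for any $p\in P\cap C'\cap S_{TR}$ we get $y(p_{b}(C'))\leq y(p)\leq q_{y}+h$, placing $p_{b}(C')$ inside $S_{TR}$. Combining this with the structural claim and the fact that no cell is activated at any level $\ell<\ell_{0}$, every point of $P\cap S_{TR}$ is forced to lie in the unique non-edge-covered chain cell at each level $\ell\in[\ell^{*},\ell_{0}-1]$; in particular at level $\ell_{0}-1$ it lies in a single cell $C^{\star}$. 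Any cell activated at level $\ell_{0}$ must therefore contain a point of $P\cap S_{TR}$ and thus be one of the at most four children of $C^{\star}$, giving at most four activated cells per quadrant and $O(1)$ new points per round overall. The hardest step is the structural characterization of the unique non-edge-covered cell per level and verification of its parent-child chain property; once that is in hand, the rest of the argument is routine.
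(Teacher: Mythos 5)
Your approach is the same high-level strategy as the paper's (bound the number of activated cells per quadrant, with at most four new points per activated cell), but where the paper dispatches this in one line---``we just observe that in each of the four quadrants we activate at most $1$ cell''---you supply the actual argument, and your bound (at most $4$ cells per quadrant, so at most $64$ points) is in fact the one that the algorithm as written yields: the level loop terminates only after the first activating level has been fully scanned, so several cells of that level can be activated in the same round. For instance with $q$ at the origin, $S_{TR}=[0,3]\times[0,3]$ and $\Delta_{\ell_0}=2$, the three children $[0,2)\times[0,2)$, $[2,4)\times[0,2)$, $[0,2)\times[2,4)$ of the level-$(\ell_0{-}1)$ corner cell $[0,4)\times[0,4)$ are all edge-covered by $S_{TR}$ and could each contain a point of $P\cap S_{TR}$, so the paper's ``at most $1$'' does not hold as stated. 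Your chain of observations---minimality of $\ell^*$ forces $w,h<2N/2^{\ell^*}$; hence no edge-covered cells below level $\ell^*$; at each level $\ell\ge\ell^*$ at most one cell of $O_{TR}$ overlaps $S_{TR}$ without being edge-covered, located at $(\lfloor w/\Delta\rfloor,\lfloor h/\Delta\rfloor)$; these cells form an ancestor chain via $\lfloor\lfloor 2x\rfloor/2\rfloor=\lfloor x\rfloor$; in levels below $\ell_0$ every $p\in P\cap S_{TR}$ is trapped in this chain, so the activated cells at level $\ell_0$ are among the four children of $C^\star$---is correct and is a genuine strengthening of the paper's proof.

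One small hole: the argument implicitly assumes $\ell_0>\ell^*$. When $\ell_0=\ell^*$ the interval $[\ell^*,\ell_0{-}1]$ is empty, the cell $C^\star$ is undefined (at level $\ell^*{-}1$ the point $q$ is no longer a grid point, so the uniqueness of the corner cell does not apply), and the ``children of $C^\star$'' step has nothing to act on. This is easy to patch: at level $\ell^*$ itself the bound $w,h<2N/2^{\ell^*}$ already confines $S_{TR}$ to the $2\times 2$ block of level-$\ell^*$ cells of $O_{TR}$ with corner at $q$, so at most four cells can be activated directly. With that base case added, the proof is complete.
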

\begin{proof}
We show that given a square $S$ such that $S\cap P'=\emptyset$ ($P'$ is the hitting set maintained by our algorithm), our algorithm activates at most $4$ cells. For this, we just observe that in each of the four quadrants $O_{TR}, O_{BR}, O_{TL}, O_{BL}$, we activate at most $1$ cell. For each of these cells, we pick at most $4$ points and hence, we add at most $16$ points in any round.
\end{proof}
Denote by $\OPT$ the optimal solution after the last round of inserting a square.
\begin{restatable}{lemma}{fewrounds} \label{lem:few-rounds}
    Let $p\in\OPT$. Then there are $O(\log N)$ rounds in which a square $S$ with $p\in S$ was
    inserted, such that at the beginning of the round  $P'\cap S=\emptyset$.
\end{restatable}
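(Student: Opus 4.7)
The plan is to prove the lemma by combining two facts: first, each cell of the quad-tree is activated at most once throughout the entire run of the algorithm, and second, for each $p \in \OPT$, only $O(\log N)$ cells can ever be activated in a round where a square $S$ containing $p$ arrives with $S \cap P' = \emptyset$. Multiplying these two bounds yields the desired $O(\log N)$ bound on the number of $p$-bad rounds.

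For the first fact, I would invoke Claim~\ref{claim:activate_once}, whose proof follows the sketch given just before the lemma: once a cell $C$ is activated, all of its up-to-four guards $p_b, p_t, p_l, p_r$ enter $P'$; any subsequent attempt to activate $C$ would, by the algorithm's rule, require at least one guard of $C$ to lie in the new $S_d \subseteq S$, but this guard is already in $P'$, contradicting the bad-round precondition $S \cap P' = \emptyset$. For the second fact, I would define a candidate set $\mathcal{K}_p$ consisting, for each level $\ell \in \{0,1,\ldots,\log N\}$, of the level-$\ell$ cell $C_p^\ell$ containing $p$ together with its $O(1)$ grid-neighbors at level $\ell$, so that $|\mathcal{K}_p| = O(\log N)$.

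The heart of the argument is then to show that in every $p$-bad round, the cell activated in the quadrant containing $p$ lies in $\mathcal{K}_p$. Fix such a round with pivot $q$ and, w.l.o.g., $p \in S_{TR}$; let $\ell^*$ be the activation level in $O_{TR}$ and $C$ be the activated cell. Two immediate geometric facts are: (a) both $p$ and the guard of $C$ lie in $S$, so they are within diameter $\sqrt{2}\,s_S$ of each other, and (b) $S$ edge-covers $C$, giving $s_S \geq N/2^{\ell^*}$. The crucial extra ingredient is the minimality of $\ell^*$: since no activation occurs at level $\ell^*-1$, the cell $C_p^{\ell^*-1}$ (if it lies in $O_{TR}$) cannot be edge-covered by $S_{TR}$, for otherwise $p \in S_{TR} \cap C_p^{\ell^*-1}$ would force some guard of $C_p^{\ell^*-1}$ into $S_{TR}$ (because any closest-to-edge guard is at least as close to its edge as $p$ is), yielding an earlier activation and contradicting minimality. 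Unpacking this constrains $s_{S_{TR}}$ to be within an $O(1)$ factor of $N/2^{\ell^*}$, and combined with (a) and (b) it places $C$ within $O(1)$ level-$\ell^*$ cells of $C_p^{\ell^*}$, i.e., $C \in \mathcal{K}_p$. The main obstacle I anticipate is making this minimality-based case analysis fully rigorous, especially in the sub-case where $C_p^{\ell^*-1}$ fails to be contained in $O_{TR}$ (which can happen when $\ell^*-1$ is strictly smaller than the level $\ell_q$ of the pivot $q$), and in boundary situations where $p$ lies on the shared boundary of multiple quadrants.
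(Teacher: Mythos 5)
Your high-level plan is the same as the paper's: combine an activate-at-most-once claim for cells with an $O(\log N)$-size candidate set $\mathcal{K}_p$ that is guaranteed to absorb every activation caused by a $p$-bad round. Your justification for the activate-once claim matches the paper's almost verbatim, and your construction of $\mathcal{K}_p$ (the cell containing $p$ at each level plus $O(1)$ neighbours) is in the same spirit as the paper's primary/secondary cells, though stated more loosely.

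The genuine gap is in the second half: the diameter argument you sketch does not actually pin the activated cell into $\mathcal{K}_p$. You want to argue that the minimality of $\ell^*$ forces $s_{S_{TR}}$ to be within an $O(1)$ factor of $N/2^{\ell^*}$, and then conclude by a distance bound. But the minimality only yields that $S_{TR}$ does not edge-cover $C_p^{\ell^*-1}$ (or $C_p^{\ell^*-1}\not\subseteq O_{TR}$); this fails to upper-bound $s_{S_{TR}}$. Concretely, if the pivot $q$ lies strictly inside $C_p^{\ell^*-1}$, then no matter how large $S_{TR}$ is, its corner at $q$ cuts into all four sides of $C_p^{\ell^*-1}$ and $S_{TR}$ cannot contain a full edge of that cell. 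In that situation $s_{S_{TR}}$ can be arbitrarily large relative to $N/2^{\ell^*}$, so the ``within diameter'' step gives nothing, and the activated cell $C$ could in principle lie far from $C_p^{\ell^*}$. You flag this as an anticipated obstacle, but it is not merely a technicality to clean up; it is exactly the case that defeats the metric route. The paper avoids this trap by working combinatorially with the level $\ell'$ of the pivot $q$ (which, by its minimality in $Q\cap S$, is at most the level of the activated cell), and by splitting on whether the horizontal distance from $q$ to the boundary of the parent $C'$ of $C_p^{\ell^*}$ is $\geq N/2^{\ell^*-1}$ or strictly less. That case split is what ties the activated cell either to a lower-level contradiction (case one) or to a secondary cell in $\C_p$ (case two). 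Unless you replace the size-bound step with an argument of that kind, the proof does not go through.
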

\begin{proof}
    First, define horizontal distance between two cells of level $\ell$ to be the distance between the 
    $x$-coordinates of their left edges. Analogously, define the vertical distance.  Now, we define a set of cells $\C_{p}$ corresponding to the point $p$,
    initialized to $\emptyset$. We will show later that in a certain round if a square $S$ introduced by the adversary contains $p$, and our algorithm activates at least one cell, then one cell in $\C_p$ is also activated. For each level $\ell\in\{0,1,\dots,\log N\}$, include in the set $\C_{p}$: 
    the cell $C$ of level $\ell$ containing $p$ and the other cells of level $\ell$ if they exist which have the same
    parent as $C$. We call these cells to be {\em primary} cells. Further, for level $\ell\in\{0,1,\dots,\log N\}$ consider the cell $C$ of level
    $\ell$ which contains $p$. Then, consider all the cells of level $\ell$ which are at a distance of
    $N/2^{\ell}$ horizontally but at a distance of $0$ vertically; also, consider cells which are at a distance of
    $N/2^{\ell}$ vertically but at a distance of $0$ horizontally from $C$.
    There can be at most $4$ such cells. For each such cell, include all its children in $\C_p$ (which
    are $4$ in number). We call these cells to be {\em secondary} cells. Hence, per level we select at most $4+4\times4=20$ cells. Therefore, $|\C_{p}|=O(\log N)$.

    \begin{figure}
        \centering
        \includegraphics[page=16,scale=0.75]{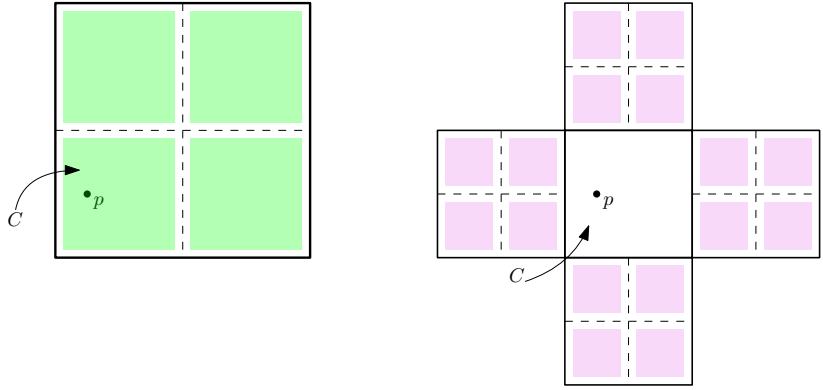}
        \caption{Cell $C$ contains point $p$. The primary cells are highlighted in green,
        while the secondary cells are highlighted in pink.}
        \label{fig:prim-sec-cells}
    \end{figure}

    We first observe that once a cell $\hat{C}$ gets activated, it does not get activated again.

    \begin{claim}
        A cell $\hat{C}\in T$ does not get activated more than once by our algorithm.
        \label{claim:activate_once}
    \end{claim}
    \begin{claim}
        Assume for contradiction that $\hat{C}$ was already activated in a previous round for a square $S'$.
        In the current round,   the square $S$ was introduced by the adversary such that $S\cap
        P'=\emptyset$. If $\hat{C}$ gets activated again, $S$ covered an edge of $\hat{C}$ (assume this is
        the bottom edge w.l.o.g.). If $\hat{C}$ was already activated in a previous round, then the
        algorithm must have picked points closest to the bottom, top, left, and right edges of $\hat{C}$ and
        at least one hit the square. Among these points, denote the point picked closest to the bottom edge
        by $p'$. If $\hat{C}$ was activated again for $S$, it clearly contained at least one point that hit
        $S$ by definition. Then, $p'$ would have hit $S$ since it had the lowest $y$-coordinate in
        $\hat{C}$.
    \end{claim}

    We now want to show that if a square $S$ is inserted in some round where $S\cap P'=\emptyset$ but $p\in
    S_{TR}$, then one cell in $\C_{p}$ gets activated
    but no cell $\hat{C}\notin\C_{p}$ with $\hat{C}\subseteq O_{TR}$ gets activated. 
    Once we prove this, observe
    that in each such round we activate one cell in $\C_p$. Then, by using the above claim that no cell in $\C_p$
    gets activated again in such a round, we are guaranteed that after $|\C_p|$ rounds, every square
    $S'$ introduced by the adversary which contained $p$ was hit by at least one point in the hitting
    set that the algorithm maintained.
    
    Then we do a symmetric argumentation for the cases that $p\in S_{TL}$, $p\in S_{BR}$, and $p\in S_
    {BL}$, each of them yielding the fact that if a square $S$ is added
    with $S\cap P'=\emptyset$ but $p\in S$, some cell among the cells in $\C_p$ gets activated. Thus, there can be only $|\C_p|=O(\log N)$ such rounds.
    Therefore, finally it remains to prove the claim.
    \begin{claim}
    \label{claim:few-rounds}
    If a square $S$ is inserted in some round where $S\cap P'=\emptyset$ but $p\in
    S_{TR}$, then one cell in $\C_{p}$ gets activated
    but no cell $\hat{C}\notin\C_{p}$ with $\hat{C}\subseteq O_{TR}$ gets activated.
    \end{claim}
    \begin{claim}
    Denote the level of $q$ which was one of the points at the smallest level  among points in $Q\cap S$ to be $\ell'$.
    Assume by contradiction that a cell $\hat{C}\notin\C_{p}$ with $\hat{C}\subseteq O_{TR}$ gets
    activated and $S$ w.l.o.g. covered its bottom edge. Let $\ell$ be the level of $\hat{C}$. Let $C$ be
    the cell of level $\ell$ containing $p$ and let $C'$ be its parent (which is at level $\ell-1$).  By the construction, we
    know that $\hat{C}\cap C'=\emptyset$ and hence, the parent of $\hat{C}$ is not $C'$. Then, the
    parent of $\hat{C}$ (denote by $C''$) and $C'$ are level $\ell-1$ cells at a distance  at least
    $N/2^{\ell-1}$, either horizontally or vertically. Assume w.l.o.g. that $C'$ is to the right side of
    $C''$. 
    By our assumption, right edge of $\hat{C}$ does not lie to the
    right side of the left edge of $C'$ (could coincide). 
    Any of the corners of $\hat{C}$ are points at level at most $\ell$. Then, we know
    that the level of $q$, which was $\ell'$ is at most $\ell$. 
    Now there are two cases.
    
    In the first case, the horizontal distance between $q$
    and the left edge of $C'$ is at least $N/2^{\ell-1}$ (see \Cref{fig:HittingSet-proof}(a)). Then $\ell'\leq \ell-1$ since then $S_{TR}$ covers
    the left bottom corner of $C'$. In this case, by our assumption $S_{TR}$ covers the bottom edge of
    $C''$ which also contains at least one point that hits it. This is a contradiction on the level of
    the activated cell in this round since $C''$ has level $\ell-1$. 

    In the other case,  the horizontal distance
    between $q$ and the left edge of $C'$ is strictly less than $N/2^{\ell-1}$ (see \Cref{fig:HittingSet-proof}(b)). In this case, $q$ is again at
    level exactly $\ell'=\ell-1$. Then, the right edge of $\hat{C}$ coincides with left edge of $C'$.
    Therefore, $C''$ is at a distance of exactly $N/2^{\ell-1}$ to the left of $C'$ and should have been added as a secondary cell. Hence,
    $\hat{C}\in\mathcal{C}_p$. This is a contradiction. 
    \end{claim}
    This completes the proof of the lemma. 
\end{proof}

\begin{figure}[ht]
 \centering
 \includegraphics[page=15,scale=.7]{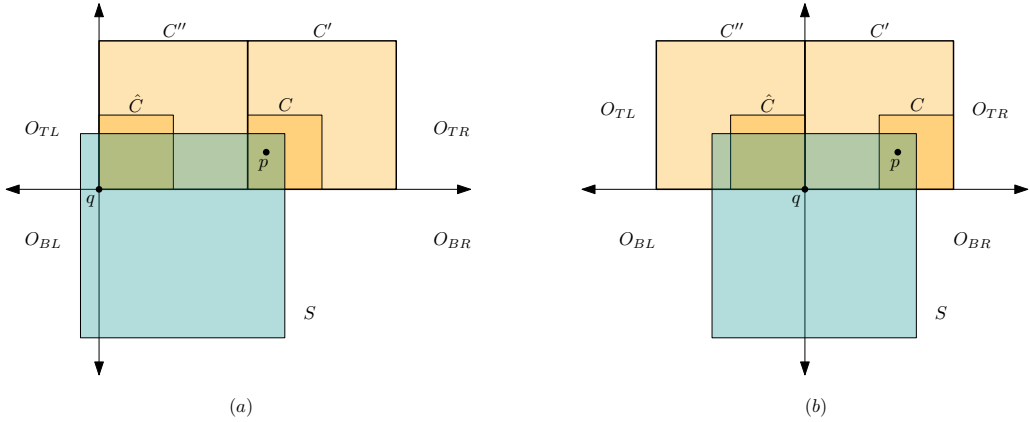}
 \caption{Proof of Claim~\ref{claim:few-rounds}. (a) Either, the distance between $C''$ and $C'$ is large, in which case $S$ is edge-covering for $C''$, or (b) $\hat{C}\in\C_p$.}
\label{fig:HittingSet-proof}
 \end{figure}


%
Hence, \Cref{lem:round-few-points} and \Cref{lem:few-rounds} imply that for each point $p\in\OPT$ we
add $O(\log N)$ points to $P'$. Thus, our competitive ratio is $O(\log N)$.
\begin{restatable}{theorem}{squarehittingsettwo}
\label{lem:squareshittingset_2}
    There is an $O(\log N)$-competitive deterministic online algorithm for hitting set for axis-parallel squares of arbitrary sizes. 
\end{restatable}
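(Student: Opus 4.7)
The plan is to directly combine \Cref{lem:sqhittingset_1}, \Cref{lem:round-few-points}, and \Cref{lem:few-rounds}. Feasibility of the maintained set $P'$ as a hitting set at every round is guaranteed by \Cref{lem:sqhittingset_1}, so only the competitive ratio needs a separate argument.

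For the competitive ratio, I would partition the rounds of the algorithm into two types: (i) rounds in which $S \cap P' \ne \emptyset$ at the start of the round, so that the algorithm adds no points; and (ii) rounds in which $S \cap P' = \emptyset$, which are the only rounds where $P'$ grows. For each type-(ii) round, since $\OPT$ is a hitting set for all inserted squares, there is at least one $p \in \OPT \cap S$. I would charge the at most $O(1)$ points added in that round (by \Cref{lem:round-few-points}) to any such $p$, breaking ties arbitrarily.

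The key step is to bound the total charge received by any fixed $p \in \OPT$. By \Cref{lem:few-rounds}, there are only $O(\log N)$ rounds in which a square $S$ is inserted with $p \in S$ and $P' \cap S = \emptyset$ at the start of that round. Hence each $p \in \OPT$ receives charge at most $O(\log N) \cdot O(1) = O(\log N)$. Summing the charges over all $p \in \OPT$ yields $|P'| \le O(\log N) \cdot |\OPT|$ after the last round, which is the claimed competitive ratio.

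The real substance of the argument is already contained in \Cref{lem:few-rounds} (specifically, the delicate construction of the $O(\log N)$ primary and secondary cells $\mathcal{C}_p$ associated to each point $p \in \OPT$ and the case analysis showing that any activated cell $\hat{C} \subseteq O_{TR}$ must already lie in $\mathcal{C}_p$). Given those lemmas, the theorem itself reduces to a one-line charging argument, so I do not anticipate any technical obstacle at this final stage.
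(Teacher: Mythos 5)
Your proposal is correct and takes essentially the same approach as the paper: the paper also concludes from \Cref{lem:round-few-points} and \Cref{lem:few-rounds} that each point of $\OPT$ is charged $O(\log N)$ points added to $P'$, with feasibility coming from \Cref{lem:sqhittingset_1}. Nothing is missing.
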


This is tight, as even for intervals, Even et al.~\cite{EvenS11} have shown an $\Omega(\log
N)$ lower bound.  

\section{Dynamic set cover for $d$-dimensional hyperrectangles}
\label{sec:set-cover-hyperrectangles} 

In this section, we will design an algorithm to dynamically maintain 
an approximate set cover for $d$-dimensional hyperrectangles.
 The main result we prove in 
this section is the following.

\begin{theorem}\label{thm:dyn-set-cover}
After performing a pre-processing step which takes 
$O(m\log^{2d}m)$ time, there is an algorithm for dynamic set cover for $d$-dimensional hyperrectangles 
with an approximation factor of $O(\log^{4d-1}m)$
 and an update time 
of $O(\log^{2d+2}m)$. 
\end{theorem}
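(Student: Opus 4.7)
}

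The plan is to reduce the geometric dynamic set cover instance to a general (non-geometric) dynamic set cover instance whose frequency is polylogarithmic, and then to invoke the dynamic algorithm of Bhattacharya et al.~\cite{bhattacharya2021dynamic} as a black box. The reduction proceeds in three stages. First, I would use a standard lifting to replace each $d$-dimensional hyperrectangle $R=[a_1,b_1]\times\cdots\times[a_d,b_d]$ by a $2d$-dimensional hypercube in the space of $(a_1,-b_1,\ldots,a_d,-b_d)$ (and each query point $p\in\R^{d}$ by its corresponding ``dominance point''), so that containment $p\in R$ becomes dominance in $\R^{2d}$; after rescaling and snapping to a grid, we may assume every hypercube has integral corners in $[0,4m]^{2d}$. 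This preliminary step is standard and only increases the dimension by a factor of two.

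Second, I would build the \emph{extended quad-tree} described in the introduction. At the top level, this is the standard $2d$-dimensional quad-tree of depth $O(\log m)$ over $[0,4m]^{2d}$. At each cell $C$ and for each dimension $i\in[2d]$, I collect the hypercubes that ``edge-cover'' $C$ along dimension $i$ (i.e., their intersection with $C$ spans all of $C$ in dimension $i$) and recurse, building a $(2d-1)$-dimensional quad-tree on the projection of $C$ away from coordinate $i$; this is repeated until the recursion bottoms out at one-dimensional cells. A straightforward induction on the number of dimensions remaining shows that the depth of this nested structure is $O(\log^{2d} m)$, and, crucially, each input point lies in at most $O(\log^{2d-1} m)$ one-dimensional cells of the extended quad-tree. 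I would then define the auxiliary set cover instance as follows: the universe is the set of input points, and for each one-dimensional cell $C'$ of the extended quad-tree and each of its two endpoints $q$, I introduce one ``virtual set'' equal to $H\cap C'$, where $H$ is the hypercube with the largest intersection with $C'$ among all hypercubes containing $q$ (with ties broken canonically). Because each point lies in $O(\log^{2d-1} m)$ one-dimensional cells and each such cell contributes only two sets containing it, the frequency of the auxiliary instance is $f=O(\log^{2d-1} m)$.

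Third, I would establish the key structural lemma: there is a feasible solution to the auxiliary instance of size at most $O(\log^{2d} m)\cdot|\OPT|$, where $\OPT$ is the optimal geometric set cover. The argument is a $2d$-dimensional generalization of the corner-charging in \Cref{lem:sqoffline}: for each hypercube $S\in\OPT$ and each of its $2^{2d}$ corners, I charge the auxiliary sets along the branch of the extended quad-tree that contains that corner. The edge-covering structure of the extended quad-tree ensures that whenever some uncovered point of $P\cap S$ survives down to a one-dimensional cell $C'$, either $S$ is one of the candidate hypercubes used to define a virtual set at an endpoint of $C'$, or a corner of $S$ lies in a cell on the path to $C'$ and can be charged; since the total depth is $O(\log^{2d} m)$ and each corner touches only that many cells, each $S\in\OPT$ contributes at most $O(\log^{2d} m)$ auxiliary sets to a feasible cover.

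Combining these pieces, I feed the auxiliary instance into the dynamic general set cover algorithm of \cite{bhattacharya2021dynamic}, which maintains an $O(f)$-approximate solution with worst-case update time $O(f\log^{2}(n))$. When a point $p$ is inserted or deleted geometrically, I locate the $O(\log^{2d-1} m)$ one-dimensional cells of the extended quad-tree containing $p$ (in $O(\log^{2d} m)$ time using precomputed pointers), and issue one insertion/deletion to the auxiliary dynamic set cover per such cell. The total update time is $O(\log^{2d-1} m)\cdot O(\log^{2} m)=O(\log^{2d+1} m)$, and absorbing the cost of the extended-quad-tree lookup yields the claimed $O(\log^{2d+2} m)$; the approximation factor is $f\cdot O(\log^{2d} m)=O(\log^{4d-1} m)$, while the one-time preprocessing to build the extended quad-tree and the initial virtual-set lists takes $O(m\log^{2d} m)$.

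The main obstacle I anticipate is the structural lemma bounding the auxiliary $\OPT$ by $O(\log^{2d} m)\cdot|\OPT|$: the naive charging from the two-dimensional proof does not directly port to higher dimensions because, as \Cref{fig:CounterEx3D} illustrates, an optimal hypercube may touch a cell only along a low-dimensional face with no corner inside, so one must chase the recursion through all $2d$ levels of edge-covering substructures and argue that the argument can be applied inductively to a $(2d-1)$-dimensional residual problem inside each edge-covered cell. Getting the book-keeping right so that each $\OPT$ corner is charged only $O(\log^{2d} m)$ times (rather than some much larger polynomial in $\log m$) is where the technical work concentrates.
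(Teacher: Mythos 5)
Your high-level plan mirrors the paper's: lift $d$-dimensional hyperrectangles to $2d$-dimensional hypercubes, build the extended quad-tree, use it to define an auxiliary bounded-frequency set-cover instance, and run Bhattacharya et al.\ as a black box. The pieces are in place, but the two key quantitative claims have their exponents swapped, and this is not a benign typo --- it hides a real gap. First, the frequency of the auxiliary instance is $O(\log^{2d}m)$, not $O(\log^{2d-1}m)$ as you claim. Already for $d=1$ (so $2d=2$), a point $p$ lies in $O(\log m)$ cells of the top-level 2-D quad-tree, and for each such cell each of the two secondary 1-D structures has $O(\log m)$ nodes relevant to $p$, giving $O(\log^2 m)=O(\log^{2d}m)$ one-dimensional cells and hence frequency $O(\log^2 m)$; in general the recurrence is $\tau(d')=O(d'\log m)\cdot\tau(d'-1)$ with $\tau(2)=O(\log^2 m)$, yielding $\tau(2d)=O(\log^{2d}m)$.

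Second, and this is the substantive issue: the ``cheap solution'' lemma must show that every $S\in\OPT$ is covered by $O(\log^{2d-1}m)$ sets of $\hat\S$, \emph{not} $O(\log^{2d}m)$. Your corner-charging sketch naturally gives the weaker bound $O(\log^{2d}m)$ (each of the $O(1)$ corners of $S$ lies in $O(\log^{2d}m)$ cells), and combined with the corrected frequency $O(\log^{2d}m)$ this would only prove an $O(\log^{4d}m)$-approximation --- one $\log$ factor worse than the theorem. The paper gets the tighter $O(\log^{2d-1}m)$ by a different, covering-style induction: $S$ equals the union of $S\cap C_v$ over the $O(\log m)$ nodes $v$ of the top-level tree where $S$ is $1$-long, and for each such $v$ the projection of $S$ is an interval assigned to at most two secondary nodes and is dominated by two maximal intervals there; this gives $\mu(2)=O(\log m)$ and the recurrence $\mu(2d)=O(\log m)\cdot\mu(2d-1)$, hence $\mu(2d)=O(\log^{2d-1}m)$. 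That ``assigned to smallest-depth node hitting an endpoint'' bookkeeping, which your auxiliary-set construction omits, is precisely what allows one power of $\log m$ to be saved. So as written your plan would produce $O(\log^{4d}m)$; to reach $O(\log^{4d-1}m)$ you need the sharper covering argument for the structural lemma, not corner-charging.
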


Our goal is to adapt the quad-tree based algorithms designed 
in the previous sections of the paper. As a first step towards that, 
we  transform the problem such that 
the points and hyperrectangles in $\IR^d$ get transformed to points and {\em hypercubes}  
in $\IR^{2d}$, 
and the new problem is to cover the points in $\IR^{2d}$ with these hypercubes.
As discussed in the introduction, a simple $2d$-dimensional quad-tree on the 
hypercubes does not suffice for our purpose. We augment the 
quad-tree in two
ways: (a) at each node, we collect 
the hypercubes which are edge-covering w.r.t. that node and ``ignore'' 
that dimension in which they are edge-covering, and (b) recursively 
construct a $(2d{-}1)$-dimensional quad-tree on these hypercubes 
based on the remaining $2d{-}1$ dimensions. We call this new 
structure an {\em extended quad-tree}. The nice feature we obtain 
is that  any point in $\IR^{2d}$ will belong to 
only $O(\log^{2d}m)$ cells in the extended quad-tree. 
Furthermore, at the $1$-dimensional cells of the extended quad-tree, for each cell we
will 
identify
$O(1)$ ``most useful'' hypercubes.
This ensures that 
any point belongs to only $O(\log^{2d}m)$ most useful hypercubes. As a result, a ``bounded 
frequency'' set system can be constructed 
with the most useful hypercubes. The
 dynamic algorithm from Bhattacharya {\em et al.}~\cite{bhattacharya2021dynamic} (for general set cover) works efficiently 
 on bounded frequency set systems and 
 applying it in our setting leads to
 an $O(\log^{4d-1}m)$-approximation 
algorithm.

\subsection{Transformation to hypercubes in $\IR^{2d}$.} Recall that the input is a set $P$ of points and $\S$ is a collection 
of hyperrectangles in $\IR^d$. The first step of the 
algorithm is to transform the hyperrectangles in $\S$ to hypercubes in $\IR^{2d}$. 
Consider a hyperrectangle $S \in \S$ with $a=(a_1,\ldots,a_d)$ 
and 
$b=(b_1,\ldots, b_d)$ being the ``lower-left'' and the ``upper-right'' corners of $S$, 
respectively.
Let $\Delta=\max_{j=1}^d (b_j-a_j)$.
Then $S$ is transformed to a hypercube $S'$ in $\IR^{2d}$ with side-length 
$\Delta$ and ``top-right'' corner $(-a_1,-a_2,\ldots,-a_d,b_1,b_2,\ldots,b_d)$.
Let $\S'$ be the collection of these $m$ transformed hypercubes.
Let $P'$ be the set of $n$ points in $\IR^{2d}$ obtained by transforming 
each point $p=(p_1,\ldots,p_d)\in P$ to $p'=(-p_1,\ldots,-p_d,p_1,\ldots,p_d)$.

\begin{figure}[h]
 \begin{center}
      \includegraphics[scale=1]{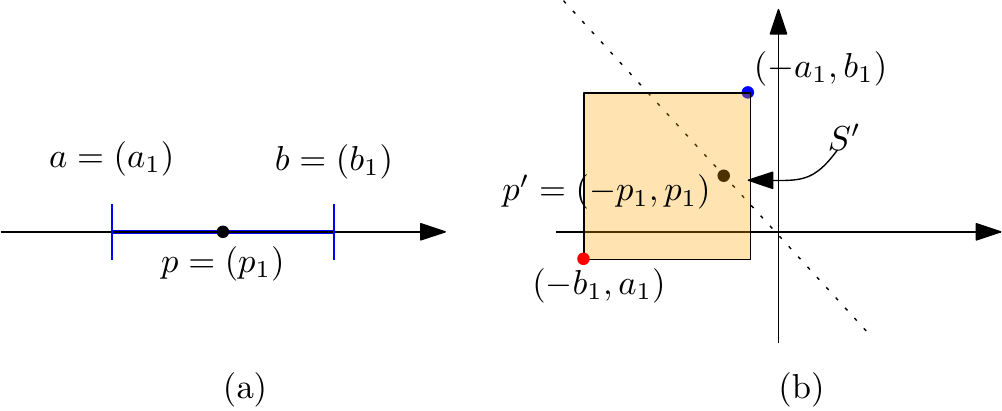}
 \end{center}
 \caption{(a) A point $p$ in 1-D lying inside an interval $S=[a_1,b_1]$, and 
 (b) the transformation of $p$ into a point $p'=(-p_1,p_1)$, 
 and the transformation of $S$ into a 
 square $S'$ in 2-D.}
 \label{fig:set-cover-dual}
\end{figure}

\begin{observation}
A point $p=(p_1,\ldots,p_d)$ lies inside $S$ if and only if 
$p'=(-p_1,\ldots,-p_d,p_1,\ldots,p_d)$ lies inside $S'$.
\end{observation}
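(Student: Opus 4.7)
The plan is to unfold both sides of the biconditional into explicit coordinate-wise inequalities and then use the definition $\Delta = \max_j(b_j - a_j)$ to show they coincide. This is a direct verification; there is no genuine obstacle, but a little care is needed because $S$ is a hyperrectangle rather than a cube, so the side length $\Delta$ of $S'$ is generally strictly larger than $b_j - a_j$ in most coordinates $j$.

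First I would write $S = \prod_{j=1}^d [a_j, b_j]$, so that $p \in S$ is equivalent to $a_j \le p_j \le b_j$ for every $j \in \{1,\dots,d\}$. Next, since $S'$ is the axis-parallel hypercube in $\IR^{2d}$ of side length $\Delta$ whose top-right corner is $(-a_1,\ldots,-a_d,b_1,\ldots,b_d)$, its bottom-left corner must be $(-a_1-\Delta,\ldots,-a_d-\Delta,\,b_1-\Delta,\ldots,b_d-\Delta)$. Therefore
$$S' \;=\; \prod_{j=1}^d [-a_j - \Delta,\; -a_j] \;\times\; \prod_{j=1}^d [b_j - \Delta,\; b_j].$$

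Then I would substitute $p' = (-p_1,\ldots,-p_d,p_1,\ldots,p_d)$ into this product and read off the $2d$ inequalities characterizing $p' \in S'$. The first $d$ of them are $-a_j - \Delta \le -p_j \le -a_j$, i.e.\ $a_j \le p_j \le a_j + \Delta$, and the last $d$ are $b_j - \Delta \le p_j \le b_j$. Combining them, $p' \in S'$ holds iff $\max(a_j,\, b_j-\Delta) \le p_j \le \min(a_j+\Delta,\, b_j)$ for every $j$.

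Finally, I would invoke the defining property $\Delta = \max_{k}(b_k - a_k) \ge b_j - a_j$, which for each $j$ gives both $b_j - \Delta \le a_j$ and $b_j \le a_j + \Delta$. Thus the inner maximum and minimum collapse to $a_j$ and $b_j$ respectively, and the $2d$ inequalities reduce exactly to $a_j \le p_j \le b_j$ for all $j$, which is the condition $p \in S$. This proves both directions simultaneously.
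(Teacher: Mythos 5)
Your proof is correct and follows essentially the same coordinate-wise verification as the paper, hinging on the same key fact that $\Delta \ge b_j - a_j$ forces $b_j - \Delta \le a_j$ and $a_j + \Delta \ge b_j$. The only cosmetic difference is that you carry out the two directions simultaneously as a chain of equivalences, whereas the paper proves the forward implication and then separately argues the contrapositive of the reverse.
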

\begin{proof}
Assume that $p$ lies inside $S$.
Consider the $i$-th coordinate of $p'$ with $i\leq d$. 
Since $b_i -a_i \leq \Delta$ implies that $ -a_i -\Delta \leq -b_i$, 
we observe that $-a_i - \Delta \leq -b_i \leq -p_i  \leq -a_i$.
Therefore, for all $1\leq i \leq d$, we have $-a_i-\Delta \leq -p_i \leq -a_i$.
 
Now consider the $i$-th coordinate of $p'$ with $i >d$. 
Since $b_i -a_i \leq \Delta$ implies that $ b_i -\Delta \leq a_i$, 
we observe that $b_i - \Delta \leq a_i \leq p_i \leq b_i$.
Therefore, for all $d+1\leq i \leq 2d$, we have $b_i-\Delta \leq p_i \leq b_i$.
Thus, we claim that if a point $p$ lies inside $S$, then $p'$ will lie inside 
$S'$. 

It is easy to prove the other direction. If $p$ lies outside $S$, then there is 
at least one coordinate (say $i$) in which either $p_i <a_i$ or $p_i >b_i$. 
If $p_i < a_i$, then $-p_i > -a_i$ and hence, $p'$ lies outside $S'$. 
On the other hand, if $p_i>b_i$, then again $p'$ lies outside $S'$.
\end{proof}

By a standard rank-space reduction we can assume that the corners of the hyperrectangles 
in $\S$ lie on the grid $[0,2m]^{d}$. After applying the above transformation, 
we note that the coordinates of the corners of the hypercubes in $\S'$ will 
lie on the grid $[-4m,0]^{d}\times [-2m,2m]^d$: 
trivially, $\Delta + a_i \leq 4m$, and hence $-4m \leq -a_i -\Delta 
\leq -a_i \leq 0$. Also, $-2m \leq b_i-\Delta \leq b_i\leq 2m$. 
After performing a suitable shifting of the grid, we will assume that all corners 
of the hypercubes in $\S'$ will lie on the grid $[0,4m]^{2d}$. 

\subsection{Constructing a bounded frequency set system.} 
We will now present a technique to select  a set $\hat{\S} \subseteq \S'$
with the following properties:
 \begin{enumerate}
 \item ({\em Bounded frequency}) Any point in $P'$ lies inside $O(\log^{2d}m)$ hypercubes in $\hat{\S}$.
 \item An $\alpha$-approximation dynamic set cover algorithm for $(P',\hat{\S})$ 
 implies an $O(\alpha\log^{2d-1}m)$-approximation dynamic set cover algorithm 
 for  $(P', {\cal S}')$.
 \item The time taken to update 
 the solution for  the set system $(P',\hat{\S})$ is
$O(\log^{2d}m\cdot\log^2n)$. 
   \item   The time taken to construct the set $\hat{\S}$ is $O(m\log^{2d}m)$.
 \end{enumerate}
 
 \subsubsection{Extended quad-tree for $2$-dimensional squares.}
 \label{subsec:ext_quad_2D}
 Given a set of squares $\S'$, construct a 2-dimensional quad-tree $\mT$ (as defined in Section~\ref{sec:Set-cover-squares}), such that its
 root contains all the squares in $\S'$.
We assume for simplicity that no two input squares in $\S'$ share a corner. Then, we can perturb the input points slightly so that no point $p\in P'$ lies on any of the grid points of the quad-tree and each square still contains the same set of points as before.
 Consider a node $v\in \mT$ and a square $S\in \S'$. 
 Let $C$ and $par(C)$ 
 be the cell corresponding to node $v$ and the parent node of $v$, respectively. 
 Let $\mathsf{proj}_i(C)$, $\mathsf{proj}_i(par(C))$ and $\mathsf{proj}_i(S)$ be the projection of $C$, $par(C)$ and $S$, respectively, on to the $i$-th dimension. Then $S$ is {\em $i$-long} at $v$ if and only if $\mathsf{proj}_i(C) \subseteq \mathsf{proj}_i(S)$ but 
 $\mathsf{proj}_i(par(C)) \not\subseteq \mathsf{proj}_i(S)$. See Figure~\ref{fig:long-assigned}(a).
 For all $u\in \mT$, let $\S(u,i) \subseteq \S'$ be 
 the squares which are $i$-long at
 node $u$. Intuitively, these are squares that cover the edge of $C$ in the $i$-th dimension but do not cover any edge of $par(C)$ in the $i$-th dimension. Now, at each node of $\mT$ we will construct 
 two {\em secondary structures} as follows: the first structure is a $1$-dimensional 
 quad-tree built on the projection of the squares in 
 $\S(u,1)$ on to the second dimension
 , and the second structure is a $1$-dimensional quad-tree built on the projection of the 
 squares in $\S(u,2)$ on to the first dimension.

 In each secondary structure, 
 an interval $I$ (corresponding to a square $S\in\S'$) is {\em assigned} to a node $u$ 
  if and only if $u$ is the node 
 with the smallest depth (the root is at depth zero) where $I$ intersects 
 either the left endpoint or the right endpoint of the cell $C_u$. See 
 Figure~\ref{fig:long-assigned}(b). By this definition, any interval will be assigned to 
 at most two nodes in the secondary structure.

Now we will use $\mT$ to construct the geometric collection $\hat{\S}$. 
 Let $V_{\text{sec}}$ be the set of nodes in all the secondary structures of $\mT$. 
 For any node $u\in V_{\text{sec}}$, among its assigned intervals which 
 intersect the left (resp., right) endpoint of the cell $C_u$,  
 identify the {\em maximal} interval 
 $I_{\ell}$ (resp., $I_r$) , i.e., the interval which has maximum overlap with $C_u$.
 See Figure~\ref{fig:long-assigned}(c). We then do the following set of operations over all the nodes in $V_{\text{sec}}$: For a node $u\in V_{\text{sec}}$, denote by $S'$ and $S''$ the corresponding squares for the assigned intervals $I_{\ell}$ and $I_r$, respectively. Further, let $w$ be the node in $\mT$, on which the secondary structure of $u$ was constructed. Then, we include in $\hat{\S}$ the rectangles $S_1\cap C_{w}$ and $S_2\cap C_{w}$. 

 \begin{figure}[h]
 \begin{center}
      \includegraphics[scale=1]{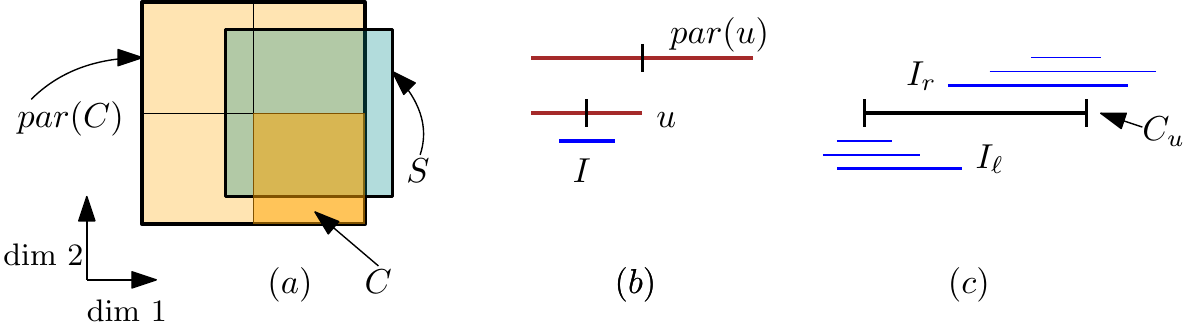}
 \end{center}

 \caption{(a) A square $S$ which is $1$-long at node $v$ (corrs.  
 cell $C$ is highlighted in darker orange), (b) $I$ is assigned to the two children of $v$, and (c) the maximal intervals 
 $I_{\ell}$ and $I_r$ at $C_v$.}
 \label{fig:long-assigned}
\end{figure}
  
 \subsubsection{Extended quad-tree for $2d$-dimensional hypercubes.} 
 
 In this section, we need a generalization of the quad-tree defined in \Cref{sec:Set-cover-squares}. For $d'>2$, a $d'$-dimensional quad-tree is defined
analogously to the the quad-tree defined in \Cref{sec:Set-cover-squares}, where instead of four, each internal node will now have $2^{d'}$ children.
 Assume by induction that we have defined how to construct the extended quad-tree for all dimensions less than or 
 equal to $2d{-}1$.
 (The base case is the extended quad-tree built for $2$-dimensional squares).
We define now how to construct the structure for $2d$-dimensional hypercubes. 
First construct the regular $2d$-dimensional quad-tree $\mT$ for the 
set of hypercubes $\S'$. 
 Consider any node $v\in \mT$. Generalizing the previous 
definition, for any $1\leq i \leq 2d$, a hypercube $S\in \S'$ is defined 
 to be {\em $i$-long} at node $v$ if and only if $\mathsf{proj}_i(C) \subseteq \mathsf{proj}_i(S)$, but 
 $\mathsf{proj}_i(par(C)) \not\subseteq \mathsf{proj}_i(S)$.  For all $v\in \mT$, let $\S(v,i) \subseteq \S'$ 
 be the hypercubes which are $i$-long at 
 node $v$.  Now, at each node of $\mT$ we will construct 
 $2d$ secondary structures as follows: for all $1\leq i \leq 2d$, the $i$-th 
 secondary structure is
 a $(2d{-}1)$-dimensional extended quad-tree built on $\S(v,i)$ and all its $2d$ dimensions 
 except the $i$-th dimension. Specifically, any hypercube $S \in \S(v,i)$ of the form 
 $\ell_1 \times \dots \times \ell_i \times \dots \times \ell_{2d}$ is projected to a $(2d{-}1)$-dimensional
 hypercube $\ell_1 \times\dots\times \ell_{i-1} \times \ell_{i+1}\times\dots \times \ell_{2d}$. 
Let  $\hat{\S}_v$ be the collection of the $(2d{-}1)$-dimensional hyperrectangles that are inductively picked for the secondary structure constructed at $v\in\mT$ using the routine $\mathsf{}$. Define the function $g$ which maps a $(2d{-}1)$-dimensional hyperrectangle picked as part of the collection $\hat{\S}_v$ (for a $v\in\mT$) to its corresponding $2d$-dimensional hypercube $S\in\S'$. We now define the collection of sets $\hat{\S}$ consisting of $2d$-dimensional hyperrectangles:
 \[\hat{\S} \leftarrow \bigcup_{v\in \mT}\left(\bigcup_{S'\in \hat{\S}_v} (g(S')\cap C_v)\right). \]

 

\begin{figure}[!ht]
\centering \includegraphics[scale=0.5]{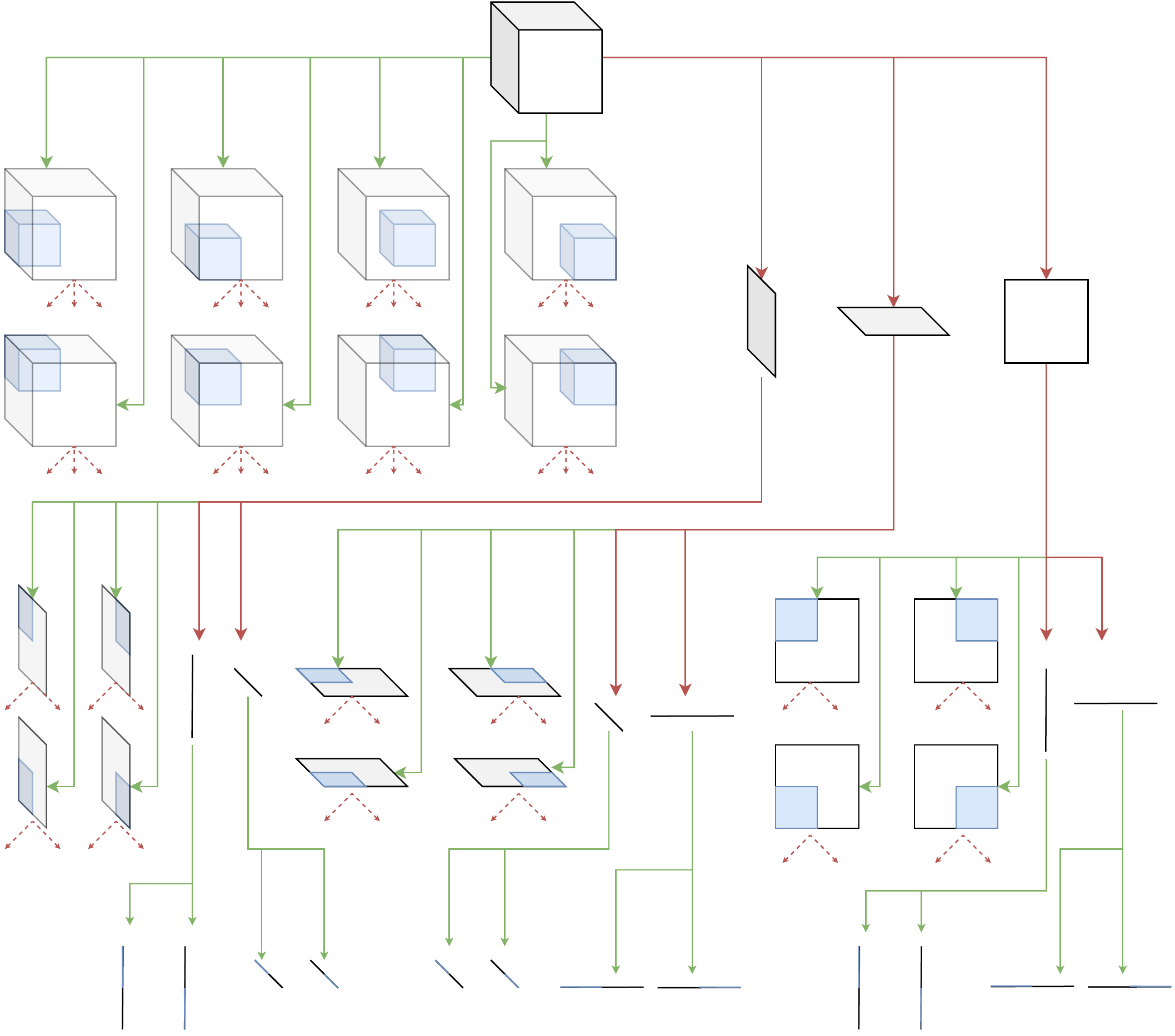}
\caption{Extended quad-tree with a $2\times2\times2$ cube as the root.}
\label{fig:extended-quadtree} 
\end{figure}


\begin{claim}
(Feasibility) Any point $p\in P'$ is covered by at least one set in the collection $\hat{\S}$.
 \end{claim}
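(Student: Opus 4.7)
The plan is to prove the statement by induction on the ambient dimension $d'$ of the extended quad-tree construction; the top-level case is $d'=2d$. The induction hypothesis I really need is stronger than the stated claim: for any family $\mathcal{R}$ of axis-parallel $d'$-dimensional hypercubes with integer corners, the extended quad-tree output $f(\mathcal{R})$ satisfies that for every point $p$ (perturbed off the grid as in Section~\ref{subsec:ext_quad_2D}) and every $R\in\mathcal{R}$ with $p\in R$, some $\hat{R}\in f(\mathcal{R})$ contains $p$. The statement of the claim then follows from the $d'=2d$ case applied to $\mathcal{R}=\S'$: since the original instance is feasible, each $p\in P'$ lies in some $R\in\S'$.

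A uniform ingredient across all dimensions is the identification of a ``pivot'' primary node $v^{*}\in\mT$ and coordinate $i^{*}\in\{1,\ldots,d'\}$ with $p\in C_{v^{*}}$ and $R$ being $i^{*}$-long at $v^{*}$. Walking $\mT$ from the leaf $v_\ell$ containing $p$ upward, the combination of $R$ having integer corners and $p$ being perturbed off all grid points forces $\mathsf{proj}_i(C_{v_\ell})\subseteq \mathsf{proj}_i(R)$ in every coordinate $i$, while at the root this containment fails (modulo a trivial degenerate case in which $R$ spans the full range). Hence for each $i$ there is a shallowest ancestor $v_i^{up}$ on the root-to-leaf path at which containment holds, and by definition $R$ is $i$-long at $v_i^{up}$. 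Setting $i^{*}=1$ and $v^{*}=v_{i^{*}}^{up}$ yields $R\in \S(v^{*},i^{*})$ and $p\in C_{v^{*}}$.

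In the base case $d'=2$, I look at the secondary 1D quad-tree at $v^{*}$ for dimension $i^{*}$. Let $I=\mathsf{proj}_j(R)$ with $j\neq i^{*}$; then $p_j\in I$. Since $I$ has length at least one, it crosses the midpoint of some containing ancestor cell at a shallowest depth $k$; at depth $k$, $I$ is assigned to both children of that ancestor, and $p_j$ lies in one of them---call this node $u$. Suppose $I$ crosses the right endpoint of $C_u$ (the other case is symmetric). The maximal interval $I_r$ selected at $u$ has the leftmost left endpoint among all assigned intervals sharing that endpoint, so $\mathrm{left}(I_r)\leq \mathrm{left}(I)\leq p_j$ and $\mathrm{right}(I_r)\geq \mathrm{right}(C_u)\geq p_j$; hence $p_j\in I_r$. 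The corresponding square $S_r\in \S(v^{*},i^{*})$ is $i^{*}$-long at $v^{*}$, so $p_{i^{*}}\in \mathsf{proj}_{i^{*}}(C_{v^{*}})\subseteq \mathsf{proj}_{i^{*}}(S_r)$, giving $p\in S_r\cap C_{v^{*}}\in \hat{\S}$.

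For the inductive step $d'\geq 3$, consider the $(d'{-}1)$-dimensional secondary extended quad-tree at $v^{*}$ for dimension $i^{*}$. Let $R''=\mathsf{proj}_{[\neq i^{*}]}(R)$ and $p''=\mathsf{proj}_{[\neq i^{*}]}(p)$; projection preserves integer corners and the perturbation of $p$, and $p\in R$ implies $p''\in R''$. By the induction hypothesis, some $\hat{R}''\in \hat{\S}_{v^{*},i^{*}}$ contains $p''$, and its preimage $g(\hat{R}'')\in \S(v^{*},i^{*})$ is $i^{*}$-long at $v^{*}$. The set $g(\hat{R}'')\cap C_{v^{*}}$ is in $\hat{\S}$ by construction. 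The $i^{*}$-th coordinate of $p$ is covered by $i^{*}$-longness, and every other coordinate of $p$ is covered via $p''\in \hat{R}''\subseteq \mathsf{proj}_{[\neq i^{*}]}(g(\hat{R}''))$, where the final inclusion follows by unrolling the recursion, since every recursive output is the intersection of the projected input hypercube with a secondary cell. Hence $p\in g(\hat{R}'')\cap C_{v^{*}}$. The main obstacle is the base case, specifically aligning the assignment rule (which pins down $u$ on the root-to-leaf path of $p_j$) with the maximum-overlap selection (which guarantees $p_j\in I_r$); a secondary delicacy is verifying, via integer corners plus perturbation, that $\mathsf{proj}_i(C_{v_\ell})\subseteq \mathsf{proj}_i(R)$ holds in every coordinate at the leaf.
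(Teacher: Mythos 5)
Your proof is correct and follows the same approach as the paper's: locate a pivot node $v^{*}\in\mT$ and dimension $i^{*}$ at which the covering hypercube is $i^{*}$-long (which exists because the integer corners and the perturbation of $p$ force the leaf cell into $R$, while containment fails at the root), then descend into the secondary structure and use the maximality of the selected intervals to exhibit a set in $\hat{\S}$ containing $p$. You make explicit the induction on dimension that the paper only gestures at with ``Generalizing this idea,'' and you work the 1-D base case (assignment depth along $p_j$'s cell chain, then max-overlap selection forcing $p_j\in I_r$) and the inductive step (projecting to $(d'{-}1)$ dimensions and unrolling $g$) more carefully than the paper's sketch, which in particular overstates its conclusion as $S'\cap C_u\supseteq S\cap C_u$ when only containment of the point is actually needed and established.
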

\begin{proof}
We will prove the claim first for the case of 2-D squares. For any point $p\in P'$, we know that at least one square $S\in \S'$ covers it. By our assumption, $p$ is not on any of the grid points of the quad-tree. Then, we claim that $S$ is edge-covering for a cell $C_v$ such that $v$ is a leaf node and $C_v$ contains $p$. Then, by the definition of $i$-long, there exists an ancestor $u$ of $v$ in $\mT$ (or possibly $v$ itself) such that $S$ is $i$-long for $C_u$, for some $i\in [2]$. This implies that $\hat{\S}$ consists of a (maximal) square $S'$ such that $S'\in \S(u,i)$ and $S'\cap C_u\supseteq S\cap C_u$. Then, by our construction of the extended quad-tree, $S'\cup C_u$ is part of the collection $\hat{\S}$. Hence, the claim holds for 2-D squares. Generalizing this idea, feasibility can be guaranteed for the case of $2d$-dimensional hypercubes for $d\geq 1$.   
\end{proof}

 \begin{lemma}
 (Bounded frequency) Any point in $P'$ lies inside $O(\log^{2d}m)$ sets in $\hat{\S}$.
 \end{lemma}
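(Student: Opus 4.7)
The plan is to prove the bound by induction on the dimension $k$ of the extended quad-tree, showing that in a $k$-dimensional extended quad-tree built on at most $m$ hypercubes with corners in $[0,4m]^{k}$, any fixed point is covered by at most $O(\log^{k} m)$ sets of the resulting collection. Since the universe has side length $O(m)$ at every level of the recursion, each primary quad-tree that appears has depth $O(\log m)$, so a fixed point $p$ lies in $O(\log m)$ cells of any such primary quad-tree.

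For the base case $k=2$, $p$ lies in $O(\log m)$ cells of the primary quad-tree $\mT$. At each such cell $v$, the collection $\hat{\S}_v$ is built from the two $1$-dimensional secondary quad-trees, one per dimension $i\in\{1,2\}$. In the secondary structure for dimension $i$, the projection $\mathsf{proj}_{3-i}(p)$ lies in $O(\log m)$ $1$-dim cells, and at each $1$-dim node at most two intervals are picked (the maximal intervals covering the left and right endpoint). Crucially, for an interval $I$ picked in this secondary structure, the original square $g(I)$ is $i$-long at $v$, so $\mathsf{proj}_i(C_v)\subseteq\mathsf{proj}_i(g(I))$; since $p\in C_v$, this means $p\in g(I)$ iff $\mathsf{proj}_{3-i}(p)\in I$. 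Hence $p$ is contained in at most $2\cdot 2\cdot O(\log m)=O(\log m)$ sets of $\hat{\S}_v$, and summing over the $O(\log m)$ cells $v\in\mT$ containing $p$ yields $O(\log^2 m)$.

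For the inductive step, assume the claim for $(k{-}1)$-dim extended quad-trees. In the $k$-dim extended quad-tree $\mT$, the point $p$ again lies in $O(\log m)$ cells. For each such $v$, there are $k=2d$ secondary $(k{-}1)$-dim extended quad-trees, one per dimension $i\in[k]$, built on the hypercubes of $\S(v,i)$ with the $i$-th coordinate dropped. Because every $S\in\S(v,i)$ is $i$-long at $v$ and $p\in C_v$, the $i$-th coordinate of $p$ is automatically in $\mathsf{proj}_i(S)$, so $p\in g(S')$ iff the $(k{-}1)$-dim projection of $p$ that drops the $i$-th coordinate lies in $S'$. Applying the inductive hypothesis to each of the $k$ secondary structures, the projected point lies in at most $O(\log^{k-1} m)$ sets of each, giving at most $k\cdot O(\log^{k-1} m)=O(\log^{k-1} m)$ sets of $\hat{\S}_v$ (treating $k$ as a constant depending on $d$). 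Summing over the $O(\log m)$ cells of $\mT$ containing $p$ yields the desired $O(\log^k m)=O(\log^{2d} m)$.

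The main technical point to verify carefully is that the $i$-long condition makes the projection faithful, i.e., that containment of $p$ in a $k$-dimensional set obtained from the secondary structure is equivalent to containment of the $(k{-}1)$-dim projection of $p$ in the corresponding projected set; once this is established, the counting proceeds cleanly as a product of a $O(\log m)$ factor coming from the primary quad-tree and the inductive bound on each of the constantly many secondary structures.
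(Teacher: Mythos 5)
Your proof takes essentially the same approach as the paper's: a recurrence of the form $\tau(k) = O(k\log m)\cdot\tau(k-1)$ with base case $\tau(2)=O(\log^2 m)$, obtained by counting the $O(\log m)$ primary cells containing $p$ and recursing into the constantly-many secondary structures at each such cell. The only difference is that you add an explicit check that the $i$-long condition makes projection ``faithful'' (so that membership of $p$ in a picked set $g(S')\cap C_v$ corresponds exactly to membership of the projected $p$ in the secondary set $S'$); the paper treats this correspondence as immediate. For the upper bound one actually only needs the easy direction (projection is monotone, so $p\in g(S')$ implies the projection of $p$ lies in $S'$), and the $i$-longness argument, while correct, is not required here — but making it explicit does no harm and is good practice.
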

 \begin{proof}
For the extended quad-tree for $2$-dimensional squares, let $\tau(2)$ be the maximum number of sets 
 in $\hat{\S}$ which contain a point $p=(p_x,p_y) \in P'$. By 
 properties of standard quad-tree, the number of nodes 
 in the $2$-dimensional quad-tree whose corresponding 
 cells contain $p$ is $O(\log m)$. At any such node, 
 each of the secondary structures will have $O(\log m)$ 
 nodes whose corresponding cells contain the projection 
 of $p$ (either $p_x$ or $p_y$). 
 At each cell in the secondary structure which contains $p$,
 we select at most two (maximal) hyperrectangles  into  $\hat{\S}$. Therefore, 
 $\tau(2)=O(\log^2m)$. 
 
 In general, for an extended quad-tree 
 for $d'$-dimensional hypercubes, let $\tau(d')$ be the maximum number of sets in $\hat{\S}$ that contain any given $d'$-dimensional point. Since we construct $2d$ 
 secondary structures at each node, we obtain the following 
 recurrence:
 
 \[\tau(2d) = O((2d)\log m) \cdot \tau(2d{-}1) =O(\log^{2d}m), \]     
 where the constant hidden by the big-$O$-notation depends on $d$. 
 \end{proof}

 
 \begin{lemma}\label{lem:alpha-approx}
 If there is an $\alpha$-approximation dynamic set cover algorithm for $(P',\hat{\S})$ 
 then there is an $O(\alpha\log^{2d-1}m)$-approximation dynamic set cover algorithm 
 for $(P', {\cal S}')$.
 \end{lemma}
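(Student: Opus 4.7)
My overall strategy is to relate the optima of $(P', \hat{\S})$ and $(P', \S')$ in two directions. The easy direction uses the fact that every set in $\hat{\S}$ is of the form $S \cap C_w$ for some $S \in \S'$ and some cell $C_w$ of the primary quad-tree $\mT$; hence, any feasible cover $\mathcal{F} \subseteq \hat{\S}$ for $(P', \hat{\S})$ can be turned into a feasible cover of $(P', \S')$ of size at most $|\mathcal{F}|$ by replacing each $S \cap C_w \in \mathcal{F}$ by the corresponding $S$. The hard direction is to show $\mathrm{OPT}(P', \hat{\S}) \leq O(\log^{2d-1} m) \cdot \mathrm{OPT}(P', \S')$. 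Combining these two yields the lemma: an $\alpha$-approximate solution for $(P', \hat{\S})$ maps to a solution for $(P', \S')$ of size at most $\alpha \cdot O(\log^{2d-1}m) \cdot \mathrm{OPT}(P', \S')$, i.e., an $O(\alpha \log^{2d-1}m)$-approximation.

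For the hard direction, the plan is to construct, for each $S^* \in \mathrm{OPT}(P', \S')$, a collection $\mathcal{C}_{S^*} \subseteq \hat{\S}$ of size $O(\log^{2d-1} m)$ that covers $S^* \cap P'$; taking the union over $S^* \in \mathrm{OPT}$ then gives a feasible cover of $(P', \hat{\S})$ of the desired size. The construction proceeds by induction on the dimension $D$ of the (extended) quad-tree, with base case $D = 2$. In the base case, I would use a top-down traversal of $\mT$ analogous to the offline algorithm of Section~\ref{sec:Set-cover-squares}: at each cell $w$ where $S^*$ is $i$-long for some $i \in \{1,2\}$, the projection of $S^*$ onto the other dimension is assigned to at most two nodes $u_\ell, u_r$ in the one-dimensional secondary structure, and the maximal intervals $I_\ell, I_r$ at the relevant endpoints of these nodes correspond to sets of $\hat{\S}$ whose union covers $S^* \cap C_w \cap P'$. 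A charging argument in the spirit of the proof of \Cref{lem:sqoffline} (charging each selected set to a corner of $S^*$, which lies in $O(\log m)$ cells of $\mT$) bounds $|\mathcal{C}_{S^*}| = O(\log m)$. For the inductive step $D \geq 3$, at each cell $w$ of $\mT$ where $S^*$ is $i$-long, the secondary structure is a $(D{-}1)$-dimensional extended quad-tree built on the projections of the hypercubes in $\S(w,i)$; by the inductive hypothesis applied to the projection of $S^*$, the projected slice $S^* \cap P' \cap C_w$ is covered by $O(\log^{D-2} m)$ sets of $\hat{\S}_w$, which lift to sets of $\hat{\S}$ (intersected with $C_w$) covering $S^* \cap P' \cap C_w$. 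Summing over the $O(\log m)$ relevant cells $w$ of $\mT$ and over the $O(1)$ coordinate directions yields $O(\log^{D-1}m) = O(\log^{2d-1}m)$.

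The principal obstacle I anticipate is the careful counting behind the phrase ``$O(\log m)$ relevant cells''---although $S^*$ can technically be $i$-long at up to $\Theta(m)$ cells across $\mT$, only $O(\log m)$ of them should actually contribute to $\mathcal{C}_{S^*}$. The resolution is hierarchical: once enough sets are included at a cell $w$ to cover $S^* \cap P' \cap C_w$, the descendants of $w$ need not be visited, so the paid cells are only those on the ``frontier'' where $S^*$ first becomes usable along each root-to-leaf chain through $S^* \cap P'$, and the charging to the $O(\log m)$ cells containing a corner of $S^*$ closes the argument. A related technicality that I would verify is the claim, used implicitly above, that for any interval $I$ (namely the appropriate projection of $S^*$) in the one-dimensional dyadic secondary quad-tree, the cells of the two smallest-depth nodes $u_\ell, u_r$ intersecting the left and right endpoints of $I$ jointly cover $I$; this ensures that $I_\ell$ and $I_r$ together cover $I \cap C_w$ and hence every relevant point, and it follows by a short case analysis on the dyadic positions of $I$'s endpoints.
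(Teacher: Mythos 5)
Your overall strategy matches the paper's: the easy direction is the correspondence $\hat{\S}\to\S'$ (each $\hat{S}\in\hat{\S}$ is some $S\cap C_w$ with $S\in\S'$), and the hard direction shows $\widehat{\OPT}=O(\log^{2d-1}m)\cdot\OPT$ by exhibiting, for each $S^*\in\OPT$, a sub-collection of $\hat{\S}$ of that size covering $S^*\cap P'$, via induction on the dimension of the extended quad-tree. Your inductive scheme and the recurrence $\mu(2d)=O(\log m)\cdot\mu(2d-1)$ are the same as the paper's.

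The noteworthy difference is exactly the point you flag as the ``principal obstacle,'' and you are right to flag it. The paper's proof asserts that $|\mathsf{long}(S)|=O(\log m)$, where $\mathsf{long}(S)$ is the set of $\mT$-nodes at which $S$ is $1$-long, invoking ``standard properties of a quad-tree.'' That statement does not hold literally. While there are only $O(\log m)$ distinct values that $\mathsf{proj}_1(C_v)$ can take among $1$-long nodes (the canonical dyadic decomposition of $\mathsf{proj}_1(S)$), the $2d$-dimensional quad-tree can contain $\Theta(m/\ell)$ cells sharing a given length-$\ell$ projection, and many of them meet $S$. For a concrete two-dimensional instance, take $S=[0,N/2+1)^2$ (with $N=4m$): $S$ is $1$-long at every unit cell $[N/2,N/2+1)\times[j,j+1)$ with $0\le j\le N/2$, all of which intersect $S$, so $|\mathsf{long}(S)|=\Omega(m)$. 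Since the paper's recurrence relies on this cardinality bound, its proof has a genuine gap at this step, in both the base case and the inductive step. Your repair — a top-down traversal that stops at the ``frontier'' (the first cells where $S^*$ is $i$-long for some $i$), and a charge to the $O(\log m)$ cells of $\mT$ that contain a corner of $S^*$ (precisely the cells where descent continues) — is the correct way to obtain the $O(\log m)$ count, in the same spirit as the proof of Lemma~\ref{lem:sqoffline}. Note, for completeness, that a cell intersecting $S^*$ at which $S^*$ is not $i$-long for any $i$ must contain a corner of $S^*$, so there are $O(\log m)$ descent cells and hence $O(\log m)$ frontier cells. Your secondary observation about the one-dimensional step (that the two maximal intervals chosen at the two secondary nodes to which $I=\mathsf{proj}_{\bar i}(S^*)$ is assigned together cover $I$ inside the secondary root) is also correct and indeed requires only a short case analysis. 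Finally, covering $S^*\cap P'$ rather than $S^*$ geometrically (as the paper states) is a harmless weakening; either suffices for the lemma.
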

 \begin{proof}
Let $\OPT$ and $\widehat{\OPT}$ be the size of the optimal set cover 
for $(P', \S')$ and $(P',\hat{\S})$. For any hypercube $S$ in $\S'$ we will 
show  that there exists 
$O(\log^{2d-1}m)$ hyperrectangles in $\hat{\S}$ whose union will completely 
cover $S$. Therefore, $\widehat{\OPT}= \OPT\cdot O(\log^{2d-1}m)$.
Note that for every set $S'\in\hat{\S}$, there exists a corresponding hypercube $S\in \S'$ which covers at least the set of points in $P'$ that $S'$ covers. 
For $(P', \S')$,
an $\alpha$-approximation set cover for $(P',\hat{\S})$
implies an approximation factor of:
\[\frac{b}{\OPT} = \frac{b\cdot O(\log^{2d-1}m)}{\widehat{\OPT}} = O(\alpha\log^{2d-1}m),\]
where the last equation follows from $b\leq \alpha \cdot\widehat{\OPT}$.

 Finally, we  establish the covering property.
 We will prove it via induction on the dimension size.
 As a base case, for squares in 2-D, let $\mu(2)$ be the number 
 of sets needed in $\hat{\S}$ such that 
 their union completely covers a square $S\in \S'$. 
 For any $S\in \S'$, 
  let $\mathsf{long}(S)$ be the set of nodes in 
$\mT$ where $S$ is $1$-long. By standard properties of a quad-tree, 
we have (a) $|\mathsf{long}(S)|=O(\log m)$, and (b) $S\leftarrow \bigcup_{v\in \mathsf{long}(S)} 
(S\cap C_v)$, where $C_v$ is the cell corresponding to $v$.
Now consider any node $v\in \mathsf{long}(S)$. Let $I$ be the interval corresponding 
to $S$ in the secondary structure of $v$ built on $\S(v,1)$. 
Via our selection of maximal intervals at the secondary nodes, it is clear that 
there exist two maximal intervals which cover $I$. 
Therefore, 
$\mu(2) \leq 2\cdot|\mathsf{long}(S)|=O(\log m)$. 
 
 In general, let $\mu(2d)$ be the number of sets needed in $\hat{\S}$ such that 
 their union completely covers a hypercube $S\in \S'$. Then we claim that 
 \[ \mu(2d) = O(2^d\log m)\times \mu(2d{-}1),\]
 where $O(2^d\log m)$ is the number of nodes in $\mT$ where $S$ is $1$-long.
Solving the recurrence, we obtain $\mu(2d) =O(\log^{2d-1}m)$.
\end{proof}

\subsection{The final algorithm}
For the general dynamic set cover problem, Bhattacharya {\em et al.}~\cite{bhattacharya2021dynamic} gave an $O(f)$-approximation 
algorithm with a worst-case update time of $O(f\log^2n)$. Recall 
that $f$ is the frequency of the set system. We will use their algorithm 
as a blackbox on the set system $(P',\hat{\S})$. Let $\ALG$ be the sets reported 
by their algorithm. Then our algorithm will also report $\ALG$ as a set cover for 
$(P',\S')$. The solution is feasible
since each set in $\hat{\S}$ belongs to $\S'$ as well. 

\begin{lemma}
The approximation factor of our algorithm is $O(\log^{4d-1}m)$.
\end{lemma}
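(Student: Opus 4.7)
The plan is to chain together the three results established earlier in this section. First, the bounded-frequency lemma showed that the auxiliary set system $(P', \hat{\S})$ has frequency $f = O(\log^{2d} m)$. Second, Lemma~\ref{lem:alpha-approx} showed that an $\alpha$-approximate solution to $(P', \hat{\S})$ translates into an $O(\alpha \log^{2d-1} m)$-approximate solution for the original system $(P', \S')$. Third, the algorithm of Bhattacharya et al.~\cite{bhattacharya2021dynamic}, used here as a blackbox, produces a dynamic set cover of size at most $O(f) \cdot \widehat{\OPT}$ for a set system of frequency $f$.

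Plugging these in, I would run the Bhattacharya et al.\ blackbox on $(P', \hat{\S})$ to obtain a cover $\ALG$ with $|\ALG| = O(\log^{2d} m) \cdot \widehat{\OPT}$, i.e., $\alpha = O(\log^{2d} m)$. Feeding this into the reduction lemma gives
\[
|\ALG| \;=\; O(\log^{2d} m) \cdot O(\log^{2d-1} m) \cdot \OPT \;=\; O(\log^{4d-1} m) \cdot \OPT,
\]
where $\OPT$ denotes the optimum for $(P', \S')$. Feasibility of $\ALG$ as a cover of $P'$ under $(P', \S')$ is inherited from the way $\hat{\S}$ is built: each element of $\hat{\S}$ has the form $g(S') \cap C_v$ for some $S' \in \S'$ and quad-tree cell $C_v$, so the hypercube $g(S') \in \S'$ covers every point that its restriction covers. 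Finally, by the point-containment-preserving transformation between $d$-dimensional hyperrectangles and $2d$-dimensional hypercubes established at the start of the section, the same approximation ratio carries back to the original dynamic set cover instance for $d$-dimensional hyperrectangles.

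Since both Lemma~\ref{lem:alpha-approx} and the bounded-frequency lemma have already been proved, the only task here is bookkeeping: verifying that the $\log$-factors compose as claimed and that feasibility survives the reduction. There is no conceptual obstacle at this step — the main technical work, namely the construction of the extended quad-tree and the selection of ``most useful'' hypercubes at each $1$-dimensional cell, has been carried out in the earlier subsections.
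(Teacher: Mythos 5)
Your argument is exactly the paper's: use the bounded-frequency lemma to get $f=O(\log^{2d}m)$, invoke Bhattacharya et al.\ as an $O(f)$-approximation blackbox for $(P',\hat{\S})$, and apply Lemma~\ref{lem:alpha-approx} to transfer the ratio to $(P',\S')$, yielding $O(\log^{2d}m)\cdot O(\log^{2d-1}m)=O(\log^{4d-1}m)$. The extra remarks on feasibility and the hyperrectangle-to-hypercube transformation are correct but not needed for this particular lemma.
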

\begin{proof}
For the set system $(P',\hat{\S})$ the frequency $f=O(\log^{2d}m)$, and hence,
 the algorithm of \cite{bhattacharya2021dynamic} leads to
an $O(\log^{2d}m)$ approximation for this set system. Using Lemma~\ref{lem:alpha-approx}, this 
implies an approximation factor of $O(\log^{2d}m) \cdot O(\log^{2d-1}m)=
O(\log^{4d-1}m)$ 
for the set system $(P',\S')$.
\end{proof}
\begin{lemma} 
 The update time  is $O(\log^{2d}m\cdot \log^2n)=O(\log^{2d+2}m)$.
 \end{lemma}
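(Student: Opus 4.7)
The plan is to decompose the per-update cost into two parts: (a) given a point $p \in P$ being inserted or deleted, enumerate the incidence list of its transformed image $p' \in \IR^{2d}$ in $\hat{\S}$, and (b) hand the resulting single-element update on $(P', \hat{\S})$ to the dynamic set cover data structure of \cite{bhattacharya2021dynamic}.

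For part (a), I would enumerate the incidence list in an output-sensitive manner by walking the extended quad-tree exactly in the pattern used in the bounded-frequency proof. Specifically, in the top-level $2d$-dimensional quad-tree $\mathbb{T}$ I descend through the $O(\log m)$ cells whose projections contain $p'$; at each such cell I recurse into each of the $2d$ secondary structures, which are themselves $(2d{-}1)$-dimensional extended quad-trees on the appropriate dimension-reduced projection of $p'$; and at the one-dimensional base case I read off the $O(1)$ maximal intervals stored at each visited cell. Each visited cell contributes to the frequency count of $p'$, so the traversal mirrors the recurrence $\tau(2d) = O(\log m) \cdot \tau(2d{-}1)$ and terminates in total work $O(\log^{2d} m)$.

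For part (b), the algorithm of \cite{bhattacharya2021dynamic} guarantees worst-case update time $O(f \log^2 n')$ on a set system with $n'$ elements and frequency $f$. In our auxiliary system $f = O(\log^{2d} m)$ by the bounded-frequency lemma, and after the rank-space reduction $P'$ lies on the grid $[0,4m]^{2d}$, so $n' \le (4m)^{2d}$ and $\log n' = O(\log m)$. Combining with part (a), the per-update cost is $O(\log^{2d} m) + O(\log^{2d} m \cdot \log^2 m) = O(\log^{2d} m \cdot \log^2 n') = O(\log^{2d+2} m)$, matching the claim.

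The main obstacle is part (a): the bounded-frequency lemma by itself only bounds the size of the answer, not the time to produce it. Turning the existence statement into an output-sensitive enumeration requires verifying that at each node of a secondary structure one can step over the stored maximal intervals in $O(1)$ time, and that the recursion never descends into a cell whose projection misses $p'$. The recursive construction of the extended quad-tree delivers both properties naturally, but this is the one point that must be checked carefully so that the traversal cost does not blow up beyond the frequency itself.
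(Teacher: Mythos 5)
Your proof follows the same two-part decomposition as the paper: (a) enumerate the $O(\log^{2d}m)$ sets of $\hat{\S}$ incident to the updated point by an output-sensitive walk of the extended quad-tree, then (b) feed the update to the Bhattacharya et al.\ data structure, whose $O(f\log^2 n)$ worst-case bound gives the result with $f=O(\log^{2d}m)$. Part~(a) in particular is exactly the paper's argument, with the helpful addition that you spell out why the traversal mirrors the frequency recurrence rather than merely appealing to the bound.

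The one place where you take a different route is in establishing $\log n = O(\log m)$. The paper invokes the bounded dual VC-dimension of hypercubes together with the Sauer--Shelah lemma, whereas you assert that after the rank-space reduction $P'$ lies on the grid $[0,4m]^{2d}$, giving $n' \le (4m)^{2d}$. That literal claim is not quite right: the paper's rank-space reduction is applied to the \emph{corners of the hyperrectangles}, not the points, and the points are in fact explicitly perturbed \emph{off} the grid points of the quad-tree. What is true — and is the content underlying both arguments — is that the grid of hyperrectangle corners partitions $[0,4m]^{2d}$ into $O(m^{2d})$ cells, any two points in the same cell have identical incidence patterns, and the dynamic set cover algorithm only needs to distinguish combinatorially distinct elements. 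So the effective universe has size $m^{O(d)}$ and $\log n = O(\log m)$. Your grid-counting argument is a slightly more hands-on substitute for the paper's VC-dimension citation; both are valid, but you should phrase yours in terms of distinct incidence classes rather than claiming the points themselves sit on the grid.
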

\begin{proof}
When a point is inserted or deleted, the $O(\log^{2d}m)$ 
sets in $\hat{\S}$ containing the point can be found in 
$O(\log^{2d}m)$ time by traversing the tree $\mT$. 
The algorithm of \cite{bhattacharya2021dynamic}
has an update time of $O(f\log^2n)=O(\log^{2d}m\cdot\log^2n)= O(\log^{2d+2}m)$. This is true since $d$-dimensional hypercubes have dual $VC$-dimension of $O(d)$ \cite{sauer1972density,shelah1972combinatorial} and hence, $O(\log n)=O(\log m)$.
\end{proof}

\begin{lemma}
  The time taken to construct the set $\hat{\S}$ is $O(m\log^{2d}m)$.
  \end{lemma}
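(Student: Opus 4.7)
The plan is to prove the stronger statement by induction on the dimension: that constructing the $d'$-dimensional extended quad-tree together with its picked sets on $m'$ hypercubes takes $T(d', m') = O(m' \log^{d'} m')$ time. The desired bound is then the special case $d' = 2d$, $m' = m$.

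For the base case $d' = 2$, I would build the 2-D quad-tree $\mT$ on the $m'$ input squares in $O(m' \log m')$ time using standard point-location on the corners, and then for each node $v \in \mT$ and each $i \in \{1,2\}$ construct the 1-D secondary quad-tree on $\S(v,i)$ and identify the two maximal assigned intervals at each secondary node in $O(|\S(v,i)| \log m')$ time. The cost therefore reduces to bounding $\sum_{v} (|\S(v,1)| + |\S(v,2)|)$. The key combinatorial observation is that each hypercube $S$ is $i$-long at only $O(\log m')$ nodes: the condition $\mathsf{proj}_i(C_v) \subseteq \mathsf{proj}_i(S)$ with $\mathsf{proj}_i(\mathit{par}(C_v)) \not\subseteq \mathsf{proj}_i(S)$ forces $\mathsf{proj}_i(C_v)$ to be one of the $O(\log m')$ canonical dyadic intervals in the standard decomposition of $\mathsf{proj}_i(S)$, and together with the natural restriction to cells whose interior intersects $S$ this gives $O(\log m')$ witnesses per hypercube. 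Summing over the $m'$ hypercubes yields $\sum_v |\S(v,i)| = O(m' \log m')$ and hence $T(2, m') = O(m' \log^2 m')$.

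For the inductive step, I would use the recurrence
\[
T(d', m') \;\le\; O(m' \log m') \;+\; \sum_{v \in \mT} \sum_{i=1}^{d'} T\bigl(d'{-}1,\, |\S(v,i)|\bigr),
\]
where the first term accounts for the top-level $d'$-dimensional quad-tree and for computing the sets $\S(v,i)$, while each summand inside bounds the cost of recursively building a $(d'{-}1)$-dimensional extended quad-tree on $|\S(v,i)|$ hypercubes. Plugging in the inductive hypothesis $T(d'{-}1, k) = O(k \log^{d'-1} m')$, exploiting its linearity in $k$, and reusing the combinatorial bound $\sum_v \sum_i |\S(v,i)| = O(m' \log m')$ (the argument is unchanged in higher dimensions because the $i$-long condition only involves the single $i$-th projection), the total becomes
\[
T(d', m') \;=\; O(m' \log m') + O\bigl(\log^{d'-1} m'\bigr) \cdot O(m' \log m') \;=\; O\bigl(m' \log^{d'} m'\bigr),
\]
closing the induction and giving $T(2d, m) = O(m \log^{2d} m)$ as required.

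The main obstacle I expect is making the charging $\sum_v |\S(v,i)| = O(m' \log m')$ fully rigorous. Because the $i$-long condition is stated purely in terms of one-dimensional projections, naively admitting every cell with matching $i$-projection (regardless of whether $S$ meets it in the remaining coordinates) could produce far too many witnesses. The cleanest way to avoid this is either to restrict to cells whose interior actually meets $S$, so that the other $d'{-}1$ projections impose constraints leaving only $O(1)$ relevant cells per canonical dyadic interval of $\mathsf{proj}_i(S)$, or to work with a compressed variant of the quad-tree that retains only the $O(m')$ cells incident to at least one corner or to at least one long hypercube; either route fits the existing proofs of feasibility and bounded frequency already established earlier in this section.
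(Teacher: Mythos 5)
Your proof follows the paper's argument essentially verbatim: the same recurrence $T(d',m') = O(m'\log m') + \sum_v T(d'{-}1,|\S(v,i)|)$, the same use of the inductive hypothesis with $\log^{d'-1}m'$ in place of $\log^{d'-1}|\S(v,i)|$, and the same combinatorial bound $\sum_{v,i}|\S(v,i)| = O(m'\log m')$ (the paper's "$\sum_v m_v = O(m\log m)$"); basing the induction at $d'=2$ rather than at the $1$-dimensional secondary structures as the paper does is purely cosmetic. The obstacle you flag is real, and the paper also handles it by bare assertion, but of your two proposed repairs only the compressed/sparse quad-tree route is sound: merely restricting to cells whose interior meets $S$ does \emph{not} give $O(1)$ witnesses per canonical dyadic interval, since for a piece of length $2^k$ in $\mathsf{proj}_i(S)$ the hypercube (of side $\Delta\ge 2^k$) can intersect $\Theta((\Delta/2^k)^{d'-1})$ cells sharing that $i$-projection, and the small pieces of the canonical decomposition make this unbounded. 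So keep the version of the quad-tree that only retains the $O(m)$ cells incident to hypercube corners (which is also what is needed to build the "skeleton" in $O(m)$ time), and drop the interior-intersection argument.
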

  
  \begin{proof}
  Let $T(m,d)$ be the time taken to build the extended 
  quad-tree on $m$ hypercubes in $d$ dimensions.
  As a base case, we first compute $T(m,1)$. 
  Constructing the skeleton structure of the 
  1-dimensional quad-tree takes $O(m)$ time, 
  since the endpoints of the intervals lie on the integer grid $[0,4m]$.
  Then ``assigning'' each interval to a node in this 
  quad-tree takes $O(m\log m)$ time. For a node $v$ 
  which is assigned $m_v$ intervals, finding the two
  maximal intervals takes $O(m_v)$ time.
  Therefore, $T(m,1)= O(m\log m) + \sum_{v} O(m_v)=O(m\log m)$.
  
  Now consider the extended quad-tree on $m$ hypercubes in $2d$ dimensions.
  Again constructing the skeleton of the quad-tree takes only $O(m)$ time.
  For any $1\leq i\leq 2d$, finding the nodes in $\mT$ where a hypercube is 
  ``$i$-long'' takes $O(2^{2d}\log m)=O(\log m)$ time. Therefore,
  \begin{align*}
  T(m, 2d) &= O(m\log m) +  \sum_{v} T(m_v, 2d-1) \\
                &= O(m\log m) + \sum_{v} m_v\log^{2d-1}m_v \qquad\qquad \text{(by induction)}\\
                &= O(m\log m) + \sum_{v} m_v\log^{2d-1}m\\
                &= O(m\log^{2d}m), \qquad \qquad \text{since, $\sum_{v}m_v =O(m\log m)$.}
                \qedhere
  \end{align*}
 \end{proof} 

\subsection{Weighted setting}
We present an easy extension of our algorithm to the setting 
where each hyperrectangle $S \in \S$ has a weight 
$w_S \in [1, W]$. First, we {\em round} the weight of each 
set $S$ to the smallest power of two greater than or equal to
$w_S$. This leads to $O(\log W)$ different weight classes.
Next, for each weight class, 
we will build an extended quad-tree based on the hypercubes 
of that weight class after the reduction to the case of $2d$-dimensional hypercubes from $d$-dimensional hyperrectangles as shown in the previous section. Finally, let $\hat{\S}$ be the collection of 
(maximal) hypercubes obtained from all the $O(\log W)$ 
extended quad-trees. Run the dynamic set cover algorithm of 
Bhattacharya {\em et al.}~\cite{bhattacharya2021dynamic} 
on $(P',\hat{\S})$.

\begin{lemma}
The approximation factor of the algorithm is $O(\log^{4d-1}m\cdot \log W)$.
\end{lemma}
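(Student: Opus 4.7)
The plan is to combine two observations: the weight rounding causes at most a constant-factor loss, and the frequency and covering arguments from the unweighted case apply per weight class, so the only additional loss is a factor of $O(\log W)$ from the number of classes.

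First I would bound the frequency of the auxiliary set system $(P',\hat{\S})$. Since each of the $O(\log W)$ weight classes contributes its own extended quad-tree, and the bounded frequency lemma shows that any point lies in $O(\log^{2d}m)$ sets generated by a single extended quad-tree, the total frequency satisfies $f=O(\log^{2d}m \cdot \log W)$.

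Next I would bound $w(\widehat{\OPT})$ in terms of the optimum weighted cover $w(\OPT)$ for $(P',\S')$. Fix an optimal solution $\OPT$ and consider any $S \in \OPT$. After rounding, $S$'s weight lies in some class $j$ and is at most $2w_S$. Since the extended quad-tree for class $j$ was built on exactly the hypercubes of that class, the covering argument in the proof of Lemma~\ref{lem:alpha-approx} produces $O(\log^{2d-1}m)$ sets in $\hat{\S}$, all drawn from class $j$ and hence all of weight at most $2w_S$, whose union covers $S$. Summing over $S \in \OPT$ yields
\[ w(\widehat{\OPT}) \;\le\; O(\log^{2d-1}m) \cdot w(\OPT). \]

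Finally I would apply the weighted dynamic algorithm of Bhattacharya et al.~\cite{bhattacharya2021dynamic} as a black box on $(P',\hat{\S})$, which produces a cover $\ALG$ of weight $w(\ALG) \le O(f)\cdot w(\widehat{\OPT})$. Chaining the bounds,
\[ w(\ALG) \;\le\; O(f)\cdot w(\widehat{\OPT}) \;=\; O(\log^{2d}m \cdot \log W)\cdot O(\log^{2d-1}m)\cdot w(\OPT) \;=\; O(\log^{4d-1}m \cdot \log W)\cdot w(\OPT), \]
as desired. The main obstacle is making sure the covering step respects weight classes: the covering hypercubes used to bound $w(\widehat{\OPT})$ must come from the same class as $S$ (otherwise their weights could be arbitrarily larger than $w_S$), and this is precisely what is secured by building one extended quad-tree per class rather than a single combined one.
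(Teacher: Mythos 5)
Your proof is correct and follows the same route as the paper: a factor-$2$ loss from rounding, an $O(\log W)$ blow-up in frequency due to one extended quad-tree per weight class, the per-class covering bound of $O(\log^{2d-1}m)$ carried over from the unweighted Lemma~\ref{lem:alpha-approx}, and the $O(f)$-approximation of Bhattacharya et al.\ applied as a black box. You are, if anything, more careful than the paper's terse proof — in particular the observation that the covering sets for $S$ all lie in $S$'s own weight class (so their weights are at most $2w_S$) is exactly the reason the per-class quad-tree construction is needed, and the paper leaves this implicit.
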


\begin{proof}
Consider the optimal solution for $(P',\S')$ and let $\OPT$ be 
the optimal weight. By rounding the 
weight of each set in the optimal solution, their total weight 
becomes at most $2\cdot \OPT$. Therefore, the weight of 
the optimal solution in the set system after rounding is 
at most $2\cdot \OPT$. Compared to the unweighted setting, 
now the frequency of the set system $(P',\hat{\S})$ increases 
by a $O(\log W)$ factor. As a result, we obtain an approximation 
factor of $O(\log^{4d-1}m\cdot \log W)$.
\end{proof}

\begin{lemma}
The update time of the algorithm is $O(\log^{2d}m\log^{3}(Wm))$. 
\end{lemma}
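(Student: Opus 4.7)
The plan is to mirror the unweighted update-time analysis, then carefully multiply in the $O(\log W)$ factors that come from (i) having $O(\log W)$ separate extended quad-trees (one per weight class), and (ii) the $\log W$ dependence in the worst-case update time of the Bhattacharya et al.\ general set cover algorithm.

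First, I would bound the time spent traversing the data structure itself. When a point $p$ is inserted or deleted, we must locate every set in $\hat{\S}$ that contains $p$. For each of the $O(\log W)$ weight classes, we have one extended quad-tree, and from the analysis in the unweighted section we already know that a fixed point is contained in at most $O(\log^{2d} m)$ cells of a single extended quad-tree, each contributing $O(1)$ sets to $\hat{\S}$. Traversing one tree therefore costs $O(\log^{2d} m)$, and doing this over all weight classes costs $O(\log^{2d} m \cdot \log W)$.

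Next, I would plug into the black-box dynamic algorithm of Bhattacharya et al.~\cite{bhattacharya2021dynamic}, whose worst-case update time is $O(f \log^2(Wn))$ where $f$ is the frequency of the auxiliary set system. As already established in the preceding approximation analysis for the weighted case, the frequency here is $f = O(\log^{2d} m \cdot \log W)$: each point lies in $O(\log^{2d} m)$ sets of each weight class, and there are $O(\log W)$ weight classes. Substituting gives a per-update cost of
\[
O\!\left(\log^{2d} m \cdot \log W \cdot \log^2(Wn)\right).
\]
The traversal cost is dominated by this expression, so it is the overall bound.

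Finally, I would simplify using $\log n = O(\log m)$, which follows from the $O(d)$ dual VC-dimension of axis-parallel $d$-dimensional hyperrectangles (the same simplification already invoked in the unweighted update-time lemma). Then $\log W \cdot \log^2(Wn) \le \log^3(Wm)$ up to constants depending on $d$, yielding the claimed bound $O(\log^{2d} m \cdot \log^3(Wm))$. The main subtlety, and the only non-routine point, is keeping track of where the $\log W$ factors enter: once through the number of extended quad-trees (which inflates both the traversal cost and the frequency), and once inside the $\log^2(Wn)$ term of the black-box algorithm; verifying that these combine cleanly into a single $\log^3(Wm)$ rather than producing an extra log factor is what the argument hinges on.
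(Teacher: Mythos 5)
Your proof is correct and follows essentially the same approach as the paper: bound the traversal cost over the $O(\log W)$ extended quad-trees, plug the frequency $f = O(\log^{2d} m \cdot \log W)$ into the $O(f\log^2(Wn))$ update bound of Bhattacharya et al., and simplify via $\log n = O(\log m)$. The only difference is that you spell out the final arithmetic simplification a bit more explicitly than the paper's terse one-line chain of equalities.
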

\begin{proof}
When a point is inserted or deleted, the $O(\log^{2d}m\log W)$ sets 
in $\hat{\S}$ containing the point can be found in 
$O(\log^{2d}m\log W)$ time by traversing the $O(\log W)$ extended 
quad-trees. The algorithm of \cite{bhattacharya2021dynamic} has an 
update time of $O(f\log^{2} Wn)=O(\log^{2d}m\log W\log^{2}Wn)=
O(\log^{2d}m\log^{3}(Wm))$. 
\end{proof}

\begin{theorem}\label{thm:WtDynSetCov}
There is an algorithm for weighted dynamic set cover for $d$-dimensional hyperrectangles 
with an approximation factor of $O(\log^{4d-1}m\cdot\log W)$
 and an update time 
of $O(\log^{2d}m\cdot\log^3(Wm))$. 
\end{theorem}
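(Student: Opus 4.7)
The plan is simply to assemble the two lemmas immediately preceding the theorem into a single statement, since together they already yield both promised bounds. Concretely, I would state that the algorithm described in this subsection—round each weight $w_S$ up to the nearest power of two, partition $\S$ into the $O(\log W)$ resulting weight classes, build a separate extended quad-tree for each class (after the reduction to $2d$-dimensional hypercubes in $[0,4m]^{2d}$), let $\hat{\S}$ be the union of the maximal hypercubes selected across all $O(\log W)$ extended quad-trees, and finally run the general dynamic set cover algorithm of Bhattacharya et al.~\cite{bhattacharya2021dynamic} on the auxiliary instance $(P',\hat{\S})$—achieves both bounds.

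For the approximation guarantee, I would cite the first lemma in this subsection: weight-rounding loses at most a factor of $2$, the frequency of $(P',\hat{\S})$ grows from $O(\log^{2d} m)$ to $O(\log^{2d} m \cdot \log W)$ because each point now sits in $O(\log^{2d} m)$ sets per weight class, and a straightforward generalization of Lemma~\ref{lem:alpha-approx} to the weighted setting turns an $O(f)$-approximation on $(P',\hat{\S})$ into an $O(f \cdot \log^{2d-1} m)$-approximation on $(P',\S')$. Plugging in $f = O(\log^{2d} m \cdot \log W)$ yields the approximation factor $O(\log^{4d-1} m \cdot \log W)$.

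For the update time, I would invoke the second lemma in this subsection: on an insertion or deletion of a point, traversing all $O(\log W)$ extended quad-trees identifies the $O(\log^{2d} m \cdot \log W)$ relevant sets in $\hat{\S}$ in $O(\log^{2d} m \cdot \log W)$ time, and the algorithm of~\cite{bhattacharya2021dynamic} then updates its solution in $O(f \log^2(Wn)) = O(\log^{2d} m \cdot \log W \cdot \log^2(Wn))$ time. Using the fact that $\log n = O(\log m)$ for hyperrectangles (since the dual VC-dimension is $O(d)$), this simplifies to $O(\log^{2d} m \cdot \log^3(Wm))$.

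There is no real obstacle, since all the heavy lifting—the extended quad-tree construction, the bounded-frequency reduction, the approximation analysis, and the update-time analysis—has already been carried out in the two lemmas above; the theorem is simply their conjunction. I would therefore write the proof as a two-sentence deduction pointing at those lemmas and noting that both the approximation and the update-time bounds are exactly what they establish.
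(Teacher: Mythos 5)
Your proposal is correct and matches the paper's approach exactly: the theorem follows directly by combining the two lemmas stated just before it (one bounding the approximation factor at $O(\log^{4d-1}m\cdot\log W)$ via the weight-rounding and frequency-blowup argument, and one bounding the update time at $O(\log^{2d}m\cdot\log^3(Wm))$ via the $O(\log W)$-fold traversal of the per-class extended quad-trees and the Bhattacharya et al.\ black box), and the paper indeed gives the theorem with no separate proof body since those lemmas are its proof. Your additional remark that $\log n = O(\log m)$ by the bounded dual VC-dimension is the same justification the paper uses in the unweighted update-time lemma and carries over unchanged.
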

We also have the following corollary for dynamic set cover for $2d$-dimensional hypercubes where all the corners are integral and bounded in $[0,cm]^{2d}$ for some constant $c>0$.

\begin{corollary}\label{Cor:WtDynSetCov_cube}
There is an algorithm for weighted dynamic set cover for $2d$-dimensional hypercubes 
with an approximation factor of $O(\log^{4d-1}m\cdot\log W)$
 and an update time 
of $O(\log^{2d}m\cdot\log^3(Wm))$, when all of their corners are integral and bounded in $[0,cm]^{2d}$ for a fixed $c>0$. 
\end{corollary}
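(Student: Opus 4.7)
The plan is to observe that the proof of Theorem~\ref{thm:WtDynSetCov} already proceeds by first transforming the input into exactly the setting of the corollary, namely a collection of $2d$-dimensional hypercubes whose corners are integral and lie in $[0, 4m]^{2d}$. Once in that form, all the subsequent machinery (the extended quad-tree, the construction of the bounded-frequency system $\hat{\S}$, and the invocation of the dynamic algorithm of Bhattacharya et al.~\cite{bhattacharya2021dynamic}) is applied directly to that $2d$-dimensional hypercube instance. Thus, for the corollary, I would simply skip the initial rectangles-to-hypercubes reduction and feed the given input hypercubes straight into the extended quad-tree construction.

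Concretely, first I would normalize by a shift so that the corners lie in $[0, cm]^{2d}$ with $c$ a fixed constant. Since $c$ is constant, padding (or a constant rescaling) brings us to a grid of side $\Theta(m)$, so the regular $2d$-dimensional quad-tree $\mT$ built on the hypercubes has depth $O(\log m)$ and every point of $P'$ lies in $O(\log m)$ cells of $\mT$, exactly as in Section~\ref{sec:set-cover-hyperrectangles}. Next I would build, for each weight class (after rounding each weight $w_S$ up to the nearest power of two, producing $O(\log W)$ classes), the extended quad-tree and its collection $\hat{\S}$ of ``most useful'' hypercubes, using the identical recursive construction of secondary structures in each of the $2d$ dimensions. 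The bounded-frequency lemma then gives $f = O(\log^{2d} m \cdot \log W)$, and the covering lemma gives that $\widehat{\OPT} = O(\log^{2d-1} m \cdot \log W) \cdot \OPT$ (where the extra $\log W$ comes from the weight rounding as in the proof of Theorem~\ref{thm:WtDynSetCov}).

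Finally I would run the $O(f)$-approximation dynamic set cover algorithm of Bhattacharya et al.~\cite{bhattacharya2021dynamic} on the set system $(P', \hat{\S})$. Combining the frequency bound with the covering bound yields an approximation ratio of $O(\log^{2d} m) \cdot O(\log^{2d-1} m) \cdot O(\log W) = O(\log^{4d-1} m \cdot \log W)$. The update time follows from the fact that every inserted/deleted point lies in $O(\log^{2d} m \cdot \log W)$ sets of $\hat{\S}$, which can be located by traversing the $O(\log W)$ extended quad-trees in $O(\log^{2d} m \cdot \log W)$ time, plus the $O(f \log^2 (Wn))$ update cost of the black-box algorithm, giving a total of $O(\log^{2d} m \cdot \log^3(Wm))$ (using that hypercubes have bounded dual VC-dimension so $\log n = O(\log m)$).

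There is essentially no new obstacle: the only thing to check is that replacing ``hypercubes obtained by reduction from rectangles'' with ``hypercubes given directly'' does not affect any step. This holds because the reduction in Section~\ref{sec:set-cover-hyperrectangles} is never used beyond producing the canonical form, and the constant factor $c$ in the coordinate bound is absorbed into the constants hidden by the $O(\cdot)$ notation in the depth of $\mT$ and in all subsequent estimates.
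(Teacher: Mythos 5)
Your proposal is essentially correct and matches the paper's intended (unwritten) argument: the corollary is immediate because the proof of Theorem~\ref{thm:WtDynSetCov} first reduces $d$-dimensional hyperrectangles to $2d$-dimensional hypercubes with integral corners in $[0,4m]^{2d}$ and then works exclusively with that hypercube instance, so if the input is already of this form one simply skips the reduction. You identify this correctly, and the rest of your reasoning (grid side $\Theta(m)$ so quad-tree depth $O(\log m)$; $O(\log W)$ weight classes; plug into the Bhattacharya et al.\ black box; $\log n = O(\log m)$ by bounded dual VC-dimension) matches the paper's arguments verbatim.

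One small bookkeeping slip in your middle paragraph: you claim the covering lemma gives $\widehat{\OPT} = O(\log^{2d-1}m\cdot\log W)\cdot\OPT$, attributing an extra $\log W$ factor to the covering bound. This is not what the paper does, and is inconsistent with your own final product $O(\log^{2d}m)\cdot O(\log^{2d-1}m)\cdot O(\log W)$, which contains only a single $\log W$ factor. In the paper's accounting, weight rounding costs only a factor of $2$ on $\widehat{\OPT}$ (each $S\in\OPT$ belongs to a single weight class, and is covered by $O(\log^{2d-1}m)$ sets from that class's extended quad-tree, all of the same rounded weight as $S$), so the covering bound remains $\widehat{\OPT}=O(\log^{2d-1}m)\cdot\OPT$; the $\log W$ enters solely through the frequency $f=O(\log^{2d}m\cdot\log W)$. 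If both the covering ratio and the frequency picked up a $\log W$, the approximation would be $O(\log^{4d-1}m\cdot\log^2 W)$, not the claimed bound. Since your final equation uses the correct single-$\log W$ count, the conclusion is fine, but the intermediate claim should be corrected.
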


\section{Dynamic hitting set for $d$-dimensional hyperrectangles}
\label{sec:hit-set-hyperrectangles}
In this section we present a dynamic algorithm for hitting set for 
$d$-dimensional hyperrectangles. 
We will reduce the problem to an instance of dynamic set cover in $2d$-dimensional space and use the algorithm designed in the previous 
section (Theorem~\ref{Cor:WtDynSetCov_cube}).
Recall that $P$ is the set of 
points and $\S$ is the set of hyperrectangles.
Assume that all
the points of $P$ and the hyperrectangles in $\S$ lie in the box $[0, N]^d$. For all $p\in P$, we transform 
$p=(p_1,\ldots, p_d)$ to a $2d$-dimensional hypercube  $\S(p)$ 
of side length $N$ and 
``lower-left'' corner $p'=(-p_1,\ldots,-p_d, p_1,\ldots,p_d)$. 
Let $\S(P)$ be the  transformed hypercubes.
Next, we transform each hyperrectangle, say  $S \in \S$, with ``lower-left'' corner 
$a=(a_1,\ldots,a_d)$ and ``top-right'' corner $b=(b_1,\ldots,b_d)$ into a 
$2d$-dimensional point $P(S)=(-a_1,\ldots,-a_d,b_1,\ldots,b_d)$. 
Let $P(\S)$ be the  transformed points.
See Figure~\ref{fig:hit-duality}.

\begin{figure}[h]
 \begin{center}
      \includegraphics[scale=1]{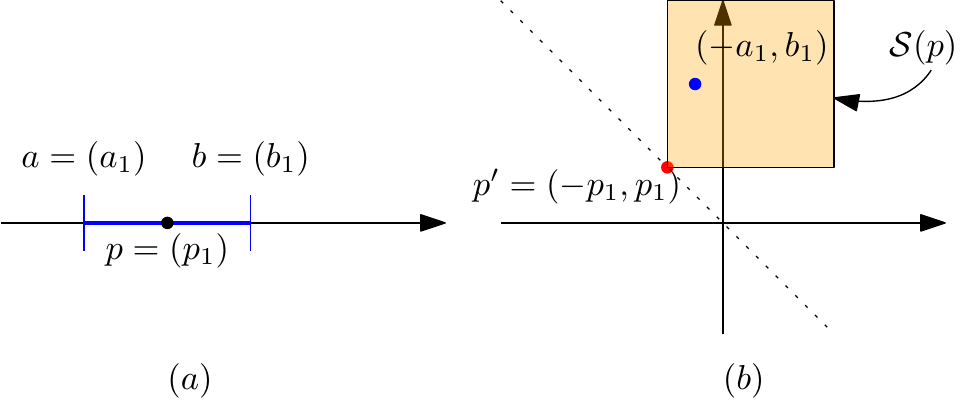}
 \end{center}
 \caption{(a) A point $p$ in 1-D lying inside an interval $S=[a_1,b_1]$, and 
 (b) the transformation of $p$ into a square $\S(p)$, and transformation of $S$ into a 
 point $(-a_1,b_1)$ in 2-D.}
 \label{fig:hit-duality}
\end{figure}

\begin{lemma} \label{lem:duality}
In $\IR^d$ a point $p$ lies inside a hyperrectangle $S$, if and only if, 
in $\IR^{2d}$ the hypercube $\S(p)$ contains the point $P(S)$.
\end{lemma}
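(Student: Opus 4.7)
The plan is a direct coordinate-by-coordinate verification that decomposes the two containment conditions into $2d$ scalar inequalities each, and then shows these two systems of inequalities are equivalent under the assumption that $P\cup(\bigcup_{S\in\S} S)\subseteq [0,N]^d$.

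First, I would unpack the left side: by definition of a hyperrectangle with lower-left corner $a=(a_1,\ldots,a_d)$ and upper-right corner $b=(b_1,\ldots,b_d)$, the condition $p\in S$ is equivalent to the $d$ two-sided inequalities $a_i\leq p_i\leq b_i$ for all $i\in\{1,\ldots,d\}$. Next I would unpack the right side. The hypercube $\S(p)$ has lower-left corner $(-p_1,\ldots,-p_d,p_1,\ldots,p_d)$ and side length $N$, so its upper-right corner is $(-p_1+N,\ldots,-p_d+N,p_1+N,\ldots,p_d+N)$. Hence $P(S)=(-a_1,\ldots,-a_d,b_1,\ldots,b_d)\in \S(p)$ is equivalent to the $2d$ scalar conditions
\[
-p_i\leq -a_i\leq -p_i+N \quad\text{and}\quad p_i\leq b_i\leq p_i+N \quad\text{for all } i\in\{1,\ldots,d\},
\]
which rearrange to $p_i-N\leq a_i\leq p_i$ and $p_i\leq b_i\leq p_i+N$.

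For the forward implication, I would assume $p\in S$, giving $a_i\leq p_i\leq b_i$ coordinate-wise. The inequalities $a_i\leq p_i$ and $p_i\leq b_i$ are then immediate. The remaining two inequalities $p_i-N\leq a_i$ and $b_i\leq p_i+N$ follow from the bounding-box assumption: since $a_i\geq 0$ and $p_i\leq N$, we have $p_i-N\leq 0\leq a_i$; symmetrically, since $b_i\leq N$ and $p_i\geq 0$, we have $b_i\leq N\leq p_i+N$. For the backward implication, I would assume $P(S)\in \S(p)$ and simply extract the ``middle'' inequalities $a_i\leq p_i$ and $p_i\leq b_i$ from the scalar conditions above to recover $p\in S$ directly; here the bounding-box assumption is not even needed.

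I do not expect any real obstacle; the only subtle point is that the equivalence genuinely depends on the common bounding box $[0,N]^d$, because the hypercubes $\S(p)$ have a fixed side length $N$ rather than being full halfspaces. Without this assumption, only one direction of the equivalence would hold. I would therefore make the use of $a_i,b_i,p_i\in[0,N]$ explicit at the step where the two ``outer'' inequalities are verified.
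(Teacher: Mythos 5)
Your proof is correct and takes essentially the same approach as the paper: a direct coordinate-by-coordinate verification that the $4d$ scalar inequalities defining $P(S)\in\S(p)$ are equivalent to the $2d$ inequalities defining $p\in S$, with the two ``outer'' inequalities guaranteed by the common bounding box $[0,N]^d$. The only cosmetic differences are that the paper phrases the containment via a ``dominates'' relation and argues the backward direction by contrapositive, while you argue it directly by extracting the middle inequalities.
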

\begin{proof}
 We define a point $q(q_1,\ldots,q_{2d})$ 
to {\em dominate} another point $q'(q'_1,\ldots,q'_{2d})$ if and only if 
$q_i \geq q'_i$, for all $1\leq i\leq 2d$. 
Assume that $p$ lies inside $S$. Then we have 
$p_i \geq a_i$ and $p_i \leq b_i$, for all $1\leq i\leq d$.
This implies that $P(S)=(-a_1,\ldots,-a_d,b_1,\ldots, b_d)$ dominates 
the point $p'(-p_1,\ldots,-p_d, p_1,\ldots,p_d)$, which is the lower-left 
corner of $\S(p)$.

The coordinates of the top-right corner of $\S(p)$ is 
$(-p_1+N,\ldots,-p_d+N, p_1+N,\ldots,p_d+N)$.
For all $1\leq i \leq d$, since $N\geq p_i-a_i$, it implies that 
$-p_i + N \geq -a_i$. For all $1\leq i \leq d$, since 
$p_i + N \geq N \geq b_i$, it implies that $ p_i + N \geq b_i$.
This finally implies that 
that the top-right corner of $\S(p)$ dominates $P(S)$. 
Therefore, $\S(p)$ contains $P(S)$.

Assume that $p$ lies outside $S$. Then there exists a 
dimension $i$ such that $p_i < a_i$ or $p_i > b_i$. 
If $p_i < a_i$, then $-p_i > -a_i$, which implies that 
$P(S)$ cannot dominate the lower-left corner of $\S(p)$.
If $p_i > b_i$, again $P(S)$ cannot dominate the 
lower-left corner of $\S(p)$. This implies that 
$P(S)$ lies outside $\S(p)$.
\end{proof}

We will use the above reduction 
to transform the points in $P$
into  hypercubes in $\IR^{2d}$ and transform the hyperrectangles 
in $\S$ into points in $\IR^{2d}$. 
Therefore, the hitting set problem in 
$\IR^d$ on hyperrectangles has been 
reduced to the set cover problem in $\IR^{2d}$
on hypercubes. And with a similar rank-space reduction as mentioned in the previous section, all the points as well as the corners of the hypercubes in this instance have integral coordinates.
Then, the set cover instance is answered using ~\Cref{Cor:WtDynSetCov_cube}. The correctness follows from \Cref{lem:duality}. Noting that for $2d$-dimensional hypercubes, $O(\log n)=O(\log m)$, the performance of the algorithm is summarized 
below.

\begin{theorem}\label{thm:WtDynHitSet}
After performing a pre-processing step which takes 
$O(n\log^{2d}n)$ time, there is an algorithm for 
hitting set for $d$-dimensional hyperrectangles 
with an approximation factor of $O(\log^{4d-1}n)$ and 
an update time of $O(\log^{2d+2}n)$.
In the weighted setting, the approximation factor 
is $O(\log^{4d-1}n\cdot\log W)$ 
and the update time is $O(\log^{2d}n\log^3(Wn))$.
\end{theorem}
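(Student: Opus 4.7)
The plan is to complete the reduction just set up (points to hypercubes, hyperrectangles to points, via Lemma~\ref{lem:duality}) and then feed the resulting instance directly into the dynamic set cover machinery from Corollary~\ref{Cor:WtDynSetCov_cube}. The key observation is that the dynamic model lines up correctly after duality: in the hitting set problem the point set $P$ is given offline and the hyperrectangles in $\mathcal{S}$ arrive and depart; after the transformation, $\mathcal{S}(P)$ (the hypercubes obtained from $P$) is offline, while $P(\mathcal{S})$ (the $2d$-dimensional points obtained from the hyperrectangles) is exactly the sequence of insertions and deletions. This is precisely the input shape required by the dynamic set cover algorithm of the previous section.

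Concretely, I would first apply a standard rank-space reduction on $P$ so that the offline hypercubes in $\mathcal{S}(P)$ all have integral corners inside $[0, cn]^{2d}$ for some constant $c$; since $|\mathcal{S}(P)| = n$, this makes Corollary~\ref{Cor:WtDynSetCov_cube} applicable with $m := n$. The pre-processing consists exactly of building the extended quad-tree on $\mathcal{S}(P)$ and the auxiliary bounded-frequency set system $\hat{\mathcal{S}}$, which by the construction of the previous section takes $O(n\log^{2d} n)$ time. For each update in the hitting set instance (an inserted or deleted hyperrectangle $S$) I would compute $P(S)$ in $O(d)$ time and forward it as a point insertion/deletion to the dynamic set cover algorithm; by Lemma~\ref{lem:duality}, a set of hypercubes in $\mathcal{S}(P)$ covers the currently present points $P(\mathcal{S})$ if and only if the corresponding subset of $P$ hits all currently present hyperrectangles of $\mathcal{S}$. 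Thus the output of the dynamic set cover algorithm, translated back through the duality, is a feasible hitting set of the same size.

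For the quantitative bounds, I would plug $m := n$ into Corollary~\ref{Cor:WtDynSetCov_cube}. In the unweighted case this gives an approximation factor of $O(\log^{4d-1} n)$ and update time $O(\log^{2d+2} n)$, exactly as claimed. For the weighted variant, where each original point $p \in P$ has a weight $w_p \in [1,W]$, that same weight is assigned to the corresponding hypercube $\mathcal{S}(p)$ in the transformed instance; Lemma~\ref{lem:duality} implies the weights of optimal solutions in the two instances are equal, and the weighted version of the corollary then yields the approximation $O(\log^{4d-1} n \log W)$ and update time $O(\log^{2d} n \log^3(Wn))$.

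The only subtlety (and the part I would be most careful about) is verifying that the update model really matches: the set cover algorithm of the previous section requires the set family to be fully specified in advance and only handles point updates, which is exactly what we have after duality. A minor second point is to confirm that the rank-space reduction can be performed in the stated pre-processing budget and that the $O(d)$-factor hidden by the big-$O$-notation is independent of the updates. Both are routine, so no essential new idea beyond the reduction is needed; the entire theorem follows from combining Lemma~\ref{lem:duality} with Corollary~\ref{Cor:WtDynSetCov_cube}.
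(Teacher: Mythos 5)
Your proposal is correct and follows essentially the same path as the paper: dualize via Lemma~\ref{lem:duality} so that the offline points become offline hypercubes and the dynamic hyperrectangles become dynamic points, rank-space-reduce to get integral corners in $[0,cn]^{2d}$, and then invoke Corollary~\ref{Cor:WtDynSetCov_cube} with $m:=n$. The one thing you leave implicit is why the $\log^3(Wm)$ in the corollary becomes $\log^3(Wn)$ rather than $\log^3(Wm')$ with $m'$ the number of dynamic points (original hyperrectangles); this is absorbed because the dual VC dimension of $2d$-dimensional hypercubes is $O(d)$, so the number of combinatorially distinct points is $n^{O(d)}$, giving $\log m' = O(\log n)$ — the same fact the paper states as ``for $2d$-dimensional hypercubes, $O(\log n)=O(\log m)$.'' Since that bound is already built into the corollary as stated, your substitution $m:=n$ is valid and no further argument is needed.
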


\section{Future work}

In the first part of this work, we have studied online geometric set cover 
and hitting set for 2-D squares. This opens up an interesting line of work 
for the future. We state a few open problems: 
\begin{enumerate}
    \item As a natural extension of 2-D squares, is it possible to 
    design a $o(\log n \log m)$-competitive algorithm for 3-D cubes? 
    The techniques used in this paper for 2-D squares do not seem to 
    extend to 3-D cubes. Another setting of interest here is when the geometric objects are 2-D disks. Can we obtain a $o(\log n \log m)$-competitive online set cover algorithm for them?
    \item Design an online algorithm for set cover (resp., hitting set) for rectangles with competitive ratio $o(\log^2 n)$ or show an almost matching lower bound of $\Omega\left(\frac{\log^2 n}{\log\log n}\right)$ (which holds for the general case of online set cover~\cite{alon2003online})?
    \item As a generalization of the above question, is it possible to obtain online algorithms for set cover and hitting set with competitive ratio $o(\log^2 n)$ for set systems with ``constant'' VC-dimension.
    \item For the weighted case of online set cover, even in unit squares, can we obtain algorithms with competitive ratio $o(\log n\log m)$?
    \item Design an online algorithm for hitting set for squares with competitive ratio $O(\log n)$, and hence, improving our algorithm's competitive ratio of $O(\log N)$ (where the corners of the squares are integral and contained in $[0,N)^2$).
\end{enumerate}

In the second part of our work, we studied dynamic geometric set cover 
and hitting set for $d$-dimensional hyperrectangles. This line of work 
nicely brings together data structures, computational geometry, and approximation 
algorithms. We finish with a few open problems in the dynamic setting:
\begin{enumerate}
    \item Improve the approximation factor for dynamic set cover for the case of  2-D rectangles. Specifically, is it possible to obtain an $O(\log n)$ approximation with polylogarithmic update time? In this setting the 
    rectangles are fixed, but the points are dynamic.
    \item For weighted dynamic set cover for the case of 2-D rectangles, is it possible to obtain approximation and update bounds independent of $W$ (where $W$ is the ratio of the weight of the highest weight rectangle to the lowest weight rectangle in the input)?
    \item For the (fully) dynamic case of set cover studied in~\cite{agarwal2020dynamic, chan2021more, chan2022dynamic}, can we obtain algorithms with sublinear update time and polylogarithmic approximation when the sets are rectangles (as {originally asked in~\cite{chan2022dynamic}})?
\end{enumerate}

\bibliographystyle{plainurl}
\bibliography{ref_dyn}

\appendix

\section{Online algorithms for interval set cover}
\label{sec:interval_upper}
In this section, we present a tight 2-competitive algorithm for the case of interval set cover.

In the algorithm, we start with an empty set cover. 
In each iteration, when a new point $p$ arrives, if it is covered then we do nothing.
Otherwise, we select among the intervals covering  $p$, the one with the rightmost right end-point and the one with the leftmost left end-point.

The correctness of the algorithm follows trivially, since for every new uncovered point we pick an
interval covering it. We do not remove intervals from our solution at any later steps in the algorithm, and
hence, all points are covered when the algorithm terminates.

\begin{restatable}{theorem}{onlineintervalupper}
\label{thm:onlineintervalupper}
    There exists a 2-competitive algorithm for the  online interval set cover problem.
\end{restatable}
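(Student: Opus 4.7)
The plan is to charge each pair of intervals selected by the algorithm to a distinct interval of $\OPT$. Fix any optimal solution $\OPT$, and consider the rounds in which a newly arrived point $p$ is uncovered. In such a round the algorithm adds two intervals, $L$ (with leftmost left endpoint among intervals covering $p$) and $R$ (with rightmost right endpoint among intervals covering $p$); call these the intervals \emph{triggered by $p$}. I want to charge this pair $(L,R)$ to some $I^* \in \OPT$ that covers $p$, in such a way that no $I^* \in \OPT$ ever gets charged twice.

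The key geometric observation is that any interval $I^* \in \OPT$ with $p \in I^*$ satisfies $I^* \subseteq L \cup R$. Indeed, among the intervals containing $p$, the interval $L$ has the leftmost left endpoint, so the left endpoint of $I^*$ lies in $[L.\text{left}, p] \subseteq L$; symmetrically the right endpoint of $I^*$ lies in $[p, R.\text{right}] \subseteq R$. Since $L$ and $R$ overlap at $p$, their union is the interval $[L.\text{left}, R.\text{right}]$, which contains $I^*$ entirely.

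Now do the charging: since $p$ is a point to be covered and $\OPT$ is a feasible cover, choose any $I^* \in \OPT$ with $p \in I^*$ and charge $(L,R)$ to $I^*$. Suppose toward contradiction that some $I^* \in \OPT$ gets charged by two distinct rounds, triggered by uncovered points $p$ and $p'$ with $p$ arriving first. Then $(L,R)$ were added to the solution when $p$ arrived, and by the observation $I^* \subseteq L \cup R$. But $p' \in I^* \subseteq L \cup R$, so $p'$ was already covered when it arrived, contradicting the assumption that it triggered a round. Hence every $I^* \in \OPT$ is charged at most once, which gives $|\ALG| \le 2\,|\OPT|$.

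The only potential subtlety is handling the first case (when a point $p$ arrives and no interval of $\S$ covers it), but feasibility of the instance rules this out because $\OPT$ covers $p$. Also note the algorithm never picks an interval in a round where $p$ is already covered, so the only intervals ever added are those charged by the scheme above. Thus the algorithm is $2$-competitive.
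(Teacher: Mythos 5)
Your proposal is correct and follows essentially the same approach as the paper: charge each pair $(L,R)$ picked in a triggering round to an optimal interval containing $p$, and use $I^*\subseteq L\cup R$ to show no optimal interval is charged twice. Your write-up is a more carefully spelled-out version of the paper's (terser) argument, making the injectivity of the charging explicit.
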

\begin{proof}
    Consider an interval $I$ in the optimum solution $\OPT$. When the first uncovered point covered by
    it, arrives in the input, our algorithm picks two intervals and  ensures that these two intervals cover all of  $I$. Hence, for each interval in $\OPT$,
    we pick at most 2 intervals in our solution, giving us a 2-competitive solution.
\end{proof}

\begin{restatable}{theorem}{onlineintervallower}
\label{thm:onlineintervallower}
    There is an instance of the set cover problem on intervals such that any online algorithm
    (without recourse) can at best be 2-competitive on this instance.
\end{restatable}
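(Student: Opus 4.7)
\medskip

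\noindent\textbf{Proof plan for \Cref{thm:onlineintervallower}.} The plan is to exhibit a tiny offline instance together with a two-step adaptive adversary that forces any deterministic online algorithm into a solution of size at least twice the optimum. Concretely, I would fix the offline family $\S := \{I_1, I_2\}$ with $I_1 = [0,2]$ and $I_2 = [1,3]$, and let the adversary begin by presenting the point $p_1 = 1.5$, which is contained in both $I_1$ and $I_2$. Because the algorithm is online and must maintain a feasible cover after each arrival (without recourse, i.e., without removing previously chosen intervals), it is forced to commit to at least one of $I_1, I_2$ upon seeing $p_1$.

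The proof would then proceed by a short case analysis on the algorithm's response to $p_1$. First, if the algorithm picks both $I_1$ and $I_2$ in this round, the adversary stops immediately; since the point $\{p_1\}$ alone is covered by either $I_1$ or $I_2$ on its own, $\OPT = 1$ while $\ALG \ge 2$. Second, if the algorithm commits to $I_1$ only, the adversary presents $p_2 = 2.5$, which lies in $I_2$ but not in $I_1$; the algorithm has no choice but to add $I_2$, yielding $\ALG = 2$, whereas $I_2 = [1,3]$ alone covers both $p_1$ and $p_2$, so $\OPT = 1$. The case where the algorithm commits to $I_2$ only is symmetric, using the point $p_2' = 0.5$ that lies only in $I_1$.

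In every branch the resulting ratio $|\ALG|/|\OPT|$ is exactly $2$, which proves the claimed lower bound. The main (and essentially only) subtlety will be to justify the case split cleanly: the algorithm is free to add any subset of $\{I_1,I_2\}$ after $p_1$ (as long as $p_1$ becomes covered), and I must treat each feasible subset. This is straightforward since there are only three feasible subsets, and all three are handled above. Extending the argument to randomized algorithms via Yao's minimax principle (by letting the adversary choose uniformly between the two continuation points when the algorithm commits to only one interval) is an optional addendum, but is not needed for the statement of \Cref{thm:onlineintervallower} as written, which concerns deterministic online algorithms.
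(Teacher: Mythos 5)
Your proposal is correct and follows essentially the same adversarial strategy as the paper's proof (present a point covered by exactly two candidate intervals, observe the algorithm's commitment, then present a follow-up point lying outside the committed interval but inside the one that covers both). You use a more economical instance with $2$ intervals, whereas the paper uses $4$ intervals $[0,1],[1,2],[2,3],[3,4]$ with first point $p_1=2$; the two extra intervals in the paper's construction are not essential, so your version is a slightly cleaner realization of the same idea.
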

\begin{proof}
    Consider the given set of intervals to be $A:=[0,1],B:=[1,2],C:=[2,3],D:=[3,4]$. The first point to
    arrive is $p_1:=2$. 
    If the algorithm picks two or more sets, then we are done as $\OPT$ is of size 1. 
    Otherwise, to cover $p$, an algorithm can pick either interval  $B$ or  $C$.  In the
    former case, the second point should be  $p_2:=3$; and in the latter  $p_3:=1$.  We see that in both cases $\OPT$ is of size one, but an online algorithm is forced to pick two intervals.
\end{proof}

\end{document}